\definecolor{myblue}{RGB}{95, 141, 211}
\newtheorem{theorem}{Theorem}
\newtheorem{proposition}[theorem]{Proposition}
\newtheorem{lemma}[theorem]{Lemma}
\newtheorem{corollary}[theorem]{Corollary}
\newtheorem{result}{Result}
\renewcommand*{\vec}[1]{\mathbf{#1}}
\newcommand*{\bra}[1]{\ensuremath{\langle #1 \vert}}
\newcommand*{\ket}[1]{\ensuremath{\vert #1 \rangle}}
\newcommand*{\di}{\mathrm{d}}
\newcommand*{\comm}[2]{\left[ #1,#2 \right]}
\newcommand*{\tr}{\mathrm{tr}}
\newcommand*{\supp}{\mathrm{supp}}
\newcommand*{\im}{\mathrm{Im}}
\newcommand*{\norm}[1]{\left\| #1 \right\|}
\newcommand*{\hil}{\mathcal{H}}
\renewcommand*{\eqref}[1]{Eq.~(\ref{#1})}
\newcommand*{\figref}[1]{Fig.~\ref{#1}}
\newcommand*{\Figref}[1]{Figure~\ref{#1}}
\newcommand{\clus}[1]{{\mathbf{#1}}}
\newcommand{\lc}{ \nu  }
\begin{document}

\title{Classical simulation of short-time quantum dynamics}
\date{\today}

\author{Dominik S. Wild}
\email{dominik.wild@mpq.mpg.de}
\affiliation{Max-Planck-Institut für Quantenoptik, Hans-Kopfermann-Straße 1, D-85748 Garching, Germany}

\author{\'Alvaro M. Alhambra}
\email{alvaro.alhambra@csic.es}
\affiliation{Max-Planck-Institut für Quantenoptik, Hans-Kopfermann-Straße 1, D-85748 Garching, Germany}
\affiliation{
Instituto de F\'isica Te\'orica UAM/CSIC, C/ Nicol\'as Cabrera 13-15, Cantoblanco, 28049 Madrid, Spain}

\begin{abstract} 
Recent progress in the development of quantum technologies has enabled the direct investigation of dynamics of increasingly complex quantum many-body systems. This motivates the study of the complexity of classical algorithms for this problem in order to benchmark quantum simulators and to delineate the regime of quantum advantage. Here we present classical algorithms for approximating the dynamics of local observables and nonlocal quantities such as the Loschmidt echo, where the evolution is governed by a local Hamiltonian. For short times, their computational cost scales polynomially with the system size and the inverse of the approximation error. In the case of local observables, the proposed algorithm has a better dependence on the approximation error than algorithms based on the Lieb--Robinson bound. Our results use cluster expansion techniques adapted to the dynamical setting, for which we give a novel proof of their convergence. This has important physical consequences besides our efficient algorithms. In particular, we establish a novel quantum speed limit, a bound on dynamical phase transitions, and a concentration bound for product states evolved for short times.


\end{abstract}

\maketitle

\section{Introduction}

The study ofthe dynamics of quantum many-body models is a highly active area of research in quantum information science, both from the perspective of physics and of computation. Probing dynamics provides access to a wealth of physical phenomena and can enable the solution of hard computational problems. Existing quantum simulators are already allowing us to explore quantum dynamics of high complexity. To assess the potential for quantum speed-up, it is important to understand the reach of classical methods for these tasks. Unless all quantum computations can be classically simulated, i.e.~\textsf{BPP} = \textsf{BQP}, classical algorithms will be unable to approximate quantum dynamics in an arbitrary setting. Nevertheless, there exist restricted regimes in which efficient classical simulation is possible. An important special case is evolution for short times, during which the quantum information will not spread much. Tensor-network methods illustrate this point, as they provide provably efficient means of simulating unitary dynamics in one spatial dimension for short times \cite{Osborne2006,KuwaharaImproved}.

In this work, we characterize the computational complexity of short-time dynamics under a local Hamiltonian more generally. Locality in this context means that the Hamiltonian can be written as a sum of operators supported on small subsystems. We do not require geometric locality and we do not restrict the Hamiltonian to a finite-dimensional lattice. Instead, we only impose that every term in the Hamiltonian overlaps with a constant number of other terms. These conditions are satisfied by a wide class of physically relevant Hamiltonians, including parent Hamiltonians of quantum LDPC error-correcting codes~\cite{breuckmann2021}. Dynamics governed by Hamiltonians of this type are amenable to \emph{cluster expansions}, which have long been used in both classical and quantum statistical mechanics of lattice models~\cite{RuelleBook,friedli2017,Malyshev_1980,Park1982,KoteckyPreiss}, leading to results such as the uniqueness of Gibbs states~\cite{Dobrushin1987,dobrushin1996estimates}, efficient approximation schemes for partition functions~\cite{mann2021,harrow2020}, the decay of correlations~\cite{Ueltschi2004,kliesch2014,Frohlich_2015}, and concentration bounds~\cite{Neto_n__2004,kuwahara2020_gaussian}. Despite their long history, cluster expansions have typically been applied to equilibrium properties, while dynamics have been rarely considered.

The paper is structured as follows. In the remainder of the introduction, we summarize the main results and give an overview of their implications for the complexity of quantum dynamics. In Sec.~\ref{sec:setup}, we define the cluster notation used throughout. Sec.~\ref{sec:obs} discusses the results and algorithm for local observables. In Sec. \ref{sec:loschmidt}, we describe corresponding results for the Loschmidt echo. Physical consequences of these results are discussed in Sec. \ref{sec:implications}. We conclude in Sec.~\ref{sec:conclusion} with further remarks and open questions. The main text provides an overview of all the proof techniques, while technical details that are less crucial to the understanding are placed in appendices.

\setlength{\tabcolsep}{8pt}

\begin{table*}[t]\label{ta:comp}
    \begin{tabular}{lllll}
        \toprule
        & $t < t^*, t^*_L$ & $t = \mathcal{O}(1)$ & $t = \mathcal{O}(\text{polylog}(n))$ & $t = \mathcal{O}(\text{poly}(n))$ \\
        \midrule
        $\langle A(t) \rangle$  & {\color{myblue} \textsf{P}} &  {\color{myblue} \textsf{P}} & ? & \textsf{BQP}-complete \cite{janzing2005} \\
        $\log \langle e^{-iHt} \rangle$  & {\color{myblue} \textsf{P}} & \textsf{\#P}-hard \cite{galanis2021,galanis2022} & \textsf{\#P}-hard &  \textsf{\#P}-hard \\
        $ \langle e^{-iHt} \rangle$  & {\color{myblue} \textsf{P}} & ? & ? &  \textsf{BQP}-complete \cite{de_las_cuevas2011} \\
        \bottomrule
    \end{tabular}
    \caption{The computational complexity of computing the quantities in the leftmost column with additive error $\varepsilon = 1/\text{poly}(n)$, where $n$ is the system size, for different times. All expectation values are with respect to product initial states. The constants $t^*$ and $t_L^*$ are independent of the system size but depend on the details of the problem. Entries highlighted in blue are results from this work. Question marks denote regimes where the computational complexity is unknown.}
\end{table*}

\subsection{Summary of results}

Our first main result concerns the dynamics of a few-body observable $A$ under a local Hamiltonian $H$.
\begin{result}\label{re:result1}
    (Informal version of Theorem~\ref{th:algoObs}) Given a local Hamiltonian $H$, a few-body operator $A$, and a product state $\rho$, there exists an algorithm that approximates $\tr( e^{iHt}A e^{-iHt}\rho)$ up to additive error $\varepsilon$ with run time of at most
    \begin{equation}
            \mathrm{poly} \left[ \left( \frac{1}{\varepsilon} e^{\pi |t| / t^*} \right)^{\exp({\pi |t| /  t^*})} \right],
        \end{equation}
    where $t^*$ is a positive constant.
\end{result}

The scaling with the time $t$ is rather unfavorable, but the computational cost is independent of system size. Moreover, for any constant value of $t$, the run time has a polynomial dependence on $1/\varepsilon$, which is an improvement over all previously known algorithms, such as those based on Lieb--Robinson bounds. The cluster expansion can thus be seen as an alternative approach to analyzing the effective locality and light-cone structure of many-body dynamics.

Our second result characterizes the complexity of computing the Loschmidt echo, $\tr(e^{-itH} \rho)$.
\begin{result}\label{re:resultLE}
    (Informal version of Theorem~\ref{th:computation}) Given a local Hamiltonian $H$, a product state $\rho$, and $\vert t\vert<t^*_L$, there exists an algorithm that approximates $\log \tr(e^{-iHt} \rho)$ up to additive error $\varepsilon$ with run time of at most
    \begin{equation}
        n \times \mathrm{poly}\left[ \left( \frac{n}{1 - |t|/t_L^*} \frac{1}{\varepsilon} \right)^{1/\log(t_L^*/|t|)} \right],
    \end{equation}
    where $t^*_L$ is a positive constant.
\end{result}

A key insight of this work is that when $\rho$ is a product state, the objects analyzed in Results~\ref{re:result1} and~\ref{re:resultLE} fit the framework of cluster expansions that are commonly applied to the partition function $\tr(e^{-\beta H})$~\cite{kuwahara2020_gaussian,mann2021,haah2021}. We give a novel proof of the convergence of these expansions based on the counting of trees. 

In both algorithms, the cluster expansion enables an efficient grouping of the terms of a Taylor series by clusters of subsystems. Crucially, we show that only connected clusters contribute. Because the number of connected clusters grows at most exponentially with the size of the cluster, we can establish the convergence of the cluster expansion at short times by bounding the magnitude of the individual terms. The computational cost of the approximation algorithms follows by estimating the cost of computing a truncated cluster expansion while controlling the truncation error. For local observables, we are able to extend the algorithm beyond the radius of convergence of the cluster expansion using analytic continuation. The doubly exponential dependence on the evolution time $t$ in Result~\ref{re:result1} is a direct consequence of the analytic continuation scheme.

The above results have several important physics implications. Result~\ref{re:resultLE} implies that dynamical phase transitions~\cite{heyl2018} cannot occur at times $t \leq t_L^*$ for local Hamiltonians and product initial states. In addition, it establishes a novel quantum speed limit~\cite{Deffner_2017} that is independent of system size, in stark contrast to previous results for general initial states~\cite{mandelstam1991,margolus1998}. A generalization of Result~\ref{re:resultLE} to the multi-Hamiltonian Loschmidt echo $\tr(\rho \prod_l e^{-itH^{(l)}})$, with $H^{(l)}$ local Hamiltonians, allows us to prove Gaussian concentration bounds of local observables on states evolved for a short time, a case not covered by previous results~\cite{Kuwahara_2016,Anshu_2016,AnshuConc2022}. More precisely, we show that the probability of measuring $H^{(2)}$ in the evolved product state $\rho(t)=e^{-itH^{(1)}} \rho e^{itH^{(1)}}$ away from the mean $\tr(\rho(t) H^{(2)} )$ by $\delta$ is suppressed by $e^{\mathcal{O}(-\delta^2 / n)}$.

\subsection{Complexity of dynamics} 

Our main results have nontrivial consequences for the computational complexity of short-time quantum dynamics, which are summarized in Table~\ref{ta:comp}. For observables, it is known that approximating $\langle A (t) \rangle$ with additive error $\varepsilon = 1/\text{poly}(n)$ up to times $t = \text{poly}(n)$ is \textsf{BQP}-complete. This follows from the fact that determining the state of a single qubit at the output of a circuit with $\text{poly}(n)$ gates is \textsf{BQP}-complete, combined with the existence of local Hamiltonians that simulate arbitrary quantum computations~\cite{janzing2005}. At the same time, Theorem~\ref{th:algoObs} shows that we can compute $\langle A (t) \rangle$ classically with a similar error with computational cost $\text{poly}(n)$ as long as $t =\mathcal{O}(1)$. This indicates a transition in the complexity of simulating local observables as the system evolves. The exact nature of this transition and the computational complexity of the intermediate regime $t \sim \text{polylog}(n)$, where the cluster expansion fails to be efficient, remains an open problem.

For the Loschmidt echo, there are two meaningful notions of approximation. The first one is with a small multiplicative error, which is equivalent to a small additive error in the logarithm. For this, Theorem \ref{th:computation} shows that there is an efficient approximation algorithm for times $ \vert t\vert <t^*_L$ with polynomial cost in both $n$ and $1/\varepsilon$. Unlike in Theorem~\ref{th:algoObs}, it is not possible to extend this result to arbitrary times. If we take $\rho \propto \mathbb{I}$, $L(t)$ becomes an imaginary-time partition function. Approximating this for $\vert t \vert =\mathcal{O}(1)$ even with an $\mathcal{O}(1)$ multiplicative error has been shown to be \textsf{\#P}-hard for 2-local, classical Ising models~\cite{galanis2022}. Hence, there is only a constant gap between the times accessible with our algorithm and the \textsf{\#P}-hard regime, where an efficient algorithm is unlikely to exist. The complexity of the analogous problem for real partition functions and the thermodynamic free energy has been recently considered in Ref.~\cite{Bravyi2022}.

We may also consider the weaker additive approximation to the Loschmidt echo. Since $\vert L(t) \vert \le 1$, Theorem \ref{th:computation} also implies that we can efficiently approximate the Loschmidt echo to an $\varepsilon$-additive error for $\vert t \vert < t^*_L$. Calculating the Loschmidt echo for circuits of polynomial size with additive error $\varepsilon = 1/\text{poly}(n)$ is \textsf{BQP}-complete~\cite{de_las_cuevas2011}. To the best of our knowledge, the intermediate regime has not been explored. 

\section{Setup}\label{sec:setup}
\subsection{Hamiltonian}

We consider a set of $n$ spins, $V$. Each spin $v \in V$ is associated with a local Hilbert space $\hil^v$ with $\dim \hil^v = d$. The total Hilbert space is formed by the tensor product space $\hil = \bigotimes_{v \in V} \hil^v$. We call a subset of spins $X \subseteq V$ a subsystem. For any linear operator $A$ on $\hil$, we denote its support by $X = \supp (A)$, i.e., $X$ is the smallest subsystem in $V$ on which $A$ acts nontrivially.

Next, we formally define the notion of local Hamiltonians. Given a set of subsystems $S$, we write a Hamiltonian $H$ as
\begin{equation}
    H = \sum_{X \in S} \lambda_X h_X,
\end{equation}
where each $\lambda_X$ is a real coefficient and $h_X$ is a Hermitian operator acting on the subsystem $X$ such that $\supp ( h_X ) = X$. The coefficients satisfy $|\lambda_X| \leq 1$ and are chosen such that $\Vert h_X \Vert = 1$, where $\Vert \cdot \Vert$ is the operator norm. A Hamiltonian is called $k$-local if it is a sum of terms that act on at most $k$ sites or, equivalently, $|X| \leq k$ for all $X \in S$. 

To characterize the connectivity of the Hamiltonian, we define the associated interaction graph $G$~\cite{haah2021}. Given a set of subsystems $S$, the interaction graph $G$ is a simple graph with vertex set $S$. There is an edge between two vertices $X$ and $Y$ if the respective subsystems overlap. We denote the maximum degree of the interaction graph by $\mathfrak{d}$. Throughout this work, we only consider $k$-local Hamiltonians for which, in addition, $\mathfrak{d}$ is independent of the system size $n$. Each local term in the Hamiltonian therefore only overlaps with a constant number of other terms, which includes many physically relevant cases such as Hamiltonians with finite-range interactions. We point out that the number of terms $|S|$ in these Hamiltonians increases at most linearly with the number of spins $n$.

\subsection{Clusters} \label{sec:clusters}
We define a cluster as a nonempty multiset of subsystems from $S$. Here, multiset refers to a set with possibly repeated elements but without ordering. We use bold-font letters $\vec{V}, \vec{W}, \ldots$ to denote clusters. We call the number of times a subsystem $X$ appears in a cluster $\vec{W}$ the multiplicity $\mu_\vec{W}(X)$. If $X$ is not contained in $\vec{W}$, then $\mu_\vec{W}(X) = 0$. The size $|\vec{W}| = \sum_{X \in S} \mu_\vec{W}(X)$ of a cluster is the number of subsystems that it contains, including their multiplicity. The set of all clusters of size $m$ is denoted by $\mathcal{C}_m$ and the set of all clusters by $\mathcal{C} = \bigcup_{m \geq 1} \mathcal{C}_m$.

\begin{figure}[t]
    \centering
    \includegraphics[width=\columnwidth]{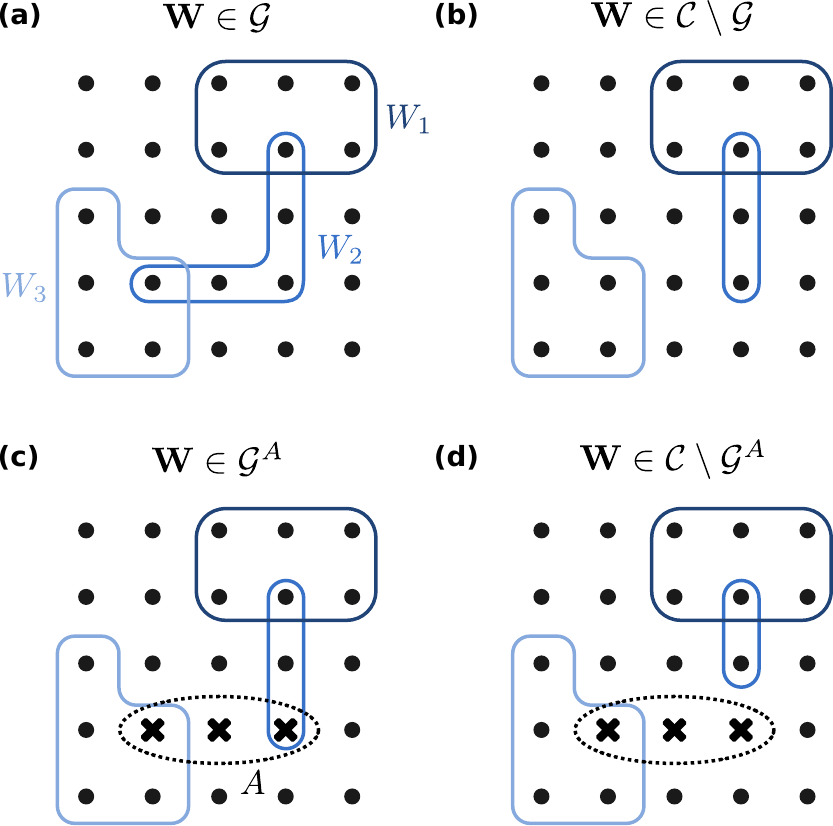}
    \caption{A cluster of three subsystems $\clus{W} = \{ W_1,W_2,W_3\}$ that is (a)~connected, (b)~disconnected, (c)~completely connected to $A$, and (d)~not completely connected to $A$. The black dots indicate individual spins and crosses highlight spins that form the support of $A$.}
    \label{fig:clusters}
\end{figure}

We associate with every cluster $\vec{W}$ a simple graph $G_\vec{W}$, the so-called cluster graph. The vertices of $G_\vec{W}$ correspond to the subsystems in $\vec{W}$, with repeated subsystems also appearing as repeated vertices. Two distinct vertices $X$ and $Y$ are connected by an edge if and only if the respective subsystems overlap, i.e., $X \cap Y \neq \emptyset$. We say that a cluster $\vec{W}$ is connected if and only if $G_\vec{W}$ is connected. We use the notation $\mathcal{G}_m$ for the set of connected clusters of size $m$, and $\mathcal{G} = \bigcup_{m \geq 1} \mathcal{G}_m$ for the set of all connected clusters. The following statement concerning the number of connected clusters is an essential ingredient of our algorithms.
\begin{lemma}[Proposition 3.6 of Haah \emph{et al.}~\cite{haah2021}]
    \label{le:clusters}
    Given a subsystem $X \in S$, the number of clusters in $\mathcal{G}_m$ that contain $X$ is bounded from above by $(e \mathfrak{d} )^m$. Moreover, there exists a deterministic classical algorithm with run time $\exp(\mathcal{O}(m \log \mathfrak{d}))$ that outputs a list of all such clusters.
\end{lemma}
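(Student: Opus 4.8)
The plan is to reduce the enumeration of connected clusters to a count of rooted trees and then apply a Cayley-type estimate, with the factor $e$ emerging as the exponential growth rate of the number of rooted trees. First I would fix an arbitrary total order on the subsystems in $S$ and associate to every connected cluster $\vec{W} \in \mathcal{G}_m$ containing $X$ a canonical rooted spanning tree $T(\vec{W})$ of its cluster graph $G_{\vec{W}}$: root it at a copy of $X$ and grow it by depth-first search, always exploring the least available subsystem first. Since repeated subsystems appear as repeated vertices, the multiset of vertex labels of $T(\vec{W})$ is exactly $\vec{W}$, so the map $\vec{W} \mapsto T(\vec{W})$ is injective. It therefore suffices to bound the number $N_m$ of rooted trees on $m$ vertices whose vertices carry labels in $S$, whose root is labeled $X$, and in which adjacent vertices carry labels that are either equal or adjacent in the interaction graph $G$ — these being the only label pairs that an edge of a cluster graph can join.

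To bound $N_m$ I would organize the count by the number of children of each vertex. A vertex labeled $Y$ can only have children whose labels lie in the closed neighborhood of $Y$ in $G$, a set of size at most $\mathfrak{d}+1$. Writing $c_i \ge 0$ for the number of children of the $i$-th vertex, with $\sum_i c_i = m-1$, and bounding the number of admissible unordered child-label choices of a vertex by $\mathfrak{d}^{c_i}/c_i!$, one is led to
\[
  N_m \;\le\; \mathfrak{d}^{\,m-1} \sum_{\substack{c_1,\dots,c_m \ge 0 \\ c_1 + \dots + c_m = m-1}} \prod_{i=1}^m \frac{1}{c_i!} \;=\; \mathfrak{d}^{\,m-1}\,\frac{m^{m-1}}{(m-1)!},
\]
where the last equality is the multinomial identity. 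The elementary bound $m^m/m! \le e^m$ gives $m^{m-1}/(m-1)! \le e^m$, and hence $N_m \le (e\mathfrak{d})^m$. This is precisely where the constant $e$ enters, since $m^{m-1}/(m-1)! \sim e^m/(\sqrt{2\pi}\, m^{3/2})$ is the growth rate of the number of rooted trees.

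For the algorithmic statement I would enumerate by incremental growth: keep the list of connected clusters of size $s$ containing $X$, and produce those of size $s+1$ by appending a subsystem drawn from the closed neighborhood of the current cluster, deduplicating through a canonical encoding. The branching per cluster is at most $(\mathfrak{d}+1)s$, and by the counting bound the total number of clusters visited up to size $m$ is $\sum_{s \le m} (e\mathfrak{d})^s = \exp(\mathcal{O}(m \log \mathfrak{d}))$, so the overall run time is $\exp(\mathcal{O}(m \log \mathfrak{d}))$ up to polynomial overhead from the canonical-form checks.

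I expect the reduction to trees and the enumeration to be routine; the main obstacle is the tree count, specifically obtaining the sharp base $e\mathfrak{d}$ rather than a cruder constant such as $4\mathfrak{d}$ that a plain ordered-tree (Catalan) bound would yield. This forces one to count \emph{unordered} rooted trees and to treat repeated subsystems carefully — self-overlaps and equal sibling labels, as well as the closed-neighborhood $+1$ — so that the $1/c_i!$ savings genuinely survive and the multinomial identity applies.
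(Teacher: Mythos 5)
Your reduction to rooted trees and your enumeration algorithm are fine in outline, but the counting step at the heart of the proof is wrong, and it fails in a way that cannot be repaired within your strategy. You bound the number of admissible \emph{unordered} child-label choices of a vertex with $c_i$ children by $\mathfrak{d}^{c_i}/c_i!$. Since siblings in a cluster graph may carry \emph{repeated} labels (copies of the same subsystem overlap, which is also why the closed neighborhood has size $\mathfrak{d}+1$), the correct count of label multisets is $\binom{\mathfrak{d}+c_i}{c_i}$, which exceeds your bound as soon as repetitions occur: for $\mathfrak{d}=2$, $c_i=2$ it is $6$ versus your claimed $2$. The $1/c_i!$ saving is legitimate only when the children are forced to be pairwise distinct, where $\binom{\mathfrak{d}}{c_i} \leq \mathfrak{d}^{c_i}/c_i!$ holds — i.e., for connected \emph{subsets}, not for multiset clusters. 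You flag this issue yourself in your final paragraph ("equal sibling labels... so that the $1/c_i!$ savings genuinely survive") but assert rather than prove that it survives; it does not.

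The failure is structural, not cosmetic: your intermediate quantity $N_m$ — the number of all unordered rooted trees with labels satisfying the adjacent-or-equal constraint — genuinely exceeds $(e\mathfrak{d})^m$. Take all $m$ labels equal to $X$ (allowed, since equal copies overlap): already for $\mathfrak{d}=1$ these trees number at least the unlabeled rooted trees, which grow like $(2.955\ldots)^m$, beating $(e\mathfrak{d})^m = e^m$; with a clique $K_{\mathfrak{d}+1}$ in the interaction graph, P\'olya-type counts of colored trees push the growth to roughly $(e(\mathfrak{d}+1))^m$ or beyond. Meanwhile the clusters producing these trees are few (there is exactly \emph{one} all-$X$ cluster of each size), so your injection is sound but bounding its image by all of $N_m$ discards exactly the canonicity that makes the lemma true. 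A correct argument must either restrict the count to canonical (e.g., DFS-minimal) trees, or decompose each cluster into its \emph{support} — a connected vertex subset, where the distinct-sibling count $\binom{\mathfrak{d}}{c} \leq \mathfrak{d}^c/c!$ and the multinomial identity apply legitimately — together with a separate multiplicity assignment. Note also that the paper does not reprove this lemma: it imports both the bound and the enumeration procedure from Haah \emph{et al.} (their Proposition 3.6 and Algorithm 1), so the relevant comparison is with their support-plus-multiplicity style of argument, which avoids precisely the multiset-sibling trap your proof falls into. Your algorithmic paragraph (incremental growth with canonical-form deduplication) is correct conditional on the counting bound and matches the cited procedure in spirit.
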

\noindent The classical algorithm is given as Algorithm 1 in Section 3.4 of Haah \emph{et al.}~\cite{haah2021}.

The union $\vec{W} = \vec{V}_1 \cup \vec{V}_2$ of two clusters $\vec{V}_1$ and $\vec{V}_2$ is defined as the union of the multisets, adding all multiplicities such that $\mu_{\vec{W}}(X) = \mu_{\vec{V}_1}(X) + \mu_{\vec{V}_2}(X)$ for all $X \in S$. Another set of clusters of special interest is formed by the clusters connected to the support of a few-body operator $A$. We say that a cluster $\vec{W}$ is completely connected to $A$ if and only if the cluster graph $G_{\vec{W} \cup \{ \supp(A) \}}$ is connected, where we assume for simplicity that $A$ acts on a subsystem contained in the Hamiltonian such that $\supp(A) \in S$. We denote the set of such clusters of size $m$ by $\mathcal{G}_m^A$, and $\mathcal{G}^A = \bigcup_{m \geq 1} \mathcal{G}_m^A$. The different sets of clusters are illustrated in \figref{fig:clusters}.

Before proceeding, we introduce further notation related to clusters. It is sometimes convenient to assign a (nonunique) ordering to the subsystems in a cluster. For $\vec{W} \in \mathcal{C}_m$, we then write $\vec{W} = \{W_1, W_2, \ldots, W_m \}$. We make frequent use of the shorthands $\lambda^\vec{W} = \prod_{X \in S} \lambda_X^{\mu_\vec{W}(X)}$ and $\Vec{W}! = \prod_{X \in S} \mu_\vec{W}(X)!$. We often take derivatives with respect to the parameters of the Hamiltonian for all subsystems contained in a cluster $\vec{W}$. To this end, we define the cluster derivative $\mathcal{D}_\vec{W}$, which acts on any function of the Hamiltonian parameters $\lambda = \{\lambda_X: X \in S\}$ as
\begin{equation}
    \mathcal{D}_\vec{W} f(\lambda) = \left. \left[ \prod_{X \in S} \left( \frac{\partial}{\partial \lambda_X} \right)^{\mu_\vec{W}(X)} \right] f(\lambda) \right|_{\lambda = 0}.
\end{equation}
Here, the subscript $\lambda = 0$ means to set $\lambda_X = 0$ for all $X \in S$ after taking the derivatives. Hence, the cluster derivative isolates the contribution from the monomial $\lambda^\vec{W}$.

For any function $f(\lambda)$, we define its cluster expansion as the multivariate Taylor-series expansion in $\lambda$.
With the above notation, the cluster expansion can be concisely written as
\begin{equation}
    f(0) + \sum_{\vec{W} \in \mathcal{C}} \frac{\lambda^\vec{W}}{\vec{W}!} \mathcal{D}_\vec{W} f(\lambda).
\end{equation}
Our goal is to establish conditions under which the cluster expansion converges to $f(\lambda)$ for different functions of interest.

We illustrate the above concepts by an example in Appendix~\ref{sec:example}.

\subsection{Cluster partitions\label{sec:partitions}}
A partition $P$ of a cluster $\vec{W}$ is a multiset of clusters $\{ \vec{V}_1, \vec{V}_2, \ldots, \vec{V}_{|P|} \}$ such that $\vec{W} = \vec{V}_1 \cup \vec{V}_2 \cup \cdots \cup \vec{V}_{|P|}$. We are particularly interested in partitions where every element is a connected cluster. We refer to these as partitions of $\vec{W}$ into connected subclusters. The set of all such partitions is denoted by $\mathcal{P}_c(\vec{W})$.

We introduce several quantities characterizing cluster partitions. We use tildes to distinguish these from similar quantities describing clusters. The multiplicity $\tilde \mu_P(\clus{V})$ is defined as the number of times the cluster $\clus{V}$ appears in the partition $P$. The size $|P| = \sum_{\clus{V} \in \mathcal{G}} \tilde\mu_P(\clus{V})$ is the number of clusters in $P$, including their multiplicities. We also use the shorthand
\begin{equation}
    P! = \prod_{\clus{V} \in \mathcal{G}} \tilde\mu_P(\clus{V})!
\end{equation}

For every partition $P \in \mathcal{P}_c(\clus{W})$, we define a simple graph $\tilde G_P$, called the partition graph of $P$. The vertices of $\tilde G_P$ are the clusters in $P$. Two clusters $\clus{V}, \clus{V}' \in P$ are connected by an edge if and only if they overlap, that is, there exist subsystems $X \in \clus{V}$ and $Y \in \clus{V}'$ such that $X \cap Y \neq \emptyset$. Alternatively, we may obtain $\tilde G_P$ from $G_\clus{W}$ as follows. For every $\clus{V} \in P$, we merge the corresponding vertices in $G_\clus{W}$ into a single vertex and remove all loops. If any of the remaining edges are repeated, they are reduced to a single edge. \Figref{fig:graphs} shows an example of a partition graph and illustrates its connection to the cluster graph.

\begin{figure}[t]
    \centering
    \includegraphics[width=\columnwidth]{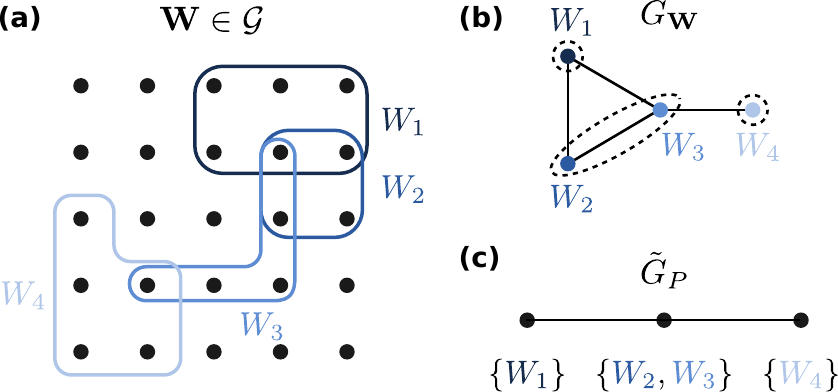}
    \caption{Illustration of cluster and partition graphs. (a)~A connected cluster $\vec{W} = \{W_1, W_2, W_3, W_4\}$ composed of four subsystems. The dots indicate individual spins. (b)~The corresponding cluster graph $G_\vec{W}$. The dashed outlines show the partition $P = \{\{ W_1\}, \{W_2, W_3\}, \{W_4\}\}$ of $\vec{W}$ into connected subclusters. (c)~The partition graph $\tilde{G}_P$ associated with $P$.}
    \label{fig:graphs}
\end{figure}

\section{Local observables}\label{sec:obs}
\subsection{Cluster expansion}

We consider the expectation value of an observable $A$ for an initial product state $\rho$ evolving under a local Hamiltonian $H$. After time $t$, we have
\begin{align}
    \langle A(t) \rangle = \tr \left( e^{i H t} A e^{- i H t} \rho \right) .
\end{align}
Due to the dependence of $H$ on the parameter set $\lambda$, we may think of $\langle A(t) \rangle$ as a function of $\lambda$. The corresponding cluster expansion is given by
\begin{equation}
    \label{eq:obs_cluster}
    f_A(t) = \langle A(0) \rangle + \sum_{m = 1}^\infty \sum_{\vec{W} \in \mathcal{C}_m} \frac{\lambda^\vec{W}}{\vec{W}!} \mathcal{D}_\vec{W} \langle A(t) \rangle.
\end{equation}
For $\vec{W} \in \mathcal{C}_m$, the cluster derivative can be explicitly evaluated as
\begin{align}
    \label{eq:commutator}
    \mathcal{D}_\vec{W} & \langle A(t) \rangle = \frac{(i t)^m}{m!} \\
    & \times \sum_{\sigma \in S_m} \tr \left( \comm{h_{W_{\sigma(1)}}} {\comm{h_{W_{\sigma(2)}}}{\cdots \comm{h_{W_{\sigma(m)}}}{A}}} \rho \right), \nonumber
\end{align}
where the sum runs over all permutations of the indices $\{1, 2, \ldots, m \}$.

For simplicity, we assume that the support of $A$ is contained in the set of subsystems $S$ of the local supports of the Hamiltonian, although this constraint can be readily relaxed. We establish the convergence of the cluster expansion for short times in two steps. First, we show that only clusters that are completely connected to $A$ contribute to the sum.
\begin{lemma}
    For any cluster $\vec{W} \notin \mathcal{G}^A$,
    \begin{equation}
        \mathcal{D}_\vec{W} \langle A(t) \rangle = 0.
    \end{equation}
\end{lemma}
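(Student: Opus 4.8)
The plan is to work directly from the explicit expression for the cluster derivative in \eqref{eq:commutator} and to prove the stronger statement that \emph{every} nested commutator appearing in the permutation sum already vanishes as an operator, so that its trace against $\rho$ — indeed against any state — is zero. Note that the product-state structure of $\rho$ plays no role here; this is a pure locality argument, and the operator identity itself is what we establish.

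First I would use the hypothesis $\vec{W} \notin \mathcal{G}^A$ to set up the relevant decomposition. Since $G_{\vec{W} \cup \{ \supp(A) \}}$ is disconnected, I split $\vec{W}$ into the submultiset $\vec{W}_A$ of subsystems lying in the connected component of $\supp(A)$, and the remainder $\vec{W}_{\bar A} = \vec{W} \setminus \vec{W}_A$, which is nonempty by disconnectedness. Setting $R = \supp(A) \cup \bigcup_{X \in \vec{W}_A} X$, I would verify that every $Y \in \vec{W}_{\bar A}$ is disjoint from $R$: if $Y$ overlapped $\supp(A)$ or any $X \in \vec{W}_A$, it would be adjacent to the $A$-component in the cluster graph and hence belong to it, contradicting $Y \in \vec{W}_{\bar A}$. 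Thus operators supported on subsystems in $\vec{W}_{\bar A}$ are supported on $\bar R$, disjoint from $R$.

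The core of the argument is a support-tracking induction carried out from the innermost commutator outward. Fixing a permutation $\sigma \in S_m$, I set $B_{m+1} = A$ and $B_k = \comm{h_{W_{\sigma(k)}}}{B_{k+1}}$ for $k = m, m-1, \ldots, 1$, so that the summand indexed by $\sigma$ equals $B_1$. Using the two elementary facts $\supp(\comm{P}{Q}) \subseteq \supp(P) \cup \supp(Q)$ and $\comm{P}{Q} = 0$ whenever $\supp(P) \cap \supp(Q) = \emptyset$, I would show by downward induction that each $B_k$ is either the zero operator or supported on $R$. The base case $B_{m+1} = A$ has support in $R$; in the inductive step, whenever $W_{\sigma(k)} \in \vec{W}_A$ (so $W_{\sigma(k)} \subseteq R$) the support stays inside $R$, while whenever $W_{\sigma(k)} \in \vec{W}_{\bar A}$ it is disjoint from the accumulated support $\subseteq R$ and the commutator collapses to zero.

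Finally I would invoke nonemptiness of $\vec{W}_{\bar A}$: let $j$ be the largest index with $W_{\sigma(j)} \in \vec{W}_{\bar A}$. All operators with indices $k > j$ lie in $\vec{W}_A$, so $B_{j+1}$ is either zero or supported on $R$; since $W_{\sigma(j)}$ is disjoint from $R$, the step $B_j = \comm{h_{W_{\sigma(j)}}}{B_{j+1}}$ gives $B_j = 0$, whence $B_1 = 0$. As this holds for every $\sigma$, each summand in \eqref{eq:commutator} vanishes and therefore $\mathcal{D}_\vec{W} \langle A(t) \rangle = 0$. I expect the only points requiring care to be the bookkeeping of the inside-out ordering and the verification that the $A$-component support $R$ is genuinely disjoint from $\bar R$; everything else follows routinely from the locality of commutators.
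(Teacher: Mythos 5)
Your proof is correct and takes essentially the same route as the paper: for each permutation $\sigma$, the nested commutator in \eqref{eq:commutator} vanishes as an operator because some $h_{W_{\sigma(k)}}$ has support disjoint from the union of all supports further inside, so the trace against $\rho$ is trivially zero. Your decomposition $\vec{W} = \vec{W}_A \cup \vec{W}_{\bar A}$ and the choice of the largest index $j$ with $W_{\sigma(j)} \in \vec{W}_{\bar A}$ simply make explicit the existence of the collapsing index $k$ that the paper asserts without proof, a welcome bit of added rigor rather than a different argument.
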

\begin{proof}
    Consider $\vec{W} \notin \mathcal{G}_m^A$ for any integer $m > 0$. For every $\sigma \in S_m$, there exists a positive integer $k \leq m$ such that the intersection of $W_{\sigma(k)}$ with $W_{\sigma(k+1)} \cup W_{\sigma(k+2)} \cup \cdots \cup W_{\sigma(m)} \cup \supp(A)$ is empty. Then, $\comm{h_{W_{\sigma(k)}}} {\comm{h_{W_{\sigma(k+1)}}}{\cdots \comm{h_{W_{\sigma(m)}}}{A}}} = 0$, and the commutator in \eqref{eq:commutator} vanishes.
\end{proof}

Second, we need to bound the magnitude of the cluster derivative when it does not vanish. 
Note that there are at most $2^m$ in the nested commutators of \eqref{eq:commutator}. Repeated application of the triangle inequality then yields $\left| \tr \left( \comm{h_{W_{\sigma(1)}}} {\comm{h_{W_{\sigma(2)}}}{\cdots \comm{h_{W_{\sigma(m)}}}{A}}} \rho \right) \right| \leq 2^m \Vert A \Vert$. Hence, we find that for any $\vec{W} \in \mathcal{G}_m^A$,
\begin{equation}
    \left| \mathcal{D}_\vec{W} \langle A(t) \rangle \right| \leq (2 |t|)^m \Vert A \Vert.
\end{equation}

By combining these observations, we are able to prove convergence of the cluster expansion for short times:
\begin{proposition}\label{prop:shortTobs}
    Consider an operator $A$ for which $\supp(A) \in S$ and let $\vert t \vert < t^*= 1/(2 e \mathfrak{d})$. Then,
    \begin{align}\label{eq:shortTobs}
        &\left \vert \langle A(t) \rangle - \langle  A(0) \rangle - \sum_{m = 1}^M \sum_{\vec{W} \in \mathcal{G}_m^A} \frac{\lambda^\vec{W}}{\vec{W}!} \mathcal{D}_\vec{W} \langle A(t) \rangle \right \vert \nonumber\\
        & \hspace{4cm} \leq e \mathfrak{d} \Vert A \Vert \frac{(|t|/t^*)^{M+1}}{1-|t|/t^*} .
    \end{align}
\end{proposition}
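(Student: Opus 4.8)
The plan is to reduce the multivariate cluster expansion to a single-variable Taylor series, whose convergence to the function is automatic, and then to control the truncation error as the tail of that series using the vanishing lemma, the magnitude bound $\left| \mathcal{D}_\vec{W} \langle A(t) \rangle \right| \leq (2|t|)^m \Vert A \Vert$, and the cluster count of Lemma~\ref{le:clusters}.

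First I would introduce the one-parameter family $g(s) = \tr\left( e^{i s H t} A e^{-i s H t} \rho \right)$, obtained by rescaling every coupling $\lambda_X \mapsto s \lambda_X$. For a finite system $g$ is an entire function of $s$, so its Maclaurin series converges for all $s$ and in particular at $s = 1$, where $g(1) = \langle A(t) \rangle$ and $g(0) = \langle A(0) \rangle$. A direct application of the multivariate chain rule identifies the degree-$m$ Taylor coefficient of $g$ with $c_m = \sum_{\vec{W} \in \mathcal{C}_m} \frac{\lambda^\vec{W}}{\vec{W}!} \mathcal{D}_\vec{W} \langle A(t) \rangle$, so the cluster expansion is genuinely the Taylor series of $g$ evaluated at $s = 1$ and therefore converges to $\langle A(t) \rangle$. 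This sidesteps any subtlety about rearranging a multivariate series: the identity $\langle A(t) \rangle = \langle A(0) \rangle + \sum_{m \geq 1} c_m$ holds, and the truncation error in \eqref{eq:shortTobs} is precisely the tail $\sum_{m > M} c_m$.

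Next I would bound $|c_m|$. By the preceding lemma only clusters in $\mathcal{G}_m^A$ survive, so $c_m = \sum_{\vec{W} \in \mathcal{G}_m^A} \frac{\lambda^\vec{W}}{\vec{W}!} \mathcal{D}_\vec{W} \langle A(t) \rangle$. Using $|\lambda_X| \leq 1$ (hence $|\lambda^\vec{W}| \leq 1$), $\vec{W}! \geq 1$, and the magnitude bound above, each term is at most $(2|t|)^m \Vert A \Vert$, so $|c_m| \leq |\mathcal{G}_m^A| \, (2|t|)^m \Vert A \Vert$. The key combinatorial step is the count $|\mathcal{G}_m^A| \leq (e\mathfrak{d})^{m+1}$, which follows because the map $\vec{W} \mapsto \vec{W} \cup \{\supp(A)\}$ injects $\mathcal{G}_m^A$ into the set of connected clusters of size $m+1$ containing the fixed subsystem $\supp(A)$ (connectedness is exactly the defining property of $\mathcal{G}_m^A$, and $\vec{W}$ is recovered by deleting one copy of $\supp(A)$), and Lemma~\ref{le:clusters} bounds that set by $(e\mathfrak{d})^{m+1}$. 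Substituting $t^* = 1/(2 e \mathfrak{d})$ then gives $|c_m| \leq e\mathfrak{d} \Vert A \Vert \, (|t|/t^*)^m$.

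Finally, I would sum the tail as a geometric series: for $|t| < t^*$, $\sum_{m = M+1}^\infty |c_m| \leq e\mathfrak{d} \Vert A \Vert \sum_{m = M+1}^\infty (|t|/t^*)^m = e\mathfrak{d} \Vert A \Vert \frac{(|t|/t^*)^{M+1}}{1 - |t|/t^*}$, which is exactly the claimed bound. I expect the main obstacle to be the combinatorial accounting in the step $|\mathcal{G}_m^A| \leq (e\mathfrak{d})^{m+1}$, where one must check that adjoining $\supp(A)$ produces a genuinely connected cluster of size $m+1$ and that the correspondence is injective, so that Lemma~\ref{le:clusters} applies cleanly; the convergence of the series to the true value is the other delicate point, which the reduction to the entire function $g(s)$ resolves.
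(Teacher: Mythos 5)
Your proof is correct and follows essentially the same route as the paper's: restrict the sum to $\mathcal{G}_m^A$ via the vanishing lemma, bound each cluster derivative by $(2|t|)^m \Vert A \Vert$, count $|\mathcal{G}_m^A| \leq (e\mathfrak{d})^{m+1}$ by the injection $\vec{W} \mapsto \vec{W} \cup \{\supp(A)\}$ into connected size-$(m+1)$ clusters containing $\supp(A)$ (the paper phrases the same bijective correspondence in the reverse direction, reducing the multiplicity of $\supp(A)$ by one), and sum the geometric tail. Your one addition---identifying the degree-$m$-grouped expansion with the Maclaurin series of the entire function $g(s) = \tr\left( e^{isHt} A e^{-isHt} \rho \right)$ to justify that the series actually converges to $\langle A(t) \rangle$---is a sound and slightly more explicit treatment of a point the paper dispatches in one line (``convergence of this series implies the convergence of the cluster expansion'').
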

\begin{proof}
    Consider the error of neglecting all clusters of size $m > M$ from the cluster expansion. Since $|\lambda_X| \leq 1$ for all $X \in S$, we can bound this error by
    \begin{equation}
        \label{eq:obs_error}
        \left \vert \sum_{m = M+1}^\infty \sum_{\vec{W} \in \mathcal{G}_m^A} \frac{\lambda^\vec{W}}{\vec{W}!} \mathcal{D}_\vec{W} \langle A(t) \rangle \right \vert \leq \sum_{m = M+1}^\infty (2 |t|)^m | \mathcal{G}_m^A | \Vert A \Vert.
    \end{equation}
    By assumption, $\supp(A) \in S$, which allows us to obtain every cluster in $\mathcal{G}_m^A$ by starting from a cluster $\vec{W} \in \mathcal{G}_{m+1}$ that contains $X = \supp(A)$ and reducing the multiplicity $\mu_\vec{W}(X)$ by 1. It follows from Lemma~\ref{le:clusters} that $\left| \mathcal{G}_m^A \right| \leq (e \mathfrak{d})^{m+1}$. Substituting this bound into \eqref{eq:obs_error} yields a geometric series, which converges when $|t| < t^*$. Convergence of this series implies the convergence of the cluster expansion such that $f_A(t) = \langle A(t) \rangle$ and leads to the error bound in the lemma.
\end{proof}

We highlight that Proposition~\ref{prop:shortTobs} holds for any quantum state $\rho$. The restriction to product states only becomes relevant when bounding the computational cost. In addition, the proposition is valid for complex values of $t$.


\subsection{Computation for short times}\label{sec:computationObs}

We now discuss the computational cost of estimating $\langle A(t) \rangle$ for a time $|t| < t^*$ up to additive error $\varepsilon \Vert A \Vert$. For simplicity, we only give the asymptotic scaling of the algorithm with $\varepsilon$ and $|t| / t^*$ and suppress the dependence on constant parameters such as the locality $k$ or the connectivity $\mathfrak{d}$ of the Hamiltonian. Moreover, we exclude issues of finite numerical precision, which have been addressed rigorously by Haah \emph{et al.}~\cite{haah2021}, from our considerations.

Our approximation algorithm computes the cluster expansion for all clusters up to size $M$, whose value is determined by $\varepsilon$ and $|t|/t^*$. For all $m \le M $, we proceed in two steps:
\begin{enumerate}[(i)]
    \item Enumerate all the connected clusters in $\mathcal{G}^A_m$.
    \item Compute and add the contributions of every cluster.
\end{enumerate}
Step (i) can be carried out by using the algorithm in Lemma~\ref{le:clusters} to first compute clusters of size $m+1$ containing $X = \supp(A)$ and by subsequently reducing the multiplicity of $X$ by one. The run time is $\exp (\mathcal{O}(m))$. The computational effort of step (ii) is dominated by the evaluation of the sum over nested commutators in \eqref{eq:commutator}. We show in Appendix \ref{app:nested} that  this can be done for a product state in time $\exp(\mathcal{O}(m))$ by suitably grouping the $m!$ terms of the sum.

Together, the two steps lead to the following run time of the approximation algorithm.
\begin{proposition}\label{prop:algoObsShort}
   Let $\rho$ be a product state and $\vert t \vert < t^*= 1/(2 e \mathfrak{d})$. There exists an algorithm that outputs an estimate $\hat f_A(t)$ with run time
    \begin{equation}\label{eq:run time1}
         \mathrm{poly}\left[ \left( \frac{1}{1 - |t|/t^*} \frac{1}{\varepsilon} \right)^{1/\log(t^*/|t|)} \right]
    \end{equation}
    such that $\left \vert \langle A(t) \rangle - \hat f_A(t) \right \vert \le \varepsilon \norm{A}$.
\end{proposition}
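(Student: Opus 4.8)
The plan is to bound the run time of the truncated cluster expansion by first choosing the truncation order $M$ so that Proposition~\ref{prop:shortTobs} guarantees the desired accuracy, and then estimating the cost of evaluating all retained terms. The algorithm's output $\hat f_A(t)$ is precisely the partial sum appearing in Proposition~\ref{prop:shortTobs}, so the error it incurs is exactly the tail that the proposition controls.

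First I would fix the truncation error. Writing $r = |t|/t^* < 1$, Proposition~\ref{prop:shortTobs} bounds the error of discarding all clusters of size $m > M$ by $e\mathfrak{d}\norm{A}\, r^{M+1}/(1-r)$. Requiring this to be at most $\varepsilon\norm{A}$ gives the condition $r^{M+1} \le \varepsilon(1-r)/(e\mathfrak{d})$, and solving for $M$ yields
\begin{equation}
M = \mathcal{O}\!\left( \frac{1}{\log(t^*/|t|)} \log\frac{1}{\varepsilon(1 - |t|/t^*)} \right).
\end{equation}
The factor $1/\log(t^*/|t|) = 1/\log(1/r)$ arises because, when inverting the inequality, one divides by $\log r < 0$; this is the origin of the unusual exponent in the claimed run time.

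Next I would bound the run time. For each $m \le M$, step (i) enumerates the clusters of $\mathcal{G}_m^A$; by Lemma~\ref{le:clusters} there are at most $(e\mathfrak{d})^{m+1}$ of them (obtained from clusters of size $m+1$ containing $\supp(A)$ by lowering its multiplicity), and they can be listed in time $\exp(\mathcal{O}(m))$. For each such cluster, step (ii) evaluates the nested-commutator sum in \eqref{eq:commutator}, which by Appendix~\ref{app:nested} costs $\exp(\mathcal{O}(m))$ for a product state — this is the only place the product-state assumption enters. Summing over clusters and over $m \le M$ gives a total run time $\sum_{m=1}^M (e\mathfrak{d})^{m+1}\exp(\mathcal{O}(m)) = \exp(\mathcal{O}(M))$, dominated by the largest clusters.

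Finally I would substitute the chosen $M$ into $\exp(\mathcal{O}(M))$. Using $\exp(c\, x/\log(t^*/|t|)) = (\exp x)^{c/\log(t^*/|t|)}$ with $x = \log[1/(\varepsilon(1-|t|/t^*))]$, the run time collapses to $\left[ \frac{1}{\varepsilon(1-|t|/t^*)} \right]^{\mathcal{O}(1/\log(t^*/|t|))}$, which is exactly the stated polynomial. I expect no serious obstacle here: the genuinely hard content has been front-loaded into the earlier results, namely the convergence bound of Proposition~\ref{prop:shortTobs} and, crucially, the fact (Appendix~\ref{app:nested}) that the commutator sum can be grouped and evaluated in $\exp(\mathcal{O}(m))$ time rather than the naive $m!$. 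Within this proof itself, the one place demanding care is the bookkeeping of the final substitution, ensuring that the $1/\log(t^*/|t|)$ exponent emerges cleanly and that the $(1-|t|/t^*)^{-1}$ prefactor is correctly folded into the base of the polynomial.
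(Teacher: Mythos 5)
Your proposal is correct and follows essentially the same route as the paper: choose $M \gtrsim \log\frac{e\mathfrak{d}}{(1-|t|/t^*)\varepsilon} / \log\frac{t^*}{|t|}$ via Proposition~\ref{prop:shortTobs}, observe that enumeration (Lemma~\ref{le:clusters}) and the nested-commutator evaluation for product states (Appendix~\ref{app:nested}) give total cost $\exp(\mathcal{O}(M))$, and substitute. Your write-up is in fact slightly more explicit than the paper's about where the product-state assumption enters and how the exponent $1/\log(t^*/|t|)$ emerges, but the argument is identical.
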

\begin{proof}
    According to Proposition~\ref{prop:shortTobs}, truncating the cluster expansion at order $M> \log \frac{e \mathfrak{d}}{ (1 - |t|/t^*) \varepsilon} / \log \frac{t^*}{|t|}$ leads to an error that is bounded from above by $\varepsilon \Vert A \Vert$. Following the above discussion of enumerating the clusters and computing their contributions, we see that the cluster expansion truncated at order $M$ can be evaluated in time $\exp(\mathcal{O}(M))$. Picking the smallest integer $M$ that guarantees the desired error bound yields \eqref{eq:run time1}.
\end{proof}

\subsection{Computation for arbitrary times}\label{sec:tcomputationObs}

The convergence result of Proposition~\ref{prop:shortTobs} is independent of the system size $n$. Hence, $\langle A(t) \rangle$ remains analytic in the thermodynamic limit for all complex values of $t$ that satisfy $|t| < t^*$.  Given any $t_0 \in \mathbb{R}$, we may write $\langle A(t) \rangle = \tr \left( e^{i H (t - t_0)} A e^{- i H (t - t_0)} \rho' \right)$, where $\rho' = e^{- i H t_0} \rho e^{i H t_0}$ is another quantum state. This shows that $\langle A(t) \rangle$ is analytic on a disk in the complex plane of radius $t^*$ around any point on the real axis. Stated differently, $\langle A(t) \rangle$ is analytic for all complex values of $t$ satisfying $| \im(t) | < t^*$. This analytic structure provides a strategy to compute $\langle A(t) \rangle$ for a product state $\rho$ for all $t \in \mathbb{R}$ and any system size $n$ by means of analytic continuation.

While there are many approaches to analytic continuation, we pick here a concrete method that employs a function $\phi(z)$ that maps a disk onto an elongated region along the real axis. For some $R > 1$ and $w > 0$, we assume that $\phi(z)$ satisfies the following three properties:
\begin{enumerate}[(i)]
    \item $\phi(z)$ is analytic on the closed disk $D_R = \{ z \in \mathbb{C} : \, |z| \leq R \}$,
    \item $\phi(0) = 0$ and $\phi(1) = 1$,
    \item $| \im ( \phi(z) ) | \leq w$ for all $z \in D_R$.
\end{enumerate}
We show below, using an explicit example, that such a function exists.

Next, we define $f(z) = \langle A(t \phi(z)) \rangle$, where $t \in \mathbb{R}$ is the time at which we want to evaluate $\langle A(t) \rangle$. It follows from property (ii) that $f(1) = \langle A(t) \rangle$. Because $\langle A(t) \rangle$ is analytic when $| \im(t) | < t^*$, properties (i) and (iii) together ensure that $f(z)$ is analytic on $D_R$, provided that $w|t| < t^*$. These relations are illustrated in \figref{fig:ac}.
\begin{figure}[t]
    \centering
    \includegraphics[width=\columnwidth]{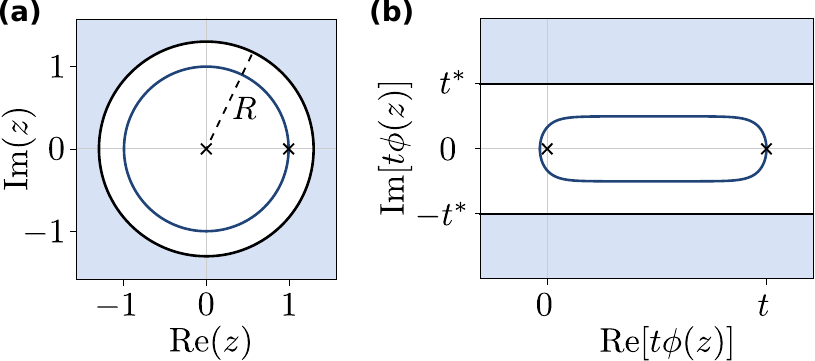}
    \caption{Illustration of the analytic continuation scheme. (a)~The function $\phi(z)$ is analytic in the domain $|z| \leq R$. The blue circle is the unit circle. (b)~The function $\langle A(t \phi(z)) \rangle$ is analytic for all $z$ such that $|\im[t \phi(z)]| < t^*$. Given $t \in \mathbb{R}$, $t \phi(z)$ maps the unit circle to the elongated shape outlined by the blue curve.}
    \label{fig:ac}
\end{figure}

Next, we compute the Taylor series of $f(z)$ at $z = 0$ up to order $M$. With $\langle A(t) \rangle = \sum_{l = 0}^\infty A_l t^l$ and $\phi(z) = \sum_{l = 0}^\infty \phi_l z^l$, we have
\begin{equation}
    \left. \frac{1}{k!} \frac{\di^k f(z)}{\di z^k} \right|_{z=0} = \sum_{l = 1}^k A_l t^l \sum_{\substack{m_1, \ldots, m_l \geq 1\\ m_1 + \cdots + m_l = k}} \phi_{m_1} \cdots \phi_{m_l},
\end{equation}
where we used the fact that $\phi_0 = 0$. We can obtain $A_l t^l$ from the cluster expansion as $A_l t^l = \sum_{\vec{W} \in \mathcal{G}_l} \lambda^\vec{W} \mathcal{D}_\vec{W} \langle A(t) \rangle / \vec{W}!$, which takes time $\exp(\mathcal{O}(l))$ to evaluate. The sum over the Taylor coefficients of $\phi(z)$ involves $\binom{l-1}{k-1} \leq 2^{l-1}$ terms. Assuming that the individual coefficients $\phi_l$ can be computed in time $\exp(\mathcal{O}(l))$, it thus takes time $\exp(\mathcal{O}(k))$ to compute the $k^\mathrm{th}$ Taylor coefficient of $f(z)$, and time $\exp(\mathcal{O}(M))$ to compute the full Taylor series up to order $M$.

To bound the truncation error of the Taylor series, we make use of the following standard result in complex analysis (see, e.g., Proposition 18 of reference~\cite{harrow2020}).
\begin{lemma}
    \label{le:taylor}
    We denote by $D_R$ the closed disk of radius $R$ centered at $z = 0$ in the complex plane, i.e., $D_R = \{z \in \mathbb{C} \vert \, \vert z \vert \leq R \}$. Let $f(z)$ be a complex function that is bounded by $|f(z)| \leq F$ and analytic for all $z \in D_R$. Given $\alpha < 1$, for all $z \in D_{\alpha R}$, the error of approximating $f(z)$ by a truncated Taylor series of order $M$ is bounded by
    \begin{equation}
        \varepsilon_M(z) = \left| f(z) - \sum_{m = 0}^M \frac{1}{m!} f^{(m)}(0) z^m \right| \leq \frac{\alpha^{M+1}}{1 - \alpha} F.
    \end{equation}
\end{lemma}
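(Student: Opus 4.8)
The plan is to prove the bound by combining the Cauchy integral formula for the Taylor coefficients with the summation of the resulting geometric tail. Writing the Taylor coefficients as $a_m = f^{(m)}(0)/m!$, the truncation error is exactly the tail of the power series, $\varepsilon_M(z) = \left| \sum_{m=M+1}^\infty a_m z^m \right|$. Here I rely on the fact that analyticity of $f$ on the closed disk $D_R$ guarantees that its Taylor series about $z = 0$ converges to $f(z)$ throughout the open disk $|z| < R$, and in particular on the smaller disk $D_{\alpha R}$ with $\alpha < 1$. This is what makes the remainder \emph{equal} to the tail rather than merely bounded by it.

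First I would establish the Cauchy estimate on the coefficients. Applying the Cauchy integral formula on the boundary circle $|w| = R$ gives
\begin{equation}
    a_m = \frac{1}{2\pi i} \oint_{|w| = R} \frac{f(w)}{w^{m+1}} \, \di w ,
\end{equation}
and bounding the integrand by $|f(w)| \leq F$ together with the contour length $2\pi R$ yields $|a_m| \leq F / R^m$. This is the only place where the hypotheses on $f$, namely analyticity and the uniform bound $F$, actually enter.

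Next I would sum the tail. For $z \in D_{\alpha R}$ we have $|z| \leq \alpha R$, hence $|a_m z^m| \leq (F/R^m)(\alpha R)^m = F \alpha^m$, so that
\begin{equation}
    \varepsilon_M(z) \leq \sum_{m = M+1}^\infty |a_m| \, |z|^m \leq F \sum_{m = M+1}^\infty \alpha^m = F \, \frac{\alpha^{M+1}}{1 - \alpha} ,
\end{equation}
where the geometric series converges precisely because $\alpha < 1$. This is exactly the claimed bound.

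There is no genuine obstacle here, as this is a textbook estimate; the only points meriting care are (a) justifying that $f$ is analytic on an open neighborhood of the circle $|w| = R$ so that the Cauchy integral on that contour is legitimate, which follows from the stated analyticity on the closed disk $D_R$, and (b) ensuring that the Taylor series genuinely represents $f$ on $D_{\alpha R}$, so that the truncation error coincides with the tail that we bound. Both are standard consequences of analyticity on the closed disk, and the estimate follows immediately.
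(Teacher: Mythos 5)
Your proof is correct and is the standard argument: Cauchy estimates on the Taylor coefficients, $|a_m| \leq F/R^m$, followed by summing the geometric tail on $D_{\alpha R}$. The paper does not prove this lemma itself but cites it as a standard result (Proposition 18 of Ref.~\cite{harrow2020}), and your derivation matches that standard route, with the two technical points you flag---legitimacy of the contour integral on $|w| = R$ and the series representing $f$ on the smaller disk---handled correctly.
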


Combining this lemma with the above considerations for computing the Taylor series yields an algorithm to estimate $\langle A(t) \rangle$ for all $t \in \mathbb{R}$.
\begin{theorem}\label{th:algoObs}
    Given $t>0$, there is an algorithm that outputs the estimate $\hat f_A(t)$ with run time 
    \begin{equation}
        \mathrm{poly} \left[ \left( \frac{1}{\varepsilon} e^{\pi t / t^*} \right)^{\exp({\pi t /  t^*})} \right]
    \end{equation}
    such that for any product state $\rho$
    \begin{equation}
        \left \vert \langle A(t) \rangle - \hat f_A(t) \right \vert \le \varepsilon \norm{A}.
    \end{equation}
\end{theorem}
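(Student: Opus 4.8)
The plan is to carry out the analytic-continuation scheme just set up: I define $f(z) = \langle A(t\phi(z)) \rangle$ with $\phi$ obeying properties (i)--(iii), compute its degree-$M$ Taylor polynomial at $z=0$, evaluate that polynomial at $z=1$ to approximate $f(1) = \langle A(t) \rangle$, and control the truncation error with Lemma~\ref{le:taylor}. The work then reduces to (a) establishing a uniform bound $F$ on $f$ over $D_R$, (b) bounding the cost of the Taylor computation, and (c) optimizing the parameters $R$ and $w$ of $\phi$ against the real time $t$.

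First I would bound $f$ on $D_R$. For $z \in D_R$ and real $t$, the point $s = t\phi(z)$ satisfies $|\im(s)| \le w|t| < t^*$ by property (iii) and the assumption $w|t| < t^*$. Shifting by $t_0 = \re(s)$ gives $\langle A(s) \rangle = \tr(e^{iH(s-t_0)} A e^{-iH(s-t_0)} \rho')$ with $\rho' = e^{-iHt_0}\rho e^{iHt_0}$ a state and $s - t_0 = i\,\im(s)$ purely imaginary with $|s-t_0| < t^*$. Applying Proposition~\ref{prop:shortTobs} with $M = 0$ (valid for any state and complex argument) yields $|\langle A(s)\rangle| \le \norm{A}(1 + e\mathfrak{d}\, (w|t|/t^*)/(1 - w|t|/t^*)) =: F$. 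Hence $f$ is analytic on $D_R$ with $|f| \le F$, and property (ii) gives $f(1) = \langle A(t)\rangle$.

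Second, the text already shows that the $k$-th Taylor coefficient of $f$ at $0$ is assembled in time $\exp(\mathcal{O}(k))$ from the cluster-expansion coefficients $A_l$ (each of cost $\exp(\mathcal{O}(l))$) and the coefficients $\phi_l$, so the whole degree-$M$ polynomial costs $\exp(\mathcal{O}(M))$ to produce and evaluate at $z=1$. Choosing $\alpha = 1/R$ so that $z = 1 \in D_{\alpha R} = D_1$, Lemma~\ref{le:taylor} bounds the truncation error by $R^{-(M+1)} F /(1 - 1/R)$. Requiring this to be at most $\varepsilon\norm{A}$ fixes $M = \mathcal{O}\!\big((\log R)^{-1}\log[(F/\norm{A})/(\varepsilon(1-1/R))]\big)$, and the run time is $\exp(\mathcal{O}(M))$.

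The crux is the last step, choosing $R$ and $w$. Since $w$ is capped by $t^*/|t|$ and mapping the disk $D_R$ into a strip of half-width $w$ is a conformal problem, the largest admissible radius behaves like $\log R = \exp(-\Theta(1/w))$ as $w \to 0$; taking $w$ just below $t^*/|t|$ therefore forces $(\log R)^{-1} = \exp(\Theta(|t|/t^*))$, while $F/\norm{A}$ grows only polynomially in $1/(1 - w|t|/t^*)$. Substituting into the expression for $M$ gives $M = \exp(\Theta(|t|/t^*))\,[\Theta(|t|/t^*) + \log(1/\varepsilon)]$, i.e.\ the doubly-exponential run time claimed, with the explicit constant $\pi t/t^*$ coming from the particular $\phi$ exhibited below. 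The main obstacle is precisely this optimization, together with exhibiting an explicit $\phi$ that simultaneously has the disk-to-strip property and efficiently computable coefficients $\phi_l$, and controlling the blow-up of $F$ as $w \to t^*/|t|$.
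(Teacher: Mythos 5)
Your scheme coincides step for step with the paper's proof: the same $f(z) = \langle A(t\phi(z))\rangle$, the same bound on $\max_{z \in D_R}|f(z)|$ obtained from Proposition~\ref{prop:shortTobs} after shifting by $\re(s)$ (the paper states it as $\frac{e\mathfrak{d}}{1-|\im(z)|/t^*}\norm{A}$, equivalent to your $F$ up to constants), the same $\exp(\mathcal{O}(M))$ cost for assembling the degree-$M$ Taylor polynomial from the cluster coefficients $A_l$ and the $\phi_l$, the same truncation control via Lemma~\ref{le:taylor} with $\alpha = 1/R$, and the correct conformal scaling $\log R = \exp(-\Theta(1/w))$, which is exactly what the paper's construction achieves. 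The one genuine gap is that you never exhibit an admissible $\phi$: your final paragraph claims the constant $\pi t/t^*$ comes from ``the particular $\phi$ exhibited below'' and then lists exhibiting such a $\phi$ as ``the main obstacle.'' As written, the existence of any $R>1$ attaining your claimed $\log R$, the efficient computability of the coefficients $\phi_l$ (which your cost analysis explicitly assumes), and the value of the constant in the double exponent are all unestablished. The conformal heuristic is sound --- a rescaled disk-to-strip map of logarithm or $\mathrm{artanh}$ type does force $R-1 \sim e^{-\pi/(2w)}$ under the normalization $\phi(1)=1$ --- but an extremal-scaling assertion is not a construction, and the entire quantitative content of the theorem rests on producing one.

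The paper closes precisely this gap with the Barvinok-inspired choice $\phi(z) = \log(1-z/R')/\log(1-1/R')$ for $R' > R > 1$: properties (i) and (ii) are immediate, a suitable branch of the logarithm gives $|\im(\phi(z))| \le w = -\pi/[2\log(1-1/R')]$, and the Taylor coefficients $\phi_l \propto 1/(l\,R'^l)$ are computable well within your assumed $\exp(\mathcal{O}(l))$ budget. Setting $wt/t^* = \eta < 1$ ties $R'$ to $t$ through $1 - 1/R' = e^{-\pi t/(2\eta t^*)}$, and choosing $R < R'$ so that $R'^M(R'-1) = 2R^M(R-1)$ permits replacing $R$ by $R'$ in the truncation bound at the cost of a factor of $2$, giving $\varepsilon_M(1) \le \frac{2e\mathfrak{d}}{1-\eta}\left(1 - e^{-\pi t/2\eta t^*}\right)^M \left(e^{\pi t/2\eta t^*} - 1\right)\norm{A}$. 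Hence $M > e^{\pi t/2\eta t^*}\log\left[\frac{2e\mathfrak{d}}{(1-\eta)\,\varepsilon}\, e^{\pi t/2\eta t^*}\right]$ suffices, and $\eta = 1/2$ yields the stated run time --- exactly the scaling you predicted, but now proved. Supplying this construction and the accompanying bookkeeping would complete your argument; everything else in your proposal is correct and identical in substance to the paper's proof.
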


\begin{proof}
    We bound the truncation error of the Taylor series of $f(z)$ at $z = 1$ by letting $\alpha = 1/R$ in Lemma~\ref{le:taylor}. This yields
    \begin{equation}
        \varepsilon_M(1) \leq \frac{1}{R^M (R - 1)} \max_{z \in D_R} \left\vert f(z) \right\vert
    \end{equation}
   By the same argument we used to show that $\langle A(z) \rangle$ is analytic for all $|\im(z)| < t^*$, Proposition~\ref{prop:shortTobs} implies that 
    \begin{equation}
        \langle A (z) \rangle \leq \frac{e \mathfrak{d}}{1 - |\im(z)| / t^*} \Vert A \Vert
    \end{equation}
    for all $z$ satisfying $|\im(z)| < t^*$. Since $|\im(\phi(z))| \leq w$, assuming $w t / t^* < 1$, we have
    \begin{equation}
        \max_{z \in D_R} \left\vert f (z) \right\vert \leq \frac{e \mathfrak{d}}{1 - w t / t^*} \Vert A \Vert.
    \end{equation}
    Hence,
    \begin{equation}
        \label{eq:error}
        \varepsilon_M(1) \leq \frac{e \mathfrak{d}}{1 - w t / t^*} \frac{1}{R^M (R - 1)} \Vert A \Vert.
    \end{equation}
    The parameters $R$ and $w$ cannot be chosen entirely at will owing to the constraints on $\phi(z)$. We do not attempt to address this issue in general but instead consider the concrete example $\phi(z) = \log(1 - z/R')/\log(1 - 1/R')$ with $R' > R > 1$, inspired by section Lemma 2.2.3 of Reference~\cite{barvinok2016}. This function clearly satisfies the requirements (i)--(ii). For requirement (iii), we assume that the branch of the logarithm is chosen such that $| \im(\phi(z))| \leq -\pi / [2 \log(1 - 1/R')]$. We therefore set $w = - \pi / [2 \log(1 - 1/R')]$. Recalling that $w < t^* / t$, we separately let $w t / t^* = \eta$ for some $\eta < 1$. We can always choose $R < R'$ such that $R'^m (R' - 1) = 2 R^M (R - 1)$, allowing us to replace $R$ by $R'$ in \eqref{eq:error} at the cost of a factor of $2$. For the particular choice of $\phi(z)$, we thus obtain
    \begin{equation}
        \varepsilon_M(1) \leq \frac{2 e \mathfrak{d}}{1 - \eta} \left(1 - e^{- \pi t / 2 \eta t^*} \right)^M \left( e^{\pi t / 2 \eta t^*} - 1 \right) \Vert A \Vert.
    \end{equation}
    It follows from this expression that truncating at order $M > e^{\pi t / 2 \eta t^*} \log\left( \frac{2 e \mathfrak{d}}{1 - \eta} \frac{1}{\varepsilon} e^{\pi t/ 2 \eta t^*} \right)$ guarantees an error  $\varepsilon_M(1) \leq \varepsilon \Vert A \Vert$. Thus, our output $\hat f_A(t)$ is the Taylor series of $f (z)$ at $z=1$ up to the smallest integer satisfying the lower bound on $M$. Since the computational cost is exponential in $M$, the theorem follows by setting $\eta=1/2$.
\end{proof}

The above approach yields an algorithm to estimate $\langle A(t) \rangle$ with a computational cost that scales polynomially with $1/\varepsilon$ for any fixed real time $t$. The cost, however, has a doubly exponential dependence on $t / t^*$, which renders this approach unsuitable for practical computations at long times. The chain of disks, another common method of analytic continuation, also yields a doubly exponential dependence of the computational cost on the time $t$~\cite{trefethen2020}. We conjecture that this scaling is an unavoidable consequence of analytic continuation. More efficient continuation algorithms may be available if the analytic domain of $\langle A(t) \rangle$ extends along the imaginary direction beyond a constant. This is, however, not possible in general because there exist local observables and Hamiltonians for which $\langle A(t) \rangle$ becomes nonanalytic in the thermodynamic limit at a constant imaginary time~\cite{bouch2015}. This also indicates that our convergence result is optimal up to an improvement of $t^*$ by a constant factor. We remark that the above procedure can be adapted to not only yield expectation values with product initial states, but also an $M$-local approximation of the operator $A(t)$.


\subsection{Comparison with the Lieb--Robinson bound}
The Lieb--Robinson bound \cite{Lieb1972} offers an alternative method for computing the time evolution of local operators. It implies that 
\begin{equation}
   \left \vert \langle A (t) \rangle -   \langle e^{iH_l t} A e^{-iH_l t} \rangle  \right \vert \le C e^{(v t - l)/\xi} \Vert A \Vert,
\end{equation}
where $C$ and $\xi$ are constants, and $v$ is the Lieb-Robinson velocity. The Hamiltonian $H_l=\sum_{X \in S: \, \text{dist}(X,A)\le l} \lambda_X h_X$ contains all local terms within the graph distance $\text{dist}(X,A)$ between the operators $X$ and $A$ on the interaction graph $G$.

To compute $\langle A(t) \rangle$ to within additive error $\varepsilon \Vert A \Vert$, it suffices to compute $\langle e^{-iH_l t} A e^{iH_l t} \rangle $ on a region of radius $l = v t + \xi \log (C / \varepsilon)$. In a lattice in $D$ dimensions, the computational cost of performing exact diagonalization on this region is exponential in $\left[ v t + \xi \log(C/\varepsilon) \right]^D = \xi^D \log^D \left[ e^{vt/\xi} \left( C / \varepsilon \right) \right]$. For $D > 1$, this yields an algorithm whose cost is superpolynomial in both $e^{v t / \xi}$ and $1/\varepsilon$, as opposed to the polynomial dependence on $1/\varepsilon$ in Theorem~\ref{th:algoObs}. The difference is even more marked in expander graphs, for which the number of sites at a distance $l$ grows as $\exp(\mathcal{O}(l))$. The Lieb-Robinson method has a run time that is exponential in $\mathrm{poly}(1/\varepsilon)$ and doubly exponential in $t$ for such graphs.


\section{Loschmidt echo}\label{sec:loschmidt}

\subsection{Cluster expansion}\label{sec:clusterexp}

We now focus on the Loschmidt echo 
\begin{equation}
   L(t) \equiv \tr \left(  e^{-i Ht} \rho \right),
\end{equation}
where $\rho$ is a product state on the $n$ qubits. This is an important quantity in the study of dynamics of quantum systems. It appears in diverse contexts such as quantum chaos~\cite{gorin2006}, as the characteristic function of the local density of states \cite{Emerson_2004}, and in algorithms for quantum simulation~\cite{Lu_2021}. As we discuss below, it is also a key quantity in the description of dynamical phase transitions and other relevant phenomena.

We consider the logarithm $\log L(t)$, whose multivariate Taylor expansion can be written in terms of cluster derivatives as
\begin{equation}\label{eq:expansion}
    f_L(t) = \sum_{m=1}^{\infty} \sum_{\vec{W} \in \mathcal{C}_m} \frac{\lambda^\vec{W}}{\vec{W}!} \mathcal{D}_\vec{W} \log L(t).
\end{equation}
Our goal is to establish sufficient criteria for the convergence of this expansion. Working with the logarithm of $L(t)$ greatly reduces the number of clusters involved since only connected ones contribute:
\begin{lemma}\label{le:connected}
    If $\rho$ is a product state, then for any disconnected cluster $\vec{W}$,
    \begin{equation}
        \mathcal{D}_\vec{W} \log L(t) = 0.
    \end{equation}
\end{lemma}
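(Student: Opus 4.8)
The plan is to exploit the product structure of $\rho$ to factorize the Loschmidt echo, so that the additivity of the logarithm forces the cluster derivative to vanish. Since $\vec{W}$ is disconnected, its cluster graph $G_\vec{W}$ has at least two connected components. Because every subsystem overlaps with itself, all copies of a repeated subsystem lie in the same component; hence I can group the components into two nonempty subclusters $\vec{W} = \vec{W}_1 \cup \vec{W}_2$ such that no subsystem appearing in $\vec{W}_1$ overlaps any subsystem appearing in $\vec{W}_2$. Writing $A = \bigcup_{X \in \vec{W}_1} X$ and $B = \bigcup_{Y \in \vec{W}_2} Y$ for the two regions of spins, this split means $A \cap B = \emptyset$.

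Next I would reduce to a restricted Hamiltonian. The cluster derivative $\mathcal{D}_\vec{W}$ differentiates only with respect to $\lambda_X$ for $X \in \vec{W}$ and sets all parameters to zero afterwards. Since a partial derivative commutes with fixing the remaining variables, the value of $\mathcal{D}_\vec{W}$ is unchanged if I first set $\lambda_X = 0$ for every $X$ not appearing in $\vec{W}$; equivalently, $\mathcal{D}_\vec{W}$ isolates the coefficient of the monomial $\lambda^\vec{W}$, which involves only the variables in $\vec{W}$. After this restriction the Hamiltonian becomes $H = H_A + H_B$ with $H_A = \sum_{X \in \vec{W}_1} \lambda_X h_X$ supported on $A$ and $H_B = \sum_{Y \in \vec{W}_2} \lambda_Y h_Y$ supported on $B$. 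As $A \cap B = \emptyset$, these terms commute, so $e^{-iHt} = e^{-iH_A t} e^{-iH_B t}$.

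The key step is the factorization of the echo. Writing the product state as $\rho = \rho_A \otimes \rho_B \otimes \rho_{\mathrm{rest}}$ and tracing over the three regions separately gives
\begin{equation}
    L(t) = \tr_A\!\left( e^{-iH_A t} \rho_A \right) \tr_B\!\left( e^{-iH_B t} \rho_B \right) \equiv L_A(t) L_B(t),
\end{equation}
where $L_A$ depends only on $\{ \lambda_X : X \in \vec{W}_1 \}$ and $L_B$ only on $\{ \lambda_Y : Y \in \vec{W}_2 \}$. Since $L(0) = \tr \rho = 1$, the logarithm is analytic near $\lambda = 0$ and $\log L = \log L_A + \log L_B$. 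Now $\log L_A$ is independent of every variable in $\vec{W}_2$ and $\log L_B$ is independent of every variable in $\vec{W}_1$; because both subclusters are nonempty, $\mathcal{D}_\vec{W}$ contains at least one derivative from each group and therefore annihilates both summands, yielding $\mathcal{D}_\vec{W} \log L(t) = 0$.

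The main obstacle is conceptual rather than computational: the logarithm does not factorize for a generic state, which is precisely where the product-state hypothesis is indispensable, since only $L = L_A L_B$ converts the commuting split of the Hamiltonian into an additive split of $\log L$. The remaining care is bookkeeping — justifying the reduction to $H_A + H_B$ through the commutation of differentiation with setting the non-cluster parameters to zero, and confirming that $L(0) = 1$ so that $\log L$ admits a well-defined Taylor expansion at the origin.
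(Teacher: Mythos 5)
Your proposal is correct and follows essentially the same route as the paper's proof: split the disconnected cluster into two nonoverlapping subclusters, reduce to the restricted Hamiltonian $H_A + H_B$ via the commuting exponentials, factorize the trace using the product structure of $\rho$, and observe that each summand of $\log L = \log L_A + \log L_B$ is annihilated because it misses at least one differentiated variable. Your additional bookkeeping (repeated subsystems lying in one component, the commutation of $\mathcal{D}_\vec{W}$ with setting non-cluster parameters to zero, and $L(0)=1$ ensuring the logarithm is well defined) makes explicit steps the paper leaves implicit, but does not change the argument.
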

\begin{proof}
    Since $\vec{W} \notin \mathcal{G}$, there exists a decomposition $\vec{W} = \vec{W}_A \cup \vec{W}_B$ such that the cluster graphs $G_{\vec{W}_A}$ and $G_{\vec{W}_B}$ are disconnected components of $G_\vec{W}$. We define $H_A = \sum_{X \in \vec{W}_A} \lambda_X h_X$ and $H_B = \sum_{X \in \vec{W}_B} \lambda_X h_X$, where each subsystem is included at most once in the sum, even if it appears with higher multiplicity in the cluster. Clearly, $\supp (H_A) \cap \supp ( H_B ) = \emptyset$, which implies $\comm{H_A}{H_B} = 0$ and
    \begin{equation}
        \mathcal{D}_{\vec{W}} \log L(t) = \mathcal{D}_\vec{W} \log \tr \left( e^{- i H_A t} e^{- i H_B t} \rho \right)
    \end{equation}
    For $\rho$ a product state, we further have
    \begin{equation}
       \tr \left( e^{- i H_A t} e^{- i H_B t} \rho \right) =\tr \left( e^{- i H_A t} \rho \right) \tr \left( e^{- i H_B t} \rho \right)
    \end{equation}
    and thus
    \begin{align}     
        &\mathcal{D}_{\vec{W}} \log L(t) \nonumber \\
        & \quad = \mathcal{D}_\vec{W} \log \tr \left( e^{- i H_A t} \rho \right) + \mathcal{D}_\vec{W} \log \tr \left(  e^{- i H_B t} \rho \right).
    \end{align}
    The first term vanishes because $H_A$ is independent of $\lambda_X$ for any $X \in \vec{W}_B$. Similarly, the second term is independent of  $\lambda_X$ for $X \in \vec{W}_A$.
\end{proof}

Hence, we can restrict the sum in \eqref{eq:expansion} to connected clusters. 

We next bound the cluster derivative using the bounded connectivity of the cluster and partition graphs. Many properties of a graph $G$ are captured by the Tutte polynomial $T_G(x, y)$ (see, e.g., \cite{biggs1993algebraic}). For instance, $T_G(1, 1)$ is the number of spanning trees (or spanning forests if the graph is not connected), $T_G(2, 1)$ is the number of forests and $T_G(2, 2)$ is $2^{\vert E \vert}$, where $\vert E \vert$ is the number of edges. 

Using the Tutte polynomial, the cluster derivative of $\log L(t)$ can be concisely expressed as in the following lemma.
\begin{lemma}\label{le:cldev}
    For any $\vec{W} \in \mathcal{G}_m$, the cluster derivative of $\log L(t)$ can be written as
    \begin{align}
        \mathcal{D}_\vec{W} & \log  L(t)  = (-i t)^m \\
        & \times \sum_{P \in \mathcal{P}_c(\vec{W})} (-1)^{|P|-1} N_P(\vec{W}) T_{\tilde{G}_P}(1, 0) \prod_{\vec{V} \in P} \langle h^\vec{V} \rangle_s, \nonumber
    \end{align}
    where we introduced the symmetrized expectation value
    \begin{equation}
        \langle h^\vec{V} \rangle_s =  \frac{1}{|\vec{V}|!} \sum_{\sigma \in S_{|\vec{V}|}} \tr \left( h_{V_{\sigma(1)}} h_{V_{\sigma(2)}} \cdots h_{V_{\sigma(|\vec{V}|)}} \rho \right).
    \end{equation}
    and the combinatorial factor
    \begin{equation}
        N_P(\vec{W}) = \frac{\vec{W}!}{P! \prod_{\vec{V} \in P} \vec{V}!}.
    \end{equation}
\end{lemma}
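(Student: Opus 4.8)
The plan is to first compute the cluster derivative of the Loschmidt echo $L(t)$ itself (rather than its logarithm), then transfer to $\log L(t)$ by the standard moment-to-cumulant inversion, and finally reorganize the resulting sum over \emph{all} cluster partitions into a sum over connected partitions weighted by Tutte polynomials.

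First I would expand $L(t) = \tr(e^{-iHt}\rho) = \sum_{m \geq 0} \frac{(-it)^m}{m!}\tr(H^m \rho)$ and apply $\mathcal{D}_\vec{V}$. Only the term $m = |\vec{V}|$ survives, and collecting the $m!/\vec{V}!$ ordered sequences whose underlying multiset is $\vec{V}$ yields the clean identity $\mathcal{D}_\vec{V} L(t) = (-it)^{|\vec{V}|} \langle h^{\vec{V}} \rangle_s$ for every cluster $\vec{V}$, connected or not. Writing $\ell_\vec{V}$ and $c_\vec{W} = \mathcal{D}_\vec{W}\log L/\vec{W}!$ for the Taylor coefficients of $L$ and $\log L$ at the monomial $\lambda^\vec{V}$, the relation $L = \exp(\log L)$ together with $\log(1+u) = \sum_{j \geq 1} \frac{(-1)^{j-1}}{j} u^j$ gives, after converting ordered tuples of clusters into multiset partitions (a tuple with $j$ factors corresponds to a partition $P$ with $|P|=j$ parts in $\frac{j!}{P!}$ ways), the inversion $c_\vec{W} = \sum_{P \in \mathcal{P}(\vec{W})} (-1)^{|P|-1} \frac{(|P|-1)!}{P!} \prod_{\vec{V} \in P} \ell_\vec{V}$, where $\mathcal{P}(\vec{W})$ denotes \emph{all} partitions of $\vec{W}$. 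Substituting $\ell_\vec{V}$ and using $\sum_{\vec{V} \in P} |\vec{V}| = m$ produces $\mathcal{D}_\vec{W} \log L = (-it)^m \sum_{P \in \mathcal{P}(\vec{W})} (-1)^{|P|-1} (|P|-1)!\, N_P(\vec{W}) \prod_{\vec{V} \in P} \langle h^{\vec{V}} \rangle_s$.

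Next I would pass from $\mathcal{P}(\vec{W})$ to $\mathcal{P}_c(\vec{W})$. The key input is that for a product state $\langle h^{\vec{V}} \rangle_s$ factorizes over the connected components of $\vec{V}$: since operators on disjoint supports commute and $\rho$ factorizes, sliding them past each other in the symmetrized sum and counting interleavings shows $\langle h^{\vec{V}} \rangle_s = \prod_{\vec{U}} \langle h^{\vec{U}} \rangle_s$ over the components $\vec{U}$ of $\vec{V}$. Writing $Q \in \mathcal{P}_c(\vec{W})$ for the refinement of $P$ into connected components, I would reindex by $Q$ and collect all $P$ with this refinement. Because components have disjoint supports, $\prod_{\vec{V} \in P} \vec{V}! = \prod_{\vec{U} \in Q} \vec{U}!$, hence $N_P(\vec{W}) = N_Q(\vec{W})\, Q!/P!$; moreover the $P$ refining to $Q$ are exactly the coarsenings that group the parts of $Q$ into pairwise non-overlapping sets, i.e.\ independent sets of the partition graph $\tilde{G}_Q$.

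The remaining and most delicate step is the purely combinatorial identity that the inner sum yields $T_{\tilde{G}_Q}(1,0)$. I would first check that the weight $Q!/P!$ exactly counts the labeled set-partitions $\pi$ of the $|Q|$ vertices of $\tilde{G}_Q$ into independent sets inducing the multiset partition $P$ (each independent block contains every cluster type at most once, so distributing the $\tilde{\mu}_Q(\vec{U})$ labeled copies of each type gives $\prod_{\vec{U}} \tilde{\mu}_Q(\vec{U})! = Q!$ assignments, divided by $P!$ for identical blocks). This reduces the inner sum to $\sum_\pi (-1)^{|\pi|-1}(|\pi|-1)!$ over independent-set partitions of $\tilde{G}_Q$. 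I would then evaluate it through the chromatic polynomial $\chi_G(x) = \sum_\pi (x)_{|\pi|}$ with $(x)_j$ the falling factorial: differentiating at $x=0$ gives $\chi_G'(0) = \sum_\pi (-1)^{|\pi|-1}(|\pi|-1)!$, while the Tutte relation $\chi_G(x) = (-1)^{|V|-1} x\, T_G(1-x,0)$ for connected $G$ gives $\chi_G'(0) = (-1)^{|V|-1} T_G(1,0)$. Since $\tilde{G}_Q$ is connected (it is a contraction of the connected graph $G_\vec{W}$) with $|V| = |Q|$, this identifies the inner sum as $(-1)^{|Q|-1} N_Q(\vec{W})\, T_{\tilde{G}_Q}(1,0)$, and assembling the pieces gives the claimed formula. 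The main obstacle is precisely this last step: matching the multiset-partition bookkeeping to labeled partitions and recognizing the signed, factorially weighted sum as the Tutte evaluation $T_{\tilde{G}_Q}(1,0)$.
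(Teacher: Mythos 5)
Your proof is correct, and it reaches the Tutte coefficient by a genuinely different route than the paper's Appendix~\ref{app:cldev}, even though the broad arc is the same (compute $\mathcal{D}_{\vec{V}} L(t) = (-it)^{|\vec{V}|} \langle h^{\vec{V}} \rangle_s$, use the product-state factorization over connected components, pass to the logarithm, reorganize over $\mathcal{P}_c(\vec{W})$). The paper expands $\log(1+z)$ directly on the cluster expansion of $L$ organized by maximal connected components and obtains the coefficient of a partition $P$ by counting assignments of its parts to $n$ ``parent'' clusters as proper colorings of $\tilde{G}_P$ with exactly $n$ colors, giving $C(P) = \frac{1}{P!} \sum_{n} \frac{(-1)^{n-1}}{n} \chi^*_{\tilde{G}_P}(n)$; converting this to $(-1)^{|P|-1} T_{\tilde{G}_P}(1,0)$ then requires Lemma~\ref{le:tutte}, proved via three auxiliary lemmas (the inclusion--exclusion between $\chi^*$ and the chromatic polynomial, the chromatic--Tutte relation of Lemma~\ref{le:chrom_tutte}, the hockey-stick identity, and the binomial cancellation of Lemma~\ref{le:binom}). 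You instead perform the generic moment-to-cumulant (M\"obius) inversion over \emph{all} multiset partitions with weight $(-1)^{|P|-1}(|P|-1)!/P!$, regroup by the connected refinement $Q$ --- your bookkeeping is sound: $\prod_{\vec{V} \in P} \vec{V}! = \prod_{\vec{U} \in Q} \vec{U}!$ holds because repeated subsystems always overlap and hence never split across components, and $Q!/P!$ correctly counts the labeled independent-set partitions $\pi$ of $\tilde{G}_Q$ inducing $P$ --- and then evaluate $\sum_\pi (-1)^{|\pi|-1} (|\pi|-1)! = \chi'_{\tilde{G}_Q}(0) = (-1)^{|Q|-1} T_{\tilde{G}_Q}(1,0)$ from the falling-factorial expansion $\chi_G(x) = \sum_\pi (x)_{|\pi|}$ together with the same chromatic--Tutte relation with $c=1$ (connectivity of $\tilde{G}_Q$ being inherited from $G_\vec{W}$, as you note). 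The two coefficient computations are in fact equal term by term --- surjective proper colorings with exactly $n$ colors number $n!$ times the independent-set partitions with $n$ blocks, so the paper's $\sum_n \frac{(-1)^{n-1}}{n} \chi^*(n)$ is literally your signed sum over $\pi$ --- but your differentiation-at-zero evaluation compresses the paper's Lemmas~\ref{le:chromatic} and~\ref{le:binom} and the hockey-stick manipulation into a one-line calculation, at the price of an extra regrouping step (all partitions to connected refinements) that the paper's parent-cluster coloring picture absorbs automatically.
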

We prove this lemma in Appendix~\ref{app:cldev}. A similar statement has been reported by Mann and Helmuth~\cite{mann2021}, with Helmuth, Perkins, and Regts pointing out the relation to the Tutte polynomial~\cite{helmuth2020}. 

We make use of Lemma~\ref{le:cldev} to derive the following upper bound on the cluster derivative.
\begin{proposition}\label{le:cldev_bound}
Let $\vec{W} \in \mathcal{G}_m$. Then,
\begin{equation}\label{eq:cldev_bound2}
   \left\vert \frac{1}{\vec{W}!} \mathcal{D}_{\vec{W}} \log L(t) \right\vert \le [2 e (\mathfrak{d}+1) |t|]^{m} .
\end{equation}
\end{proposition}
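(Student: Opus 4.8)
The plan is to take Lemma~\ref{le:cldev} as the starting point, bound the two ``physical'' ingredients on its right-hand side by elementary estimates, and then reduce the claim to a purely combinatorial count of trees. First, since $\Vert h_X \Vert = 1$ for every $X$ and $\rho$ is a state, each ordered product obeys $\left| \tr\left( h_{V_{\sigma(1)}} \cdots h_{V_{\sigma(|\vec{V}|)}} \rho \right) \right| \le 1$, so the symmetrized expectation satisfies $\left| \langle h^\vec{V} \rangle_s \right| \le 1$. Second, because the Tutte polynomial has nonnegative integer coefficients and $\tilde{G}_P$ is connected (it is obtained from the connected graph $G_\vec{W}$ by contracting vertex sets), we have $0 \le T_{\tilde{G}_P}(1,0) \le T_{\tilde{G}_P}(1,1) = \kappa(\tilde{G}_P)$, the number of spanning trees of $\tilde{G}_P$. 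Applying the triangle inequality to Lemma~\ref{le:cldev} and using $N_P(\vec{W})/\vec{W}! = 1/(P! \prod_{\vec{V}} \vec{V}!)$, the proposition reduces to the combinatorial bound
\[
  \sum_{P \in \mathcal{P}_c(\vec{W})} N_P(\vec{W}) \, \kappa(\tilde{G}_P) \le \vec{W}! \, [2 e (\mathfrak{d}+1)]^m .
\]

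Next I would interpret the left-hand side combinatorially. Labelling the $\mu_\vec{W}(X)$ copies of each subsystem, $N_P(\vec{W})$ is precisely the number of set partitions of the $m$ labelled vertices of $G_\vec{W}$ whose block type is $P$; since each block of such a partition induces a subgraph isomorphic to a connected $G_\vec{V}$, the sum counts pairs $(\pi, F)$, where $\pi$ partitions the labelled vertices into blocks each inducing a connected subgraph and $F$ is a spanning tree on the blocks (an edge of $F$ joining two overlapping blocks). The key step is then to encode each such pair injectively as a spanning tree of $G_\vec{W}$ together with a distinguished subset of its edges: within each block choose canonically a spanning tree of the induced subgraph, and for each edge of $F$ choose canonically one realizing edge of $G_\vec{W}$ between the two blocks; their union is a spanning tree $T$ of $G_\vec{W}$, and marking the inter-block edges as $M \subseteq E(T)$ lets us recover $(\pi, F)$ by deleting $M$ and contracting the resulting components. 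Since every spanning tree has $m-1$ edges, this injection yields $\sum_P N_P(\vec{W}) \kappa(\tilde{G}_P) \le 2^{m-1} \kappa(G_\vec{W})$.

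It then remains to show $\kappa(G_\vec{W}) \le \vec{W}! \, [e(\mathfrak{d}+1)]^m$. Rooting spanning trees at a fixed vertex $r$ and recording the parent of every other vertex gives an injection into parent functions, so $\kappa(G_\vec{W}) \le \prod_{c \neq r} \deg_{G_\vec{W}}(c) \le \prod_c \deg_{G_\vec{W}}(c)$. A copy of a subsystem $X_i$, present with multiplicity $\mu_i$, has degree $D_i = (\mu_i - 1) + \sum_{j \sim i} \mu_j$, where $j \sim i$ runs over the at most $\mathfrak{d}$ distinct subsystems of $\vec{W}$ that overlap $X_i$; hence $\kappa(G_\vec{W}) \le \prod_i D_i^{\mu_i}$. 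Dividing by $\vec{W}! = \prod_i \mu_i!$, using $\mu_i! \ge (\mu_i/e)^{\mu_i}$ and the concavity of the logarithm (a weighted Jensen step with weights $\mu_i/m$), one obtains
\[
  \frac{\prod_i D_i^{\mu_i}}{\prod_i \mu_i!} \le \left( \frac{e}{m} \sum_i D_i \right)^{\! m} \le [e(\mathfrak{d}+1)]^m ,
\]
where the last inequality uses the handshake-type estimate $\sum_i D_i = (m - r) + \sum_i \sum_{j \sim i} \mu_j \le m + \mathfrak{d} m = (\mathfrak{d}+1)m$, with $r \ge 1$ the number of distinct subsystems in $\vec{W}$. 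Collecting the factors $|t|^m$, $2^{m-1}$, and $[e(\mathfrak{d}+1)]^m$ then gives the stated bound.

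I expect the main obstacle to be the combinatorial core of the second and third paragraphs: the passage from the weighted partition sum to labelled set partitions decorated by a spanning tree, and especially the injective re-encoding as a single spanning tree of $G_\vec{W}$ with marked edges, which must be set up carefully enough that $(\pi, F)$ is genuinely recoverable. The subsequent averaging step is delicate because a naive per-subsystem factorization fails when a low-multiplicity subsystem neighbours a high-multiplicity one; it is only the global comparison $\sum_i D_i \le (\mathfrak{d}+1)m$, fed through Jensen's inequality, that tames the multiplicities and produces the clean constant $e(\mathfrak{d}+1)$. The factorials $\vec{W}!$, $P!$, and $\vec{V}!$ are exactly what make this averaging possible.
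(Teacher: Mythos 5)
Your proof is correct and follows essentially the same route as the paper's (Appendix~\ref{app:cldev_bound}): bounding $\vert \langle h^\vec{V} \rangle_s \vert \le 1$ and $T_{\tilde G_P}(1,0) \le T_{\tilde G_P}(1,1)$, reinterpreting $N_P(\vec{W})$ as a count of vertex partitions of $G_\vec{W}$ into connected induced subgraphs, and then bounding the number of spanning trees by a degree product plus the $\mu! \ge (\mu/e)^\mu$ and weighted-AM--GM step, exactly as in Lemmas~\ref{le:parttotrees} and~\ref{le:trees}. Your injective encoding of a pair $(\pi, F)$ as a spanning tree of $G_\vec{W}$ with a marked edge subset is just the reverse formulation of the paper's bicolored-spanning-tree argument (red edges $=$ your marked set $M$), yielding the same factor $2^{m-1}$.
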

We defer the proof to Appendix \ref{app:cldev_bound}. In rough terms, it proceeds by observing that $|\langle h^\vec{V} \rangle_s | \leq 1$ and $0 \leq T_{\tilde G_P}(1, 0) \leq T_{\tilde G_P}(1, 1)$, where $T_{\tilde G_P}(1, 1)$ is equal to the number of spanning trees of $\tilde G_P$. The sum of these trees over all partitions can then be bounded by the number of spanning trees of the original cluster graph $G_{\vec{W}}$. This number is smaller than an exponential in $m$ times $\clus{W}!$, yielding \eqref{eq:cldev_bound2}.

Having bounded each term in the cluster expansion, it remains to bound the number of terms, i.e.~the number of connected clusters of size $m$. This is done with Lemma~\ref{le:clusters} and the fact that there are $\vert S \vert$ subsystems on the lattice such that the total number is bounded by $\vert S \vert (e \mathfrak{d})^m$.

The main result of this section is the following theorem.
\begin{theorem}\label{th:taylor}
    The logarithm of the Loschmidt echo, $\log L(t)$, is analytic for $\vert t \vert < t_L^* = 1/[2 e^2 \mathfrak{d} (\mathfrak{d}+1)]$ and the truncation error of the cluster expansion can be bounded by
    \begin{align}\label{eq:taylorM}
        & \left\vert \log L(t) - \sum_{m=1}^M \sum_{\vec{W} \in \mathcal{G}_m} \frac{\lambda^\vec{W}}{\vec{W}!} \mathcal{D}_\vec{W} \log L(t) \right\vert \\
        & \hspace{5cm} \le \vert S \vert  \frac{(\vert t\vert/t_L^*)^{M+1}}{1- \vert t\vert/t_L^*}. \nonumber
    \end{align}
\end{theorem}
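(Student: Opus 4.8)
The plan is to prove both assertions at once by dominating the tail of the cluster expansion \eqref{eq:expansion} with a convergent geometric series. First I would invoke Lemma~\ref{le:connected} to discard every disconnected cluster, replacing the inner sum over $\mathcal{C}_m$ by a sum over the connected clusters $\mathcal{G}_m$. Granting (as I verify below) that the resulting series converges to $\log L(t)$, the left-hand side of \eqref{eq:taylorM} is exactly the tail $\sum_{m=M+1}^\infty \sum_{\vec{W}\in\mathcal{G}_m} (\lambda^\vec{W}/\vec{W}!)\,\mathcal{D}_\vec{W}\log L(t)$, so by the triangle inequality the task reduces to bounding a single term in absolute value and counting how many terms occur at each order $m$.

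For the magnitude of a single term, I would use $|\lambda_X|\le 1$ to obtain $|\lambda^\vec{W}|\le 1$ and then quote Proposition~\ref{le:cldev_bound} directly, giving $|\mathcal{D}_\vec{W}\log L(t)/\vec{W}!| \le [2e(\mathfrak{d}+1)|t|]^m$ for every $\vec{W}\in\mathcal{G}_m$; note this factor already carries the full $|t|^m$ dependence. For the number of terms I would count connected clusters as indicated in the text: every cluster in $\mathcal{G}_m$ contains at least one of the $|S|$ subsystems of $S$, and by Lemma~\ref{le:clusters} at most $(e\mathfrak{d})^m$ connected clusters of size $m$ contain any fixed subsystem, so summing over the choice of anchoring subsystem yields the (over)count $|\mathcal{G}_m| \le |S|(e\mathfrak{d})^m$.

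Multiplying these two bounds and summing over $m$ produces a geometric series whose common ratio is $2e(\mathfrak{d}+1)\cdot e\mathfrak{d}\cdot|t| = 2e^2\mathfrak{d}(\mathfrak{d}+1)|t| = |t|/t_L^*$. Explicitly,
\begin{equation}
\sum_{m=M+1}^\infty |\mathcal{G}_m|\,[2e(\mathfrak{d}+1)|t|]^m \le |S|\sum_{m=M+1}^\infty (|t|/t_L^*)^m = |S|\,\frac{(|t|/t_L^*)^{M+1}}{1-|t|/t_L^*},
\end{equation}
which converges precisely when $|t| < t_L^*$ and reproduces the claimed error bound \eqref{eq:taylorM}.

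Analyticity then comes essentially for free. The same estimate with $M=0$ shows the full cluster expansion converges absolutely and uniformly on every compact subset of the disk $\{|t| < t_L^*\}$; since each inner sum $\sum_{\vec{W}\in\mathcal{G}_m}(\lambda^\vec{W}/\vec{W}!)\mathcal{D}_\vec{W}\log L(t)$ is a constant multiple of $t^m$, the Weierstrass $M$-test makes the limit analytic there. Because the series is by construction the Taylor expansion of $\log L(t)$ about $t=0$, where $L(0)=\tr\rho=1\neq 0$, the limit coincides with $\log L(t)$ throughout the disk. I expect no real obstacle at this stage: the substantive combinatorial work—bounding an individual cluster derivative through spanning trees and the Tutte polynomial—has been relegated to Proposition~\ref{le:cldev_bound}, and all that remains is the geometric summation. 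The only point warranting a little care is confirming that $L(t)$ does not vanish anywhere on the disk, so that $\log L(t)$ stays analytic; this is guaranteed a posteriori, since a zero of $L(t)$ inside the radius of convergence would force a singularity of $\log L(t)$ and contradict the convergence just established.
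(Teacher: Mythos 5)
Your proposal is correct and follows essentially the same route as the paper's own proof: restrict to connected clusters via Lemma~\ref{le:connected}, bound each term with Proposition~\ref{le:cldev_bound}, count clusters as $|S|(e\mathfrak{d})^m$ via Lemma~\ref{le:clusters}, and sum the resulting geometric series with ratio $|t|/t_L^*$. Your closing discussion of analyticity and the non-vanishing of $L(t)$ (best made airtight by noting $e^{f_L(t)}$ and the entire function $L(t)$ agree near $0$, hence on the whole disk) is more careful than the paper's two-sentence proof, which simply asserts these points.
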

\begin{proof}
    Lemma~\ref{le:connected} and Propositions~\ref{le:cldev_bound}, together with the bound on the number of clusters, imply that
    \begin{align}
       \left| \sum_{\vec{W} \in \mathcal{G}_m} \frac{\lambda^\vec{W}}{\vec{W}!} \mathcal{D}_\vec{W} \log L(t) \right| \leq \vert S \vert  [2 e^2 \mathfrak{d} (\mathfrak{d}+1) |t|]^{m}.
    \end{align}
    The result then follows by considering the weight of the terms of the Taylor expansion with $m > M$.
\end{proof}

The overall argument of this section mirrors the steps of previous results on Gibbs states~\cite{haah2021,kuwahara2020_gaussian}.  Our new technical contributions are the expression for the cluster derivative in Lemma~\ref{le:cldev} and the bound in Proposition~\ref{le:cldev_bound}, for which we use a novel proof strategy based on counting trees. We note that convergence results similar to Theorem~\ref{th:taylor} can also be proved using the general framework of abstract polymer models~\cite{KoteckyPreiss,dobrushin1996estimates,mann2021}.


\subsection{Computation of the Loschmidt echo}\label{sec:classalgo}

The above Taylor approximation allows for an efficient classical estimation of the Loschmidt echo for short times. Theorem \ref{th:taylor} guarantees that to approximate $\log L(t)$ it suffices to calculate the terms in the series up to some order $M$. Recall that the $m$$^\mathrm{th}$ order term is given by
\begin{equation}\label{eq:cmcluster}
    \sum_{\vec{W} \in \mathcal{G}_m} \frac{\lambda^\vec{W}}{\vec{W}!} \mathcal{D}_\vec{W} \log L(t).
\end{equation}

\noindent As in Sec.~\ref{sec:computationObs}, we need two steps to calculate these: (i) enumerate all the connected clusters in $\mathcal{G}_m$ and (ii) compute and sum the cluster derivative for every connected cluster according to \eqref{eq:cmcluster}. Lemma~\ref{le:clusters} addresses the first step. For the second one, we need to bound the cost of computing cluster derivatives. Related bounds have previously been stated in several works~\cite{kuwahara2020_clustering,haah2021,mann2021}, where the computation hase proceeded either by directly differentiating $\log L(t)$ or by summing the terms of an expansion similar to that in Lemma~\ref{le:cldev}. Here we pursue the latter approach, stating the computational cost in the following proposition. As in Sec.~\ref{sec:obs}, we ignore complications arising due to finite numerical precision.

\begin{proposition}\label{le:algo}
    There exists a deterministic algorithm that outputs $\mathcal{D}_\vec{W} \log L(t)$ for $\vec{W} \in \mathcal{G}_m$ with running time $\exp(\mathcal{O}(m))$.
\end{proposition}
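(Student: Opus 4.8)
The plan is to compute $\mathcal{D}_\vec{W}\log L(t)$ directly from the closed-form expression of Lemma~\ref{le:cldev}, which writes it as the finite sum $(-it)^m\sum_{P\in\mathcal{P}_c(\vec{W})}(-1)^{|P|-1}N_P(\vec{W})\,T_{\tilde{G}_P}(1,0)\prod_{\vec{V}\in P}\langle h^\vec{V}\rangle_s$ over partitions of $\vec{W}$ into connected subclusters. The task then splits into two parts: enumerate all $P\in\mathcal{P}_c(\vec{W})$, and, for each one, evaluate the three factors $N_P(\vec{W})$, $T_{\tilde{G}_P}(1,0)$, and $\prod_{\vec{V}\in P}\langle h^\vec{V}\rangle_s$. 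First I would show that every individual factor is computable in time $\exp(\mathcal{O}(m))$; then I would show that the partitions themselves number at most $\exp(\mathcal{O}(m))$ and can be listed within the same budget, so that the grand total stays $\exp(\mathcal{O}(m))$.

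Turning to the factors, $N_P(\vec{W})$ is a ratio of factorials of multiplicities, each at most $m$, and is evaluated immediately. The invariant $T_{\tilde{G}_P}(1,0)$ depends only on the partition graph $\tilde{G}_P$, which has $|P|\le m$ vertices; although Tutte evaluations are hard for general graphs, on a graph with $\mathcal{O}(m)$ vertices the value $T_G(1,0)$ is obtainable in time $\exp(\mathcal{O}(m))$ (for instance through its link to the chromatic polynomial, which is computable in time exponential only in the number of vertices). The delicate factor is $\langle h^\vec{V}\rangle_s$, whose definition is a sum over $|\vec{V}|!$ orderings and therefore cannot be evaluated term by term. Here I would use that every $h_X$ with $X\in\vec{V}$ is supported on at most $k|\vec{V}|\le km$ spins and that $\rho$ is a product state, so the trace collapses onto a Hilbert space of dimension $d^{\mathcal{O}(m)}=\exp(\mathcal{O}(m))$. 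On this space, $\langle h^\vec{V}\rangle_s$ is, up to a known prefactor, the coefficient of the monomial $\lambda^\vec{V}$ in $\tr(e^{-it\sum_X\lambda_X h_X}\rho)$; this coefficient can be extracted by polynomial arithmetic truncated at total degree $|\vec{V}|$, which replaces the factorial sum by an $\exp(\mathcal{O}(m))$ computation. Precomputing $\langle h^\vec{V}\rangle_s$ once for each distinct connected subcluster $\vec{V}$ that can appear --- of which there are $\exp(\mathcal{O}(m))$ by Lemma~\ref{le:clusters} --- reduces the per-partition product to a few table lookups and multiplications.

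The main obstacle is to keep the two potential sources of combinatorial explosion, namely the number of partitions and the symmetrization inside $\langle h^\vec{V}\rangle_s$, at $\exp(\mathcal{O}(m))$ rather than at $m!$ or Bell-number scale. Both are controlled by the same features that drive Proposition~\ref{le:cldev_bound}: the bounded connectivity $\mathfrak{d}$ of the interaction graph, which keeps connected subclusters local, and the multiset (rather than labeled) structure of clusters and partitions, which identifies configurations related by permuting coincident subsystems. For the enumeration I expect the cleanest route to be to associate with each $P\in\mathcal{P}_c(\vec{W})$ a spanning forest of $G_\vec{W}$ whose trees span the blocks of $P$, and then to reuse the tree-counting bound underlying Proposition~\ref{le:cldev_bound} after quotienting by the symmetry of repeated subsystems. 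Verifying that this quotient neither overcounts nor loses partitions, and that the listing can be produced without first materializing the exponentially larger labeled structure, is the step I anticipate will require the most care.
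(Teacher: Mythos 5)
Your overall route coincides with the paper's: both evaluate the closed form of Lemma~\ref{le:cldev} partition by partition, and your treatment of two of the three factors matches the paper's in substance. For $T_{\tilde G_P}(1,0)$ the paper invokes the algorithm of Bj\"orklund \emph{et al.}, running in $\exp(\mathcal{O}(|P|))$, the same budget as your chromatic-polynomial detour; for $\langle h^{\vec{V}}\rangle_s$ the paper adapts the inclusion--exclusion identity \eqref{eq:permutation} of Appendix~\ref{app:nested}, which is essentially the coefficient-extraction idea you describe: both replace the $|\vec{V}|!$-term symmetrization by an $\exp(\mathcal{O}(m))$ computation on a Hilbert space of dimension $d^{\mathcal{O}(m)}$.

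The genuine gap is in your enumeration of $\mathcal{P}_c(\vec{W})$. The route you sketch --- generate spanning forests of $G_{\vec{W}}$ whose trees span the blocks, then quotient by the symmetry of repeated subsystems --- does not stay within $\exp(\mathcal{O}(m))$ as stated. Lemma~\ref{le:trees} only gives $T_{G_{\vec{W}}}(1,1)\le \vec{W}!\,[e(\mathfrak{d}+1)]^{m+1}$, and the factor $\vec{W}!$ can be of order $m!$: take $\vec{W}$ to be a single subsystem with multiplicity $m$, so that $G_{\vec{W}}=K_m$ and the number of spanning trees is $m^{m-2}$ by Cayley's formula. The labeled structure you would traverse before quotienting therefore has $m^{\Theta(m)}$ elements, and you supply no mechanism for generating only canonical representatives --- precisely the step you flag as delicate. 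This is also why the paper uses the tree argument only for the \emph{bound} of Proposition~\ref{le:cldev_bound}, where the offending $\vec{W}!$ is cancelled by the $1/\vec{W}!$ in the expansion, and not for the algorithm. The paper's enumeration avoids the issue entirely: assign labels $1,\ldots,m$ to the subsystems of $\vec{W}$, enumerate the $2^{m-1}$ compositions $(n_1,\ldots,n_l)$ of $m$, and for each composition grow connected subclusters $\vec{V}_i$ of size $n_i$ contained in $\vec{W}\setminus(\vec{V}_1\cup\cdots\cup\vec{V}_{i-1})$ and rooted at the smallest remaining label, using the cluster-enumeration algorithm of Lemma~\ref{le:clusters}; this lists all partitions, with removable duplicates, in time $\exp(\mathcal{O}(n_1))\cdots\exp(\mathcal{O}(n_l))=\exp(\mathcal{O}(m))$. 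Replacing your forest-quotient step with this composition-based generation closes the gap; the rest of your argument then goes through as in the paper.
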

\begin{proof}
    The algorithm evaluates the expression in Lemma~\ref{le:cldev}. There are three nontrivial contributions to the run time:
    \begin{enumerate}[(i)]
        \item Enumerating the partitions of $\vec{W}$ into connected subclusters. This takes time $\exp(\mathcal{O}(m))$ by the following algorithm. Assign to each subsystem in $\vec{W}$ a unique label from the set $\{1, 2, \ldots, m\}$. List all compositions of $m$, i.e., ordered tuples of positive integers $(n_1, n_2, \ldots, n_l)$ such that their sum equals $m$. There are $2^{m-1}$ distinct compositions, which can be enumerated in time $\exp(\mathcal{O}(m))$. For each composition, find all connected clusters $\vec{V}_1$ of size $n_1$ that are contained in $\vec{W}$ and include the subsystem labeled $1$. By Lemma~\ref{le:clusters}, this step can be carried out in time $\exp(\mathcal{O}(n_1))$ by enumerating all clusters connected to subsystem $1$ and removing the ones that are not contained in $\vec{W}$. Next, for each $\vec{V}_1$, find all connected clusters $\vec{V}_2$ of size $n_2$ that are contained in $\vec{W} \setminus \vec{V}_1$ and include the subsystem with the smallest label remaining in $\vec{W} \setminus \vec{V}_1$. This step takes a computational time $\exp(\mathcal{O}(n_2))$. We iterate this procedure, removing the new cluster $\vec{V}_i$ from the original cluster in every step until $\vec{W} = \bigcup_{i = 1}^l \vec{V}_i$. The procedure takes time $\exp(\mathcal{O}(n_1))\exp(\mathcal{O}(n_2)) \cdots \exp(\mathcal{O}(n_l)) = \exp(\mathcal{O}(m))$. The above steps produce a list of length $\exp(\mathcal{O}(m))$, which includes all desired partitions. Duplicates may appear, although these can be removed in time $\exp(\mathcal{O}(m))$.
        \item Computing the Tutte polynomial $T_{\tilde G_P}(1, 0)$, where $\tilde G_P$ has $|P| \leq m$ vertices. This can be done in time $\exp(\mathcal{O}(|P|))$ using the algorithm by Björklund \emph{et al.}~\cite{bjorklund2008}.
        \item Computing the symmetrized expectation value $\langle h^\vec{V} \rangle_s$ for $\vec{V} \in \mathcal{G}_l$ with $l \leq m$. The approach in Appendix~\ref{app:nested}---\eqref{eq:permutation} in particular---can be be adapted to carry out this computation in time $\exp( \mathcal{O}(l))$.
    \end{enumerate}
\end{proof}

With these ingredients, the main result is as follows. 
\begin{theorem}\label{th:computation}
For times $|t|<t_L^* = 1/[2 e^2 \mathfrak{d} (\mathfrak{d}+1)]$, there exists a classical algorithm with run time 
\begin{equation}\label{eq:run time}
    |S| \times \mathrm{poly}\left[ \left( \frac{|S|}{1 - |t|/t_L^*} \frac{1}{\varepsilon} \right)^{1/\log(t_L^*/|t|)} \right]
\end{equation}
that outputs $\hat f_L(t)$ such that 
$
    \vert \log L(t) - \hat f_L(t) \vert \le \varepsilon
$.
\end{theorem}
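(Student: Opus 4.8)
The plan is to assemble the algorithm from the three ingredients already in hand: the truncation bound of Theorem~\ref{th:taylor}, the cluster-counting and enumeration guarantee of Lemma~\ref{le:clusters}, and the per-cluster cost bound of Proposition~\ref{le:algo}. The algorithm outputs the truncated cluster expansion
\begin{equation}
    \hat f_L(t) = \sum_{m=1}^M \sum_{\vec{W} \in \mathcal{G}_m} \frac{\lambda^\vec{W}}{\vec{W}!} \mathcal{D}_\vec{W} \log L(t)
\end{equation}
for a suitable truncation order $M$, following the two-step procedure (i)--(ii) described above: enumerate all connected clusters of each size $m \le M$, then compute and accumulate their cluster derivatives.

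First I would fix $M$ by demanding that the right-hand side of \eqref{eq:taylorM} not exceed $\varepsilon$. Since $|t| < t_L^*$ gives $\log(t_L^*/|t|) > 0$, solving $|S| (|t|/t_L^*)^{M+1}/(1 - |t|/t_L^*) \le \varepsilon$ for $M$ yields
\begin{equation}
    M = \mathcal{O}\!\left( \frac{1}{\log(t_L^*/|t|)} \log \frac{|S|}{(1 - |t|/t_L^*)\varepsilon} \right),
\end{equation}
which guarantees $|\log L(t) - \hat f_L(t)| \le \varepsilon$ by Theorem~\ref{th:taylor}. Next I would bound the run time at each order $m \le M$. Enumerating the connected clusters of size $m$ is done by running the algorithm of Lemma~\ref{le:clusters} once for every subsystem $X \in S$ used as a seed; since the $|S|$ seeds each yield at most $(e\mathfrak{d})^m$ clusters in time $\exp(\mathcal{O}(m))$, the full list has size at most $|S| (e\mathfrak{d})^m$ and is produced in time $|S| \exp(\mathcal{O}(m))$. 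The only care needed is deduplication: a cluster of size $m$ is generated once per distinct subsystem it contains, so keeping only the copy whose seed is the smallest-indexed subsystem it contains (under a fixed ordering of $S$) removes all duplicates at negligible overhead. By Proposition~\ref{le:algo} each cluster derivative is then evaluated in time $\exp(\mathcal{O}(m))$, so the cost of order $m$ is $|S| \exp(\mathcal{O}(m))$ and the full computation costs $\sum_{m=1}^M |S| \exp(\mathcal{O}(m)) = |S| \exp(\mathcal{O}(M))$.

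Substituting the value of $M$ converts $\exp(\mathcal{O}(M))$ into $\mathrm{poly}[(\frac{|S|}{1-|t|/t_L^*}\frac{1}{\varepsilon})^{1/\log(t_L^*/|t|)}]$, reproducing \eqref{eq:run time}. I expect no deep obstacle, as the argument is a routine assembly of the prior results; the points demanding attention are purely organizational. Specifically, one must track the two distinct $|S|$ factors—one appearing inside the polynomial through the error bound of Theorem~\ref{th:taylor} and hence through $M$, the other appearing outside as the overhead of iterating the enumeration over all seed subsystems—and verify that the deduplication step leaves the per-order cost $|S|\exp(\mathcal{O}(m))$ asymptotically unchanged, so that the final run time is exactly $|S| \times \mathrm{poly}[\,\cdot\,]$ as claimed.
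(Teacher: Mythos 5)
Your proposal is correct and follows essentially the same route as the paper's proof: fix $M = \mathcal{O}\bigl(\log\frac{|S|}{(1-|t|/t_L^*)\varepsilon} / \log\frac{t_L^*}{|t|}\bigr)$ via Theorem~\ref{th:taylor}, enumerate connected clusters with Lemma~\ref{le:clusters}, and evaluate each cluster derivative via Proposition~\ref{le:algo}, giving total cost $|S|\exp(\mathcal{O}(M))$. Your additional bookkeeping on deduplication and on the two distinct $|S|$ factors is a sound (if implicit in the paper) elaboration rather than a departure.
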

\begin{proof}
    Theorem~\ref{th:taylor} implies that the truncation error of the cluster expansion is smaller than $\varepsilon$ when keeping terms up to $M > \log \frac{\vert S \vert}{(1-|t|/t_L^*) \varepsilon} /\log \frac{t_L^*}{t}$. With Lemma~\ref{le:clusters} and Proposition~\ref{le:algo}, this determines the run time in \eqref{eq:run time}.
\end{proof}

For a fixed $t < t^*_L$, the run time is polynomial in $1/\varepsilon$ as well as in the number of terms $\vert S \vert$, which is proportional to the system size $n$. The output approximates the Loschmidt echo by $\exp[\hat f_L(t)]$ with a multiplicative error
\begin{equation}
 e^{-\varepsilon} \left| e^{ \hat{f}_L(t) } \right|  \le | L(t) | \le  e^{\varepsilon} \left| e^{ \hat{f}_L(t) } \right| .
 \label{eq:multiplicative}
\end{equation}

Unlike in the case of local observables, it is in general not possible to analytically continue the Loschmidt echo beyond $t_L^*$ because the zeros of $L(t)$, and thus nonanalyticities of $\log L(t)$, may be located anywhere in the complex plane, including the real axis.


\subsection{Generalized Loschmidt echo}

We now show that similar results hold for a Loschmidt echo with multiple Hamiltonians, defined as
\begin{equation}
 L(t_1, t_2, \ldots, t_K)=L(\{t_l\})= \tr \left( \prod_{l=1}^K e^{-i H^{(l)} t_l} \rho \right).
\end{equation}
We assume that each of the $K$ Hamiltonians $\{H^{(l)} \}$ satisfies the same conditions as in Sec. \ref{sec:setup}, so that they can all be written as
\begin{equation}
    H^{(l)}=\sum_{X \in S} \lambda^{(l)}_X h_X^{(l)}.
\end{equation}

We define the set of labeled subsystems $S^K = \{(X, l) : \, X \in S, \, l \in \{1, 2, \ldots K \} \}$, where the additional index keeps track of the Hamiltonian. The corresponding interaction graph $G^K$ may be viewed as $K$ copies of the original interaction graph $G$, where vertices are connected if the subsystems overlap, independent of which copy they belong to. Note that if the maximum degree of the original interaction graph $G$ was $\mathfrak{d}$, then the maximum degree of $G^K$ is $K (\mathfrak{d} + 1) - 1$.

A cluster $\vec{W}$ is now defined as a multiset of elements from $S^K$, with the set of clusters of size $m$ denoted by $\mathcal{C}_m^K$ and the set of connected clusters by $\mathcal{G}_m^K$. The multiplicities of a subsystem and Hamiltonian label in a cluster are denoted by $\mu_\vec{W}((X, l))$. The cluster graph $G_\vec{W}^K$ is again constructed by connecting subsystems that overlap, independent of the Hamiltonian with which they are associated.

With this notation, the logarithm of the Loschmidt echo permits the cluster expansion
\begin{equation}\label{eq:taylorL}
    f_L(\{t_l\}) = \sum_{m=1}^{\infty} \sum_{\vec{W}  \in \mathcal{G}^K_m} \frac{\lambda^{\vec{W}}}{\vec{W}!} \mathcal{D}_{\vec{W}} \log L(\{t_l\}) .
\end{equation}
Here, we introduced natural generalizations of our shorthands, $\lambda^\vec{W} = \prod_{(X, l) \in S^K} \left( \lambda^{(l)}_X \right)^{\mu_\vec{W}((X, l))}$, $\Vec{W}! = \prod_{(X, l) \in S^K} \mu_\vec{W}((X, l))!$, and the cluster derivative
\begin{equation}
    \mathcal{D}_{\vec{W}} f(\{t_l\}) = \left. \left[ \prod_{(X, l) \in S^K} \left( \frac{\partial}{\partial \lambda^{(l)}_X} \right)^{\mu_{\vec{W}}((X,l))} \right] f(\{t_l\}) \right|_{\lambda = 0}.
\end{equation}

As we show in Appendix~\ref{app:generalizedL2}, the results of Sec. \ref{sec:clusterexp} carry over directly, as long as we take into account the increased maximum degree of the interaction graph $G^K$. In particular, this means that for $\vec{W} \in \mathcal{G}_m^K$,
\begin{equation}
    \label{eq:cld}
    \left| \frac{1}{\vec{W}!} \mathcal{D}_\vec{W} \log L(\{t_l\}) \right| \leq [2 e K (\mathfrak{d} + 1)]^m \prod_{l = 1}^K |t_l|^{m_l},
\end{equation}
where $m_l$ is the number of terms in $\vec{W}$ associated with $H^{(l)}$. Note the additional factor of $K$, which effectively reduces the threshold time by $1/K$.

The analysis of the computational cost is similar to that in Section~\ref{sec:classalgo}, as detailed in Appendix~\ref{app:generalizedL2}. Hence, we obtain an analog of Theorems~\ref{th:taylor} and \ref{th:computation}.
\begin{theorem}\label{th:taylorL}
    Let $\tau = K \sum_{l=1}^K \vert t_l \vert / t_L^*$, where $t_L^* = 1/[2 e^2 \mathfrak{d} (\mathfrak{d} + 1)]$. The logarithm of the generalized Loschmidt echo, $\log L(\{t_l\})$, is analytic for $\tau < 1$ and its Taylor series converges as
    \begin{align}\label{eq:taylorML}
        \left \vert \log L(\{t_l\}) - \sum_{m=1}^M \sum_{\vec{W}  \in \mathcal{G}_m^K} \prod_{l=1}^K \frac{\lambda^{\vec{W} }}{\vec{W}!} \mathcal{D}_{\vec{W}} \log L(\{t_l\}) \right \vert \nonumber \\
        \le \vert S \vert  \frac{\tau^{M+1}}{1 - \tau}.
    \end{align}
    Moreover, there exists a classical algorithm with run time 
    \begin{equation}
        \label{eq:run timeL}
        |S| \times \mathrm{poly}\left[ \left( \frac{|S|}{1 - \tau} \frac{1}{\varepsilon} \right)^{K/\log(1/\tau)} \right]
    \end{equation}
    that outputs $\hat f_L(\{t_l\})$ such that 
    $
        \vert \log L(\{t_l\}) - \hat f_L(\{t_l\}) \vert \le \varepsilon
    $.
\end{theorem}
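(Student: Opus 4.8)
The plan is to follow the blueprint of Theorems~\ref{th:taylor} and~\ref{th:computation}, replacing the interaction graph $G$ by the labeled graph $G^K$. Three ingredients are needed: that only connected clusters contribute, a bound on each cluster derivative, and a count of the connected clusters of a given size. The first is the generalization of Lemma~\ref{le:connected}: for a product state $\rho$ the Loschmidt echo factorizes across disconnected components of $G_\vec{W}^K$, because the factors $e^{-iH^{(l)}t_l}$ supported on disjoint subsystems commute and the trace splits, so $\mathcal{D}_\vec{W}\log L(\{t_l\})$ vanishes whenever $\vec{W}$ is disconnected. This justifies restricting the sum in \eqref{eq:taylorL} to $\mathcal{G}_m^K$. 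The second ingredient is the derivative bound \eqref{eq:cld}, which I take as given. The third follows from Lemma~\ref{le:clusters} applied to $G^K$, whose maximum degree is $K(\mathfrak{d}+1)-1$.

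For the analyticity and truncation bound, I would combine \eqref{eq:cld} with the cluster count. The one genuinely new combinatorial step is converting the weight $\prod_l|t_l|^{m_l}$ in \eqref{eq:cld} into the symmetric factor $(\sum_l|t_l|)^m$ that assembles $\tau$. To this end I would project every connected cluster $\vec{W}\in\mathcal{G}_m^K$ onto the cluster $\vec{W}_0\in\mathcal{G}_m$ obtained by forgetting the Hamiltonian labels. Since edges of $G^K$ depend only on subsystem overlaps, this projection induces an isomorphism $G_\vec{W}^K\cong G_{\vec{W}_0}$, so it sends connected clusters to connected clusters of the same size, and every $\vec{W}$ arises as some labeling of its image. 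Summing over labelings of a fixed $\vec{W}_0$ gives $\sum_{\text{labelings}}\prod_l|t_l|^{m_l}=(\sum_l|t_l|)^m$, whence $\sum_{\vec{W}\in\mathcal{G}_m^K}\prod_l|t_l|^{m_l}\le|\mathcal{G}_m|(\sum_l|t_l|)^m\le|S|(e\mathfrak{d})^m(\sum_l|t_l|)^m$, using $|\mathcal{G}_m|\le|S|(e\mathfrak{d})^m$ exactly as in Theorem~\ref{th:taylor}. Multiplying by the prefactor $[2eK(\mathfrak{d}+1)]^m$ from \eqref{eq:cld} collapses everything to $|S|\,\tau^m$, with $\tau=2e^2K\mathfrak{d}(\mathfrak{d}+1)\sum_l|t_l|=K\sum_l|t_l|/t_L^*$. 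Absolute, uniform convergence of $\sum_m|S|\tau^m$ for $\tau<1$ gives analyticity, and summing the geometric tail over $m>M$ yields \eqref{eq:taylorML}.

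For the algorithm I would mirror Theorem~\ref{th:computation}: pick the smallest $M$ with $|S|\tau^{M+1}/(1-\tau)\le\varepsilon$, i.e.\ $M\sim\log[|S|/((1-\tau)\varepsilon)]/\log(1/\tau)$, and evaluate the truncated series. Enumerating the connected clusters in $\mathcal{G}_m^K$ is handled by Lemma~\ref{le:clusters} on $G^K$, and each $\mathcal{D}_\vec{W}\log L(\{t_l\})$ is computed in time $\exp(\mathcal{O}(m))$ by the multi-Hamiltonian analog of Lemma~\ref{le:cldev} and Proposition~\ref{le:algo}: the partition enumeration and Tutte-polynomial evaluation are unchanged, since they depend only on the cluster graph, while the symmetrized expectation value is replaced by one that fixes the relative order of the $K$ factors and symmetrizes within each label. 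Substituting $M$ into the per-order cost and summing over $m\le M$ produces the run time \eqref{eq:run timeL}.

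I expect the main obstacle to be the bookkeeping of the $K$-dependence in the computational cost rather than any conceptual difficulty. The enhanced connectivity $\mathfrak{d}_{G^K}=K(\mathfrak{d}+1)-1$ inflates both the number of clusters per root and the enumeration time, and one must verify that the generalized cluster derivative—now an ordered product of $K$ exponentials—still admits an $\exp(\mathcal{O}(m))$ evaluation. Carrying these constants through the substitution of $M$ is what upgrades the exponent from $1/\log(1/\tau)$ in Theorem~\ref{th:computation} to $K/\log(1/\tau)$ here, so the care lies in confirming that this (deliberately loose) exponent is indeed a valid upper bound.
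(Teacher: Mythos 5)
Your proposal is correct and follows essentially the same route as the paper's Appendix on the generalized Loschmidt echo: restriction to connected clusters via factorization of $L(\{t_l\})$ for product states, the derivative bound \eqref{eq:cld} from the degree $K(\mathfrak{d}+1)-1$ of $G^K$, and the collapse of the label-dependent weights into $\bigl(\sum_l |t_l|\bigr)^m$ — your projection-and-fiber-sum argument is the same computation the paper performs with the multinomial coefficient $m!/(m_1!\cdots m_K!)$, noting only that your ``$=$'' over labelings is an ``$\le$'' at the level of distinct labeled multisets, which is harmless since you use it as an upper bound. The algorithmic analysis, including the $\exp(\mathcal{O}(M\log K))$ cost and the deliberately loose exponent $K/\log(1/\tau)$ via $\log K \le K$, likewise matches the paper's proof.
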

We highlight again that the computational cost scales polynomially with the number of Hamiltonian terms $\vert S \vert$ and the inverse error $1/\varepsilon$. The number of Hamiltonians $K$ enters in the shortened threshold time $t_L^* / K$ and in the exponent in \eqref{eq:run timeL} as a computational overhead.


\section{Further Implications}\label{sec:implications}
Besides leading to efficient classical algorithms, the convergence of the cluster expansion has important physical consequences, which we discuss below.

\subsection{Concentration bounds}

Given a quantum state $\rho$, the outcome of a measurement of an observable $A$ is a random variable. Assuming a projective measurement onto the eigenspaces of $A$, the probability of outcome $x$ is given by $\mathrm{Pr}(x) = \tr[ \Pi(x) \rho]$, where $\Pi(x)$ is the projector onto the eigenspace of $A$ with eigenvalue $x$. A concentration bound is an upper bound on the probability $\mathrm{Pr}[|x - \tr(\rho A)| \geq \delta]$, i.e., the probability that $x$ deviates from its mean $\tr(\rho A)$ by more than $\delta$. We will focus on the Hamiltonian $H = \sum_X \lambda_X h_X$ as the observable. The energy distribution of a state plays an important role in thermalization and equilibration \cite{wilming2018equilibration,kuwahara2020eigenstate} and its concentration properties place limitations on the performance of variational quantum algorithms~\cite{De_Palma_2023,AnshuConc2022}.

As a warm-up example, suppose the Hamiltonian terms $h_X$ act on distinct single sites and $\rho$ is a product state. Then, the measurement outcome of $H$ is equal to the sum of the measurement outcomes of $h_X$, which are independent random variables. The Chernoff--Hoeffding bound for independent random variables implies that
\begin{equation}
    \mathrm{Pr}[|x - \tr(\rho H)| \geq \delta] \leq 2 \exp\left( -\frac{\delta^2}{2 \sum_X \Vert h_X \Vert^2} \right).
    \label{eq:chernoff}
\end{equation}
We note that the denominator in the exponent is proportional to the system size $n$ such that deviations from the mean energy of order $\sqrt{n}$ are strongly suppressed.

This simple argument fails when the terms $h_X$ in the Hamiltonian overlap or when $\rho$ is not a product state because the outcomes of the measurements $h_X$ are no longer independent. Nevertheless, similar bounds hold for local Hamiltonians and sufficiently weakly correlated states. A number of proof techniques have been employed to establish such results  \cite{Kuwahara_2016,Anshu_2016,De_Palma_2022}, including the cluster expansion~\cite{kuwahara2020_clustering} (see also, e.g., Ref.~\cite{Neto_n__2004} for results on large deviations).

To illustrate the method, we use the cluster expansion to give a concise proof of a concentration bound for the energy of product states, reproducing the main result of \cite{Kuwahara_2016}. It follows from a standard argument (see, e.g., Corollary~1 in \cite{kuwahara2020_clustering}) that the probability of obtaining a measurement outcome $x$ that is greater than the average energy by at least $\delta$ is bounded according to
\begin{align}
    \text{Pr} \left[x - \tr(\rho H) \ge \delta \right] \le e^{- \delta \lc} \tr \left[ \rho \, e^{\lc (H-\tr(\rho H)) } \right].
    \label{eq:concentration}
\end{align}
Next, we apply Theorem~\ref{th:taylor} with $t=i \lc$, where $0 \leq \lc < t_L^*$. Taking $M=1$, the theorem implies that
\begin{equation}
     \left| \log \tr(\rho e^{\lc H}) - \lc \, \tr(\rho H) \right| \le \vert S \vert \frac{(\lc/t^*_L)^2}{1-\lc/{t^*_L}}.
\end{equation}
Substituting into \eqref{eq:concentration} yields
\begin{equation}
    \text{Pr} \left[x - \tr(\rho H) \ge \delta \right] \le \exp \left[ - \delta \lc + \vert S \vert \frac{(\lc/t^*_L)^2}{1-\lc/{t^*_L}} \right]. 
\end{equation}

Choosing $\lc= {\delta (t^*_L)^2}/{4 \vert S \vert}$ (which is always possible since $\delta / |S| \le 1$ and $t^*_L < 1$), and applying the same bound for  $\text{Pr} (x - \tr[\rho H] \le \delta)$, we obtain
\begin{equation}
     \text{Pr} [\vert x - \tr(\rho H) \vert \ge \delta] \le 2 e^{-{(\delta t^*_L)^2}/{8 \vert S \vert}}.
\end{equation}
We observe that this bound has the same dependence on $\delta$ and the system size as the bound for the special case in \eqref{eq:chernoff}.

Theorem~\ref{th:taylorL} enables us to extend the bound beyond product states. For short times, we can consider
\begin{align}
    \Big| \log \tr &\left( e^{- it H^{(1)}} \rho e^{it H^{(1)}} e^{\lc H^{(2)}} \right)  \\
    & \qquad - \lc \, \tr\left( e^{- it H^{(1)}} \rho e^{it H^{(1)}} H^{(2)} \right) \Big|, \nonumber
\end{align}
from which we obtain the following concentration bound.
\begin{corollary} \label{co:conc}
Let $H^{(1)}$ and $H^{(2)}$ be local Hamiltonians, let $\rho$ be a product state, and let $\vert t \vert \le t^*_L /7$. The probability that a projective measurement of the state $e^{-it H^{(1)}}\rho e^{it H^{(1)}}$ onto the eigenbasis of $H^{(2)}$ yields a value $x$ that deviates from the expectation value $\tr [ e^{-it H^{(1)}}\rho e^{it H^{(1)}} H^{(2)} ]$ by at least $\delta$ is bounded from above by
\begin{align}
    \mathrm{Pr}   \left[ \left\vert x - \tr\left( e^{-it H^{(1)}}\rho e^{it H^{(1)}} H^{(2)} \right) \right\vert \ge \delta \right]  \le 2 e^{\frac{- {(\delta t^*_L)^2}}{{(250)^2 \vert S \vert}}}.
\end{align}

\end{corollary}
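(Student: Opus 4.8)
The plan is to mimic the structure of the warm-up product-state concentration proof, but now using the generalized Loschmidt echo of Theorem~\ref{th:taylorL} with $K=2$ to handle the evolved state. First I would write the evolved state as $\rho(t) = e^{-it H^{(1)}} \rho e^{it H^{(1)}}$ and apply the standard exponential Markov bound: for $0 \le \lc < t_L^*$ and any $\delta$,
\begin{equation}
    \mathrm{Pr}\left[ x - \tr(\rho(t) H^{(2)}) \ge \delta \right] \le e^{-\delta \lc} \, \tr\left( \rho(t) \, e^{\lc (H^{(2)} - \tr(\rho(t) H^{(2)}))} \right),
\end{equation}
and analogously for the lower tail. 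The key observation is that $\tr(\rho(t) e^{\lc H^{(2)}}) = \tr( e^{-it H^{(1)}} \rho e^{it H^{(1)}} e^{\lc H^{(2)}})$ is precisely a generalized Loschmidt echo with $K=3$ Hamiltonians $H^{(1)}$, $-H^{(1)}$, and the imaginary-time $H^{(2)}$ (times $t_1 = t$, $t_2 = -t$, $t_3 = i\lc$), so its logarithm is controlled by Theorem~\ref{th:taylorL}.

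The main technical step is to bound $\left| \log \tr(\rho(t) e^{\lc H^{(2)}}) - \lc \tr(\rho(t) H^{(2)}) \right|$ by truncating the cluster expansion at order $M=1$. Here I must carefully identify the right value of the parameter $\tau$ in Theorem~\ref{th:taylorL}. With $K=3$ and the three times of magnitude $|t_1| = |t_2| = |t|$ and $|t_3| = \lc$, we have $\tau = K(2|t| + \lc)/t_L^* = 3(2|t| + \lc)/t_L^*$. The hypothesis $|t| \le t_L^*/7$ together with $\lc < t_L^*$ is engineered so that $\tau$ stays bounded below $1$; I would verify that $2|t| + \lc \le 2 t_L^*/7 + \lc$ and choose $\lc$ small enough (proportional to $\delta t_L^2/|S|$ as below) to keep $\tau$ safely away from $1$. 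The $M=1$ truncation of \eqref{eq:taylorML} isolates the linear-in-$\lc$ term, whose coefficient is exactly $\tr(\rho(t) H^{(2)})$ (the terms linear in $t_3$ with no dependence on $t_1, t_2$ reproduce the mean, while the first-order terms in $t_1$ or $t_2$ alone vanish since $\log L$ at $\lc = 0$ is $\log \tr(\rho) = 0$), giving
\begin{equation}
    \left| \log \tr(\rho(t) e^{\lc H^{(2)}}) - \lc \tr(\rho(t) H^{(2)}) \right| \le |S| \frac{\tau^2}{1 - \tau}.
\end{equation}

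Substituting this into the Markov bound yields $\mathrm{Pr}[x - \tr(\rho(t)H^{(2)}) \ge \delta] \le \exp[-\delta \lc + |S| \tau^2/(1-\tau)]$, and the remaining work is the optimization over $\lc$. I would set $\lc$ proportional to $\delta (t_L^*)^2 / |S|$, exactly as in the warm-up, chosen so that the negative linear term dominates the quadratic correction by a constant factor; the constant $250$ in the final bound emerges from tracking the factor $K=3$, the $1/(1-\tau)$ denominator, and the numerical constants in $\tau$. Applying the identical argument to the lower tail and adding the two probabilities produces the factor of $2$. The main obstacle I anticipate is the bookkeeping in the optimization: one must simultaneously ensure that the chosen $\lc$ keeps $\tau < 1$ (using $\delta/|S| \le 1$ and $t_L^* < 1$ as in the warm-up) and that the resulting exponent simplifies to the clean Gaussian form $-(\delta t_L^*)^2/(250^2 |S|)$, which requires being somewhat generous with the constants rather than optimizing them tightly.
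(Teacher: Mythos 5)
Your overall skeleton matches the paper's proof: the exponential Markov bound, the identification of $\tr(\rho(t)\, e^{\lc H^{(2)}})$ as a generalized Loschmidt echo with $K=3$ Hamiltonians (your opening mention of $K=2$ is a slip you later correct), and the final optimization $\lc \propto \delta (t_L^*)^2/\vert S \vert$ with the two tails added. However, your central technical step contains a genuine gap. Truncating the cluster expansion at total order $M=1$ via \eqref{eq:taylorML} does \emph{not} isolate $\lc\, \tr(\rho(t) H^{(2)})$. The size-one clusters give $\lc\, \tr(\rho\, H^{(2)})$ — the mean in the \emph{initial} product state — plus cancelling $\pm t$ terms from $H^{(1)}$; the evolved-state mean $\lc\, \tr(\rho(t) H^{(2)})$ is itself an infinite series, collecting all clusters with exactly one $H^{(2)}$ element ($m_2 = 1$) and arbitrarily many $H^{(1)}$ elements, hence clusters of every size $m$. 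Your parenthetical "the terms linear in $t_3$ with no dependence on $t_1, t_2$ reproduce the mean" conflates $\tr(\rho H^{(2)})$ with $\tr(\rho(t) H^{(2)})$, which differ whenever $H^{(1)}$ and $H^{(2)}$ fail to commute.

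Even setting aside that misidentification, the resulting error estimate $\vert S\vert\, \tau^2/(1-\tau)$ with $\tau = 3(2\vert t\vert + \lc)/t_L^*$ is quantitatively useless here, because $\tau$ does not vanish as $\lc \to 0$: at $\vert t \vert = t_L^*/7$ one already has $\tau \geq 6/7$, so the correction is $\Theta(\vert S \vert)$, and since $\delta \lc \leq \vert S \vert t_L^* < \vert S \vert$ the exponent $-\delta\lc + \Theta(\vert S\vert)$ never becomes negative — the Markov bound is vacuous. What the concentration argument requires is an error that is $O(\lc^2)$ \emph{uniformly in} $t$. The paper achieves this by organizing the expansion not by total cluster size but by $m_2$, the power of $\lc$: the $m_2 = 0$ clusters sum to $f_L(\{t, 0, -t\}) = \log \tr(\rho) = 0$; the $m_2 = 1$ clusters sum to exactly $\lc\, \tr(\rho(t) H^{(2)})$, which one may take to vanish by shifting $H^{(2)}$; and the remaining $m_2 \geq 2$ tail is bounded term-by-term via \eqref{eq:cld} together with a multinomial count, giving $\vert S\vert \sum_{m \geq 2} [6 e^2 \mathfrak{d}(\mathfrak{d}+1)]^m\, \lc^2 m^2 (2\vert t\vert + \lc)^{m-2}$, which is genuinely quadratic in $\lc$ and summable precisely under the hypothesis $\vert t \vert \leq t_L^*/7$ (whence the factor $(1/7 - 3\lc/t_L^*)^{-3}$ and, after choosing $\lc = \eta\, \delta (t_L^*)^2/\vert S\vert$ with $\eta = 1/(90 \times 7^3)$, the constant $250$). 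To repair your proof you would need to replace the order-$M$ truncation by this $m_2$-resummation; the rest of your outline then goes through.
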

The proof is shown in Appendix \ref{app:concentration}. The corollary shows that after evolving a product state for a short time under a local Hamiltonian, the energy distribution with respect to a (possibly different) local Hamiltonian is concentrated around the mean. Previous results along these lines cover many relevant cases but not time-evolved product states~\cite{Kuwahara_2016,Anshu_2016,kuwahara2020_gaussian,AnshuConc2022,De_Palma_2022}. Since Theorem \ref{th:taylorL} works for any number of Hamiltonians, this corollary can also be extended to states of the form $ e^{-it H^{(1)}} \cdots e^{-it H^{(K)}}\rho e^{it H^{(K)}} \cdots e^{-it H^{(1)}}$, which often feature in variational quantum algorithms, for a correspondingly shorter threshold time.

\subsection{Dynamical phase transitions}

 Dynamical phase transitions (DPTs) may be viewed as a real-time analog of thermal phase transitions~\cite{heyl2015}. The cluster expansion naturally constrains the time at which they can appear. Let us consider an infinite sequence of local Hamiltonians $H_n$ on $n$ particles under the assumptions of Section \ref{sec:setup} and product states $\ket{\Phi}=\otimes_{i=1}^n \ket{\phi}_i$, such that $g_{n}(t)\equiv \log \bra{\Phi} e^{-i tH_{n}} \ket{\Phi}$. A DPT occurs when the following function is nonanalytic~\cite{heyl2018}:
\begin{equation}
    G(t)=\lim_{n \rightarrow \infty} \frac{g_n(t)}{n}.   
\end{equation}

The following result is a consequence of Theorem \ref{th:taylor}.
\begin{corollary}\label{co:DPTs}
$G(t)$ is analytic for $t < t^*_L$, and thus DPTs can occur only at later times.
\end{corollary}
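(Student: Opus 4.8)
The plan is to recognize that $g_n(t) = \log \langle \Phi | e^{-itH_n} | \Phi \rangle$ is precisely $\log L(t)$ for the $n$-particle system with product state $\ket{\Phi}$, so that Theorem~\ref{th:taylor} applies verbatim to each $g_n$. This delivers two facts at once. First, every $g_n$ is analytic on the complex disk $|t| < t_L^*$. Second, taking $M = 0$ in the truncation bound of Theorem~\ref{th:taylor}, where the sum over clusters is empty, gives the uniform estimate $|g_n(t)| \le |S| \, (|t|/t_L^*)/(1 - |t|/t_L^*)$. Dividing by $n$ and using that $|S| = \mathcal{O}(n)$ (each Hamiltonian term overlaps with only a constant number of others), the functions $g_n/n$ are bounded on every closed subdisk $|t| \le r < t_L^*$ by a constant independent of $n$.

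The essential subtlety is that analyticity of each $g_n/n$ does \emph{not} by itself transfer to the limit $G(t)$, since pointwise limits of analytic functions need not be analytic. The resolution is to combine the local uniform boundedness just obtained with the pointwise convergence $g_n(t)/n \to G(t)$, which holds for real $t \in (-t_L^*, t_L^*)$ by the very definition of $G$. I would then invoke the Vitali--Porter theorem (equivalently, Montel's theorem on normal families together with uniqueness of the limit): a sequence of holomorphic functions on a domain that is locally uniformly bounded and converges pointwise on a subset having an accumulation point in the domain converges locally uniformly to a holomorphic function on the whole domain. The real interval $(-t_L^*, t_L^*)$ has accumulation points inside the disk $|t| < t_L^*$, so the hypotheses are satisfied.

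Applying this theorem produces a holomorphic function $\tilde G$ on $|t| < t_L^*$ with $g_n/n \to \tilde G$ locally uniformly, and $\tilde G$ agrees with $G$ along the real axis. Hence $G$ is the restriction of a holomorphic function and is therefore analytic on $(-t_L^*, t_L^*)$. Since a dynamical phase transition is by definition a nonanalyticity of $G$, none can occur for $|t| < t_L^*$, which is exactly the claim.

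I expect the main obstacle to be conceptual rather than computational: one must resist concluding analyticity of $G$ directly from analyticity of the individual $g_n$, and instead supply the uniform-in-$n$ control that justifies passing to the limit. All the quantitative input needed for that control is already packaged in the $M = 0$ case of Theorem~\ref{th:taylor} together with the linear scaling $|S| = \mathcal{O}(n)$; the remaining step is simply citing the correct normal-families statement.
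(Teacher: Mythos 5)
Your proof is correct, but it follows a genuinely different route from the paper's. The paper works at the level of the cluster expansion itself: it writes $g_n(t)/n$ as the series $\sum_m n^{-1}\sum_{\vec{W}\in\mathcal{G}_m}\frac{\lambda^\vec{W}}{\vec{W}!}\mathcal{D}_\vec{W}\log L_n(t)$, bounds the $m$-th term uniformly in $n$ by $C e\mathfrak{d}\,(|t|/t_L^*)^m$ using $|S|/n = C = \mathcal{O}(1)$, and invokes Tannery's theorem to interchange $\lim_{n\to\infty}$ with the sum over $m$, so that $G(t)$ is exhibited as a convergent series of termwise limits. You instead stay at the level of the functions $g_n$: the $M=0$ case of Theorem~\ref{th:taylor} gives the locally uniform bound $|g_n(t)|/n \le (|S|/n)\,(|t|/t_L^*)/(1-|t|/t_L^*)$ on the complex disk, and the Vitali--Porter theorem upgrades pointwise convergence on the real interval to locally uniform convergence to a holomorphic limit on the whole disk. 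The trade-off is instructive: the paper's Tannery argument is more constructive (it produces an explicit order-by-order series for $G$, with each coefficient a thermodynamic limit of cluster sums) but implicitly requires that each per-order limit $\lim_n n^{-1}\sum_{\vec{W}\in\mathcal{G}_m}\cdots$ exists, whereas your normal-families argument needs only the existence of $\lim_n g_n(t)/n$ on the real interval --- which is already presupposed in the definition of a DPT --- and in return delivers the stronger conclusion of locally uniform convergence and holomorphy of $G$ on the full complex disk $|t|<t_L^*$, not merely real-analyticity. Both proofs draw exactly the same quantitative input from Theorem~\ref{th:taylor} and the linear scaling $|S|=\mathcal{O}(n)$, and your emphasis on why pointwise analyticity of the $g_n$ alone does not suffice is precisely the conceptual point the uniform bound is there to address.
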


The proof is shown in Appendix \ref{app:DPTs}. The time scale $t^*_L$ is hence a universal lower bound on the time at which dynamical phase transitions occur. Nonanalyticities can appear in the logarithm of the Loschmidt echo at times $t \sim \mathcal{O}(1)$ as can be seen from the simple case of noninteracting spins. This has also been demonstrated analytically for particular interacting models in one dimension~\cite{piroli2018}. This shows that $t^*_L$ in Theorem \ref{th:taylor} can be increased at most by a constant factor. Note also that Theorem \ref{th:taylorL} allows us to extend this corollary to some Floquet systems. The absence of dynamical phase transitions at short times is analogous to the fact that thermal phase transitions can only occur above some threshold inverse temperature $\beta^*$ that depends on the details of the system.


\subsection{Quantum speed limits}

A quantum speed limit (QSL) is a bound on the time $t_\mathrm{QSL}$ that it takes for a state $\Phi$ evolving under a Hamiltonian $H$ to become orthogonal to itself. Formally,
\begin{equation}
    t_\mathrm{QSL}= \min \{t : \bra{\Phi} e^{-it H} \ket{\Phi}=0\}.
\end{equation}
The best-known general limits are the  Mandelstam-Tamm and the Margolus-Levitin bounds \cite{mandelstam1991,margolus1998}. When combined, they read
\begin{equation}\label{eq:QSL}
    t_\mathrm{QSL} \ge \frac{\pi}{2} \max \left\{ \frac{1}{\Delta H},\frac{1}{\langle H \rangle} \right\},
\end{equation}
where $\langle H \rangle= \bra{\Phi} H \ket{\Phi}$, $(\Delta H)^2 =  \bra{\Phi}(H-\langle H \rangle )^2 \ket{\Phi}$, and we assume that all eigenvalues of $H$ are positive. In many-body systems, however, we typically have that $\Delta H \sim {n}^{1/2}$ and $\langle H \rangle \sim n$, so that the bound vanishes with system size. A simple consequence of Theorem \ref{th:taylor} gives a significant improvement on the bound.
\begin{corollary} Let $H$ be local and let $\ket{\Phi}$ be a product state. Then, $
    t_\mathrm{QSL} \ge t^*_L$.
\end{corollary}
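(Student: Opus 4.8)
The plan is to read the claim off directly from the analyticity of $\log L(t)$ established in Theorem~\ref{th:taylor}, using the elementary fact that the exponential function has no zeros. By definition, $t_\mathrm{QSL}$ is the smallest positive time at which the Loschmidt echo $L(t) = \bra{\Phi} e^{-itH} \ket{\Phi}$ vanishes, so it suffices to show that $L(t) \neq 0$ for every $t$ with $|t| < t^*_L$, as this forces the first zero to occur no earlier than $t^*_L$.

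First I would observe that $L(0) = \inner{\Phi}{\Phi} = 1 \neq 0$, which fixes the branch of the logarithm via $\log L(0) = 0$ and places the origin safely inside the domain of analyticity. Theorem~\ref{th:taylor} then guarantees that the cluster expansion of $\log L(t)$ converges to a finite analytic function on the disk $|t| < t^*_L$; in particular, $\log L(t)$ takes a finite value at every such $t$.

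The key step is to exponentiate. Since $L(t) = \exp[\log L(t)]$ and the exponential of a finite complex number is never zero, the finiteness of $\log L(t)$ on the whole disk immediately yields $L(t) \neq 0$ for all $|t| < t^*_L$. Consequently, the earliest vanishing of the Loschmidt echo cannot occur before $t^*_L$, which is exactly the bound $t_\mathrm{QSL} \ge t^*_L$.

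The one point requiring care---and the step I would treat most cautiously---is the apparent circularity in invoking ``$\log L(t)$'' before nonvanishing of $L(t)$ is known. I would resolve this by taking the convergent cluster expansion of Theorem~\ref{th:taylor} as the primary object: it defines a finite analytic function $f_L(t)$ on the disk, and the theorem asserts the identity $L(t) = e^{f_L(t)}$. Because $L(t)$ is entire in $t$ for finite $n$ and the two sides agree in a neighborhood of the origin, this identity extends throughout $|t| < t^*_L$, where its right-hand side is manifestly nonzero. This renders the argument non-circular and completes the proof of the corollary.
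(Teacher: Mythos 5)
Your proposal is correct and takes essentially the same route as the paper, which likewise deduces $\vert\bra{\Phi}e^{-itH}\ket{\Phi}\vert > 0$ for $\vert t\vert < t^*_L$ directly from the analyticity of $\log L(t)$ in Theorem~\ref{th:taylor} (the paper phrases it contrapositively, via the nonanalyticity of $\log x$ at $x=0$, where you exponentiate the convergent series instead). Your explicit handling of the apparent circularity---treating the convergent cluster expansion $f_L(t)$ as the primary object and extending the identity $L(t) = e^{f_L(t)}$ from a neighborhood of the origin---is a careful rendering of a point the paper leaves implicit, not a different argument.
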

\begin{proof}
Theorem \ref{th:taylor} shows that the logarithm of the fidelity is analytic for $t < t^*_L$. Since $\log(x)$ is nonanalytic at $x=0$, this means that $\vert \bra{\Phi} e^{-itH} \ket{\Phi} \vert >0$ for $t < t^*_L$.
\end{proof}
Alternatively, the QSL also follows from the explicit lower bound in \eqref{eq:multiplicative}. By truncating the cluster expansion at order $M = 2$, we obtain the lower bound
\begin{equation}
    \vert \bra{\Phi} e^{-itH} \ket{\Phi} \vert \geq \exp\left[ -|S| \frac{(|t|/t_L^*)^4 }{ 1 - (|t|/t_L^*)^2 } \right] e^{- \Delta H^2 t^2 / 2}
    \label{eq:qsl_bound}
\end{equation}
for all $|t| < t_L^*$. The dependence on $|t|/t_L^*$ in the first exponential is better than in Theorem~\ref{th:taylor} because all odd orders in the cluster expansion are purely imaginary and therefore do not contribute to the absolute value $| \exp[\hat{f}_L(t)] |$.

This result shows that the well-known QSLs of \eqref{eq:QSL} do not give very tight bounds for product states evolving under local Hamiltonians. Let us remark that even if the fidelity does not become zero at early times, it does generically quickly become exponentially small in system size \cite{campos_venuti2010,alhambra2020}, which can also be seen from the upper bound in \eqref{eq:multiplicative}.


\section{Summary and Outlook}\label{sec:conclusion}

We showed that the cluster expansion of many dynamical quantities converges at short times, yielding efficient classical approximation algorithms as a by-product. We described the implications for the complexity of quantum dynamics and discussed consequences for concentration bounds, which are linked to the performance of variational quantum algorithms~\cite{DePalmaLimitations,AnshuConc2022}, dynamical phase transitions, and quantum speed limits.

The proof strategy of our main results is based on counting the number of clusters that participate and bounding their individual contributions to the sum. This last step diverges from established convergence proofs of cluster expansions in the literature on abstract polymer models~\cite{KoteckyPreiss,dobrushin1996estimates,Fernandez_2007, friedli2017}, which are based on iterative arguments. We follow more closely recent papers with results on Gibbs states~\cite{kuwahara2020_clustering,kuwahara2020_gaussian,haah2021}, although we use an alternative expression for the cluster derivative involving the Tutte polynomial of the partition graph, which may be of independent interest.

Our work opens the door to many future research directions. It will be interesting to explore the optimality of our algorithms. For example, is it possible to improve the time dependence of Theorem~\ref{th:algoObs} to the one given by the Lieb-Robinson bound (i.e. $e^{\mathcal{O}(t^D)}$ in $D$ dimensions) while retaining the polynomial dependence the inverse approximation error? Similarly, we may ask whether it is possible to extend the concentration bound, Corollary~\ref{co:conc}, to longer times. Related bounds on moments of the distribution have already been shown to hold for times up to $\mathcal{O}(\log n)$~\cite{Moosavian_2022}. One could also address in this context whether a sharp breakdown of Gaussian concentration occurs at longer times, which may be related to dynamical phase transitions.

From a numerical perspective, our algorithms should be compared in practice to existing approaches such as the closely related numerical linked-cluster expansion~\cite{NLCE-Rigol}, which has also been used to approximate quantum dynamics~\cite{QuenchesRigol2014,WhiteHazzard2017,Gan_2020,RichterNLC}.  Other related methods are cluster expansions with tensor-network representations \cite{molnar2015,vanhecke2021} and schemes based on operator-basis expansions~\cite{White2018,Klein_Kvorning_2022,von_Keyserlingk_2022}. While our approach can be adapted to evolution under local Lindbladians by vectorizing the density operator, there is no obvious way in which noise improves convergence of the cluster expansion. It is unclear whether this happens for particular noise models, which could have significant implications on the classical simulation of noisy quantum circuits~\cite{AharonovNoisy,StilckFranca2021,Guillermo} and on the assessment of quantum advantage of noisy, intermediate-scale quantum (NISQ) simulators~\cite{Daley2022}. 

We have shown that the cluster expansion is useful not only for the study of systems in thermal equilibrium, but also for dynamical problems. The cluster expansion enables us to establish the classical approximability of continuous dynamics for short times, similar to previous results for quantum circuits~\cite{Bravyi_2021}. It complements other locality-based methods such as those derived from Lieb--Robinson bounds, which have led to important results in both dynamical and equilibrium systems~\cite{Hastings2010Lieb}. We hope that our work stimulates further research into these techniques and into how they can help us understand other aspects of quantum many-body problems.

\acknowledgements
We thank P.~F.~Wild and J.~I.~Cirac for insightful discussions. AMA acknowledges support from the Alexander von Humboldt foundation. DSW has received funding from the European Union’s Horizon 2020 research and innovation programme under the Marie Skłodowska-Curie Grant Agreement No.~101023276.

\bibliography{bibliography}

\appendix
\widetext

\section{Computation of nested commutators}\label{app:nested}
We show in this appendix that the nested commutator in \eqref{eq:commutator} can be numerically evaluated in time $\exp(\mathcal{O}(M))$. We start by expanding the commutator as
\begin{align}
    &\sum_{\sigma \in S_m} \comm{h_{W_{\sigma(1)}}} {\comm{h_{W_{\sigma(2)}}}{\cdots \comm{h_{W_{\sigma(m)}}}{A}}} \nonumber\\
    & \hspace{4cm} = \sum_{I \subseteq [m]} (-1)^{m-l} \binom{m}{l} \left( \sum_{\sigma \in S_{l}} h_{W_{I_{\sigma(1)}}} \cdots h_{W_{I_{\sigma(l)}}} \right) A \left( \sum_{\sigma \in S_{m-l}}  h_{W_{J_{\sigma(1)}}} \cdots h_{W_{J_{\sigma(m-l)}}} \right),
\end{align}
where the sum over $I$ runs over all subsets of $[m] = \{1, 2, \ldots, m\}$, $l = |I|$, and $J = [m] \setminus I$. The elements of $I$ and $J$ are labeled by $I_1, I_2, \ldots, I_l$ and $J_1, J_2, \ldots, J_{m-l}$ with some arbitrary ordering. Following reference~\cite{mann2021}, we use an inclusion--exclusion argument to rewrite each sum over permutations using the identity
\begin{equation}
    \label{eq:permutation}
    \sum_{\sigma \in S_{l}} h_{W_{{\sigma(1)}}} \cdots h_{W_{{\sigma(l)}}} = \sum_{I \subseteq [l]} (-1)^{l - |I|} \left( \sum_{i \in I} h_{W_i} \right)^l.
\end{equation}
Hence,
\begin{align}
    \sum_{\sigma \in S_m} \comm{h_{W_{\sigma(1)}}} {\comm{h_{W_{\sigma(2)}}}{\cdots \comm{h_{W_{\sigma(m)}}}{A}}} = \sum_{I \subseteq [m]} (-1)^{l} \binom{m}{l} \sum_{I' \subseteq I} \sum_{J' \subseteq J} (-1)^{|I'| + |J'|} \left( \sum_{i \in I'} h_{W_i}\right)^{l} A \left( \sum_{i \in J'} h_{W_i} \right)^{m-l}.
\end{align}
We highlight that the sum over the subsets of $[m]$ involves $2^m$ terms as opposed to the $m!$ terms of the sum over the permutations in $S_m$. For a $k$-local Hamiltonian, the sums over $h_{W_i}$ result in an operator that has support on at most $k m$ spins. The writing down of these operators on the relevant subspace can be achieved in time $\mathrm{poly}(d^{k m}) = \exp(\mathcal{O}(m))$. We can also raise them to the $k^\mathrm{th}$ power with similar computational effort by diagonalizing the operators and powering the eigenvalues. Multiplying the resulting operators by $A$ and $\rho$ and taking the trace again incurs a computational cost with the same asymptotic dependence on $m$. Finally we need to perform the sums over $I$, $I'$, and $J'$. There is a total of $4^m$ terms in these sums such that the overall computational effort indeed scales as $\exp(\mathcal{O}(m))$.


\section{Loschmidt echo}

\subsection{Illustrative example\label{sec:example}}
    In this appendix, we describe the lowest-order terms of the cluster expansion for the Loschmidt echo of a 2-local Hamiltonian. We emphasize that this example serves merely an illustrative purpose. Our results hold for the much broader class of Hamiltonians defined in section~\ref{sec:setup}.

    The spins in this example are arranged on a square lattice as indicated by the black circles in \figref{fig:appendix}(a). The Hamiltonian is assumed to be a sum of terms that act on nearest-neighbor pairs. We note that the Hamiltonian may also include single-spin terms as these can be absorbed into the 2-local interaction terms. Each interaction term of the Hamiltonian is represented in \figref{fig:appendix}(a) by a light-blue diamond placed on the edges of the square lattice. To construct the interaction graph $G$, we connect two diamonds if their associated edges share a spin (dashed lines in \figref{fig:appendix}(a)).

    \begin{figure}[h]
        \centering
        \includegraphics{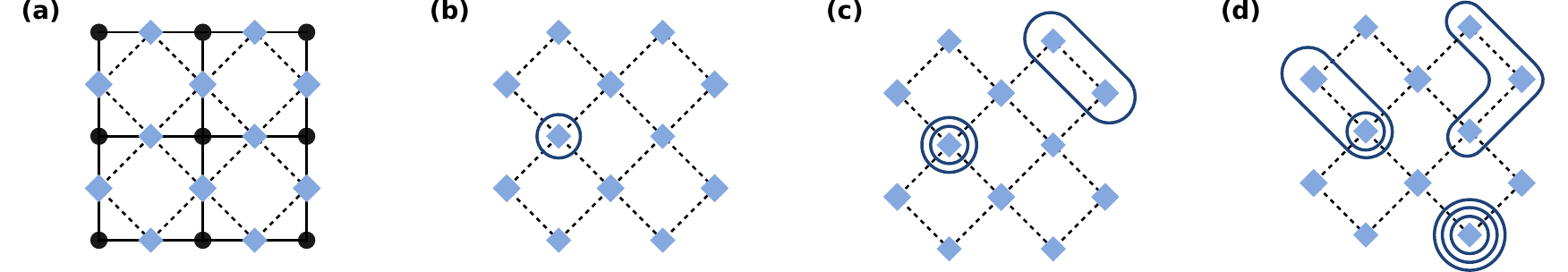}
        \caption{(a)~Spins (black circles) on a square lattice. The nearest-neighbor interaction is indicated by the light-blue diamonds. The dashed lines correspond to edges of the interaction graph. (b)--(d)~Illustrations of all connected clusters of size 1, 2, and 3, respectively, up to rotations and translations. The number of times an interaction term appears in a cluster is determined by how many dark-blue lines enclose it.}
        \label{fig:appendix}
    \end{figure}

    Connected clusters correspond to connected subgraphs of the interaction graphs. All possible connected clusters up to translations and rotations of sizes $1$, $2$, and $3$ are shown in \figref{fig:appendix}(b)--(c).

    To gain intuition for what kind of terms appear in the cluster expansion, we consider the Taylor-series expansion of $\log \langle e^{- i H t} \rangle$ around $t=0$. To third order,
    \begin{equation}
        \log \langle e^{- i H t} \rangle = - i \langle H \rangle t - \frac{1}{2} \left( \langle H^2 \rangle - \langle H \rangle^2 \right) t^2 + \frac{i}{6} \left( \langle H^3 \rangle - 3 \langle H \rangle \langle H^2 \rangle + \langle H \rangle^3 \right) t^3 + O(t^4).
    \end{equation}
    The terms in the series take the form of cumulants. By assumption, the expectation value is with respect to a product state. This leads to a great number of cancellations when substituting in $H = \sum_X \lambda_X h_X$ because expectation values of nonoverlapping terms factorize. For example, the term at second order simplifies to
    \begin{equation}
        \langle H^2 \rangle - \langle H \rangle^2 = \sum_{X} \lambda_X^2 \langle h_X^2 \rangle + \sum_{\substack{X \neq Y\\X \cap Y \neq \emptyset}} \lambda_X \lambda_Y \langle h_X h_y \rangle.
    \end{equation}
    We recognize the two sums as the two distinct types of connected clusters in \figref{fig:appendix}(c), whereas all disconnected clusters cancel. Our formalism of the cluster expansion enables efficient book-keeping of these cancellations at arbitrary order.

\subsection{Proof of Lemma \ref{le:cldev}}\label{app:cldev}
Formally, the cluster expansion of the Loschmidt echo is given by
\begin{equation} 
    L(t) = 1 + \sum_{m \geq 1}  \sum_{\vec{W} \in \mathcal{C}_m} \frac{\lambda^\vec{W}}{\vec{W}!} \mathcal{D}_\vec{W} L(t).
\end{equation}
Note that the sum at this point includes both connected and disconnected clusters.
The cluster derivative of $ L(t)$ can be written as
\begin{equation}
    \label{eq:dev_echo}
    \mathcal{D}_\vec{W} L(t) = (-i t)^{|\vec{W}|} \langle h^\vec{W} \rangle_s,
\end{equation}
where we remind the reader that the symmetrized expectation value is defined by
\begin{equation}
    \langle h^\vec{W} \rangle_s = \frac{1}{m!} \sum_{\sigma \in S_m} \tr \left( h_{{W}_\sigma(1)} h_{{W}_\sigma(2)} \cdots  h_{{W}_\sigma(m)} \rho \right),
\end{equation}
in which $m = |\vec{W}|$. The symmetrized expectation value has the property that it factorizes when $\vec{W}$ is disconnected. Denote by $P_{c, \mathrm{max}}(\vec{W}) \in \mathcal{P}_c(\vec{W})$ the partition of $\vec{W}$ into its maximal connected components. Then,
\begin{equation}
    \langle h^\vec{W} \rangle_s = \prod_{\vec{V} \in P_{c, \mathrm{max}}(\vec{W})} \langle h^\vec{V} \rangle_s
\end{equation}
and
\begin{equation}
    \frac{\lambda^\vec{W}}{\vec{W}!} \mathcal{D}_\vec{W} L(t) = \prod_{\vec{V} \in P_{c, \mathrm{max}}(\vec{W})} \frac{\lambda^\vec{V}}{\vec{V}!} \mathcal{D}_\vec{V} L(t).
\end{equation}
The cluster expansion of the Loschmidt echo becomes
\begin{equation}
    \label{eq:LEpoly}
    L(t) = 1 + \sum_{m \geq 1}  \sum_{\vec{W} \in \mathcal{C}_m} \prod_{\vec{V} \in P_{c, \mathrm{max}}(\vec{W})} \frac{\lambda^\vec{V}}{\vec{V}!} \mathcal{D}_\vec{V} L(t).
\end{equation}

At this point, we take the logarithm, which we again express in terms of its formal Taylor series,
\begin{equation}
    \log (1 + z) = \sum_{n=1}^\infty \frac{(-1)^{n-1}}{n} z^n.
    \label{eq:log}
\end{equation}
Combining \eqref{eq:LEpoly} and \eqref{eq:log} yields
\begin{equation}
    \label{eq:logl_taylor}
    \log L(t) = \sum_{n = 1}^\infty \frac{(-1)^{n-1}}{n} \sum_{m_1, \ldots, m_n \geq 1} \sum_{\substack{\vec{W}_1 \in \mathcal{C}_{m_1}\\ \cdots\\ \vec{W}_n \in \mathcal{C}_{m_n}}} \left( \prod_{\vec{V}_1 \in P_{c, \mathrm{max}}(\vec{W}_1)} \frac{\lambda^{\vec{V}_1}}{\vec{V}_1!} \mathcal{D}_{\vec{V}_1} L(t) \right) \cdots \left( \prod_{\vec{V}_n \in P_{c, \mathrm{max}}(\vec{W}_n)}  \frac{\lambda^{\vec{V}_n}}{\vec{V}_n!} \mathcal{D}_{\vec{V}_n} L(t) \right).
\end{equation}
We can rearrange the sums to first sum over all clusters $\vec{W}$ before considering decompositions of $\vec{W}$ into connected clusters $\vec{V}$:
\begin{equation}
    \log L(t) =  \sum_{m \geq 1} \sum_{\vec{W} \in \mathcal{G}_m} \sum_{P \in \mathcal{P}_c(\vec{W})} C(P) \prod_{\vec{V}\in P}\frac{\lambda^\vec{V}}{ \clus{V}  !} \mathcal{D}_\vec{V} L(t) .
\end{equation}
Lemma~\ref{le:connected} allowed us to impose that $\vec{W}$ be connected. The coefficient $C(P)$ can be determined by considering the different ways in which the partition $P$ can arise from the clusters $\vec{W}_1, \vec{W}_2, \ldots, \vec{W}_n$ in \eqref{eq:logl_taylor}, such that $P = \bigcup_{i = 1}^n P_{c, \mathrm{max}}(\vec{W}_i)$. Different elements of $P$ can belong to the same ``parent'' cluster $\vec{W}_i$ if they do not overlap. It is hence possible to construct assignments of all $\vec{V} \in P$ to parent clusters $\vec{W}_1, \vec{W}_2, \ldots, \vec{W}_n$ from a proper coloring of the partition graph $\tilde{G}_P$ with exactly $n$ colors. The vertices colored with the first color form $\vec{W}_1$, the second color gives $\vec{W}_2$, and so on. From this argument, we find that
\begin{equation}
    C(P) = \frac{1}{P!} \sum_{n = 1}^{|P|} \frac{(-1)^{n-1}}{n} \chi^*_{\tilde{G}_P}(n),
\end{equation}
where $\chi^*_{\tilde{G}_P}(n)$ is the number of proper colorings of $\tilde{G}_P$ with exactly $n$ colors. The combinatorial factor $P!$ removes overcounting that occurs when $P$ contains clusters with multiplicity greater than $1$ because permuting the colors of repeated clusters has no effect on $\vec{W}_1, \vec{W}_2, \ldots \vec{W}_n$.

To complete the proof of Lemma~\ref{le:cldev}, we make use of the following combinatorial property of graphs, which we prove in Appendix~\ref{sec:tutte}.
\begin{lemma}
    \label{le:tutte}
    Given a connected graph $G = (V, E)$, we denote by $\chi^*_G(n)$ the number of proper colorings of $G$ with exactly $n$ colors. Let $T_G(x, y)$ be the Tutte polynomial of $G$. Then,
    \begin{equation}
        \sum_{n = 1}^{|V|} \frac{(-1)^{n-1}}{n} \chi^*_{G}(n) = (-1)^{|V| - 1} T_{G}(1, 0).
    \end{equation}
\end{lemma}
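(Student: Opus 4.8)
The plan is to route the identity through the chromatic polynomial of $G$ and then invoke the standard specialization of the Tutte polynomial that computes it. Write $\chi_G(k)$ for the chromatic polynomial, i.e.\ the number of proper colorings of $G$ using a palette of $k$ available colors. Since a proper coloring from a palette of $k$ colors is the same datum as a choice of which $n \le |V|$ colors are actually used together with a surjective proper coloring onto exactly those $n$ colors, I would first record the elementary identity $\chi_G(k) = \sum_{n=1}^{|V|} \binom{k}{n}\,\chi^*_G(n)$, where $\chi^*_G(n)$ is the number of surjective proper colorings onto $n$ labeled colors (the count that arises in Appendix~\ref{app:cldev}, where the $n$ parent clusters $\vec W_1, \ldots, \vec W_n$ are distinguishable).

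The key observation is that the left-hand side of the lemma is nothing but the linear coefficient of $\chi_G(k)$. Indeed, expanding $\binom{k}{n} = k^{\underline{n}}/n!$ in powers of $k$, the coefficient of $k^1$ in the falling factorial $k^{\underline{n}} = k(k-1)\cdots(k-n+1)$ is the signed Stirling number $s(n,1) = (-1)^{n-1}(n-1)!$, so the coefficient of $k$ in $\binom{k}{n}$ equals $(-1)^{n-1}/n$. Reading off the $k^1$ term of $\chi_G(k)$ from the identity above therefore gives $\chi_G'(0) = \sum_{n=1}^{|V|} \frac{(-1)^{n-1}}{n}\,\chi^*_G(n)$, which is precisely the quantity to be evaluated.

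It then remains to identify this linear coefficient with $(-1)^{|V|-1}T_G(1,0)$. For this I would invoke the one nonelementary ingredient, the classical relation between the chromatic and Tutte polynomials of a connected graph, $\chi_G(k) = (-1)^{|V|-1}\,k\,T_G(1-k,0)$. Dividing by $k$ and letting $k \to 0$ (using $\chi_G(0) = 0$, so that $\chi_G(k)/k \to \chi_G'(0)$) yields $\chi_G'(0) = (-1)^{|V|-1}T_G(1,0)$, and combining this with the previous paragraph completes the proof. The only real obstacle is the chromatic–Tutte relation itself, which I would simply cite from standard references on the Tutte polynomial (e.g.\ \cite{biggs1993algebraic}); everything else is bookkeeping with binomial coefficients. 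A minor point that must be handled carefully is the labeled-versus-unlabeled convention for $\chi^*_G(n)$, which has to match the surjective count used in Appendix~\ref{app:cldev} for the two identities to line up.
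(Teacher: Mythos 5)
Your proof is correct, and it takes a genuinely shorter route than the paper's. The paper first inverts the exactly-$n$-colors count via inclusion--exclusion (Lemma~\ref{le:chromatic}), then substitutes, swaps the double sum using the hockey-stick identity, and finally kills every nonconstant term of $T_G(1-k,0)$ with the separate cancellation identity of Lemma~\ref{le:binom}, leaving only $a_0 = T_G(1,0)$. You instead recognize the alternating sum $\sum_{n=1}^{|V|} \frac{(-1)^{n-1}}{n}\chi^*_G(n)$ as the linear coefficient $\chi_G'(0)$ of the chromatic polynomial: the forward expansion $\chi_G(k) = \sum_{n}\binom{k}{n}\chi^*_G(n)$ (which in fact appears, uninverted, inside the paper's proof of Lemma~\ref{le:chromatic}) combined with $s(n,1) = (-1)^{n-1}(n-1)!$ gives this in one line, and then the same chromatic--Tutte specialization (Lemma~\ref{le:chrom_tutte} with $c=1$, plus $\chi_G(0)=0$) finishes via $\chi_G(k)/k \to \chi_G'(0)$ as $k \to 0$. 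Your bookkeeping checks out: the coefficient of $k$ in $\binom{k}{n}$ is indeed $(-1)^{n-1}/n$, the counting identity holds at all positive integers $k$ and hence as a polynomial identity, and your labeled-surjective convention for $\chi^*_G(n)$ matches the paper's usage in Appendix~\ref{app:cldev}, as you correctly flagged. What each approach buys: yours eliminates two of the paper's three auxiliary lemmas and exposes the conceptual content of the statement---the sum is $\chi_G'(0)$, whose identification with $(-1)^{|V|-1}T_G(1,0)$ is the classical fact that this coefficient counts acyclic orientations with a unique fixed source (Greene--Zaslavsky); the paper's version is longer but entirely self-contained at the level of binomial manipulations, never invoking Stirling numbers or coefficient extraction.
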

By applying this lemma to $C(P)$, we obtain
\begin{equation}
    \label{eq:logl}
    \log L(t) =  \sum_{m \geq 1} \sum_{\vec{W} \in \mathcal{G}_m} \sum_{P \in \mathcal{P}_c(\vec{W})} \frac{(-1)^{|P|-1}}{P!} T_{\tilde{G}_P}(1,0) \prod_{\vec{V}\in P}\frac{\lambda^\vec{V}}{ \clus{V}  !} \mathcal{D}_\vec{V} L(t) .
\end{equation}
Finally, taking the cluster derivative of \eqref{eq:logl} yields
\begin{equation}\label{eq:clusdev2}
    \mathcal{D}_\vec{W} \log L(t) = \vec{W}! \sum_{P \in \mathcal{P}_c(\vec{W})} \frac{(-1)^{\vert P \vert -1}}{P!}T_{\tilde G_P}(1,0) \prod_{\vec{V}\in P}\frac{1}{ \clus{V}  !} \mathcal{D}_\vec{V} L(t)
\end{equation}
which, using \eqref{eq:dev_echo}, can be readily brought into the form of the expression in Lemma~\ref{le:cldev}.

\subsection{Proof of Lemma~\ref{le:tutte}\label{sec:tutte}}
To prove Lemma~\ref{le:tutte}, we introduce three more short lemmas. The first one relates the number of colorings that use exactly $k$ colors to the chromatic polynomial.
\begin{lemma}
    \label{le:chromatic}
    Given a graph $G$, let $\chi_G^*(k)$ denote the number of proper colorings of $G$ that use exactly $k$ colors. Moreover, let $\chi_G(k)$ be the chromatic polynomial, that is, the number of proper colorings with up to $k$ colors. Then,
    \begin{equation}
        \label{eq:chromatic}
        \chi_G^*(n) = \sum_{k = 1}^n (-1)^{n - k} \binom{n}{k} \chi_G(k).
    \end{equation}
\end{lemma}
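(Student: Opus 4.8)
The plan is to recognize this identity as the inverse half of a binomial transform, so the proof reduces to establishing the \emph{forward} relation that expresses $\chi_G(k)$ in terms of the $\chi_G^*(j)$ and then inverting it. Throughout I read $\chi_G^*(j)$ as the number of proper colorings that are \emph{surjective} onto the labeled palette $\{1,\dots,j\}$, i.e.\ that use all $j$ colors; this labeled reading is precisely what makes the binomial coefficient $\binom{n}{k}$ (rather than a Stirling number) the correct combinatorial weight, and it matches the way $\chi^*_{\tilde G_P}(n)$ is used in the proof of Lemma~\ref{le:cldev}, where the $n$ colors are the labels of the parent clusters $\vec{W}_1,\dots,\vec{W}_n$ and all of them must appear.

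First I would establish the forward relation. Any proper coloring counted by the chromatic polynomial $\chi_G(k)$ uses colors from some subset $S \subseteq \{1,\dots,k\}$; if I fix the set of colors actually used to have size $j = |S|$, then the number of colorings using exactly those colors equals $\chi_G^*(j)$ (relabel $S$ to $\{1,\dots,j\}$). Summing over the $\binom{k}{j}$ choices of $S$ gives
\begin{equation}
    \chi_G(k) = \sum_{j=1}^{k} \binom{k}{j}\, \chi_G^*(j).
\end{equation}

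Second, I would invert this. Writing $a_k = \chi_G(k)$ and $b_j = \chi_G^*(j)$, the display above is exactly $a_k = \sum_{j} \binom{k}{j} b_j$, whose standard binomial inversion yields $b_n = \sum_{k=0}^{n} (-1)^{n-k}\binom{n}{k} a_k$, which is the claimed formula. Since $\chi_G(0)=0$ for a nonempty graph, the $k=0$ term vanishes and the sum may start at $k=1$ as stated. Equivalently, one can argue directly by inclusion--exclusion over the colors forbidden from a palette of size $n$: forbidding a subset $T$ of colors leaves $\chi_G(n-|T|)$ proper colorings, and summing $(-1)^{|T|}$ over all $T$ isolates exactly the surjective colorings, giving $\chi_G^*(n) = \sum_{k=0}^{n} (-1)^{n-k}\binom{n}{k}\chi_G(k)$ upon setting $k = n-|T|$.

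There is no serious obstacle here, as the argument is elementary counting. The one point demanding care is the bookkeeping of the colors: one must treat them as labeled and interpret ``exactly $k$ colors'' as surjectivity onto a $k$-element palette, so that choosing which colors are used contributes the factor $\binom{k}{j}$. With the unlabeled reading (partitions into independent sets) the weights would instead be Stirling numbers and the stated identity would not hold, so I would state the convention explicitly before beginning the count.
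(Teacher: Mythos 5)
Your proof is correct and follows essentially the same route as the paper: both establish the forward relation $\chi_G(k) = \sum_{j=1}^{k} \binom{k}{j}\chi_G^*(j)$ by classifying colorings according to the set of colors actually used, and then invert it --- you by citing standard binomial inversion, the paper by substituting the forward relation into the right-hand side and verifying $\sum_{k=j}^{n}(-1)^{n-k}\binom{n}{k}\binom{k}{j}=\delta_{j,n}$ directly, which is just the proof of that inversion written out. Your explicit remark on the labeled-palette (surjective) convention, distinguishing $\binom{n}{k}$ from Stirling-number weights, is a worthwhile clarification consistent with how $\chi^*_{\tilde G_P}(n)$ is used in the proof of Lemma~\ref{le:cldev}.
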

\begin{proof}
    This lemma follows from a standard inclusion--exclusion argument. Alternatively, we can prove the statement by direct calculation as follows.
    
    The chromatic polynomial $\chi_G(k)$ can be computed by picking $j \leq k$ colors and adding the contributions from $\chi_G^*(j)$:
    \begin{equation}
        \chi_G(k) = \sum_{j = 1}^k \binom{k}{j} \chi_G^*(j)
    \end{equation}
    We substitute this expression into the right-hand side of \eqref{eq:chromatic} and exchange the order of the sums:
    \begin{equation}
        \sum_{k = 1}^n (-1)^{n - k} \binom{n}{k} \chi_G(k) = \sum_{k = 1}^n \sum_{j = 1}^k (-1)^{n - k} \binom{n}{k} \binom{k}{j} \chi_G^*(j) = \sum_{j = 1}^n \chi_G^*(j) \sum_{k = j}^n (-1)^{n - k} \binom{n}{k} \binom{k}{j}.
    \end{equation}
    When $j = n$, the last sum evaluates to $1$. For $j < n$, we find instead
    \begin{equation}
        \sum_{k = j}^n (-1)^{n - k} \binom{n}{k} \binom{k}{j} =  \binom{n}{j}  \sum_{k = j}^n (-1)^{n-k} \binom{n-j}{k-j} = \binom{n}{j}  \sum_{k = 0}^{n-j} (-1)^{n-j-k} \binom{n-j}{k} = 0.
    \end{equation}
    Thus,
    \begin{equation}
        \sum_{k = 1}^n (-1)^{n - k} \binom{n}{k} \chi_G(k) = \chi_G^*(n)
    \end{equation}
    as claimed in the lemma.
\end{proof}

The second lemma connects the chromatic polynomial to the Tutte polynomial. A proof of this statement can be found in, e.g., Ref.~\cite{Brylawski1992}.
\begin{lemma}
    \label{le:chrom_tutte}
    Given a graph $G = (V, E)$ with $c$ connected components, the chromatic polynomial $\chi_G(k)$ is related to the Tutte polynomial $T_G(x, y)$ by
    \begin{equation}
        \chi_G(k) = (-1)^{|V| - c} k^c T_G(1 - k, 0).
    \end{equation}
\end{lemma}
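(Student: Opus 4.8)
The plan is to derive both sides as sums over edge subsets (spanning subgraphs) and match them term by term, so that the identity reduces to bookkeeping of signs and exponents. Throughout, for $A \subseteq E$ let $c(A)$ denote the number of connected components of the spanning subgraph $(V, A)$, and recall the rank $r(A) = |V| - c(A)$, so that $r(E) = |V| - c$.

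First I would establish the inclusion--exclusion (Whitney rank) expansion of the chromatic polynomial. Viewing a coloring as an arbitrary assignment of one of $k$ colors to each vertex, there are $k^{|V|}$ in total; for each edge $e$ declare the ``bad'' event that its two endpoints receive the same color, so that proper colorings are exactly those avoiding every bad event. A coloring in which all edges of a subset $A$ are monochromatic must be constant on each connected component of $(V, A)$, and there are $k^{c(A)}$ such colorings. Inclusion--exclusion over the edges then gives
\[
    \chi_G(k) = \sum_{A \subseteq E} (-1)^{|A|}\, k^{c(A)}.
\]

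Second I would invoke the standard subset (rank--nullity) form of the Tutte polynomial,
\[
    T_G(x, y) = \sum_{A \subseteq E} (x-1)^{r(E) - r(A)} (y-1)^{|A| - r(A)}, \qquad r(A) = |V| - c(A),
\]
and specialize to $x = 1 - k$, $y = 0$. The exponents become $r(E) - r(A) = c(A) - c$ and $|A| - r(A) = |A| - |V| + c(A)$, so
\[
    T_G(1 - k, 0) = \sum_{A \subseteq E} (-k)^{c(A) - c} (-1)^{|A| - |V| + c(A)}.
\]
Collecting signs, the contribution $(-1)^{c(A) - c}$ from $(-k)^{c(A)-c}$ combines with $(-1)^{c(A)}$ to leave $(-1)^{2 c(A) - c} = (-1)^{-c}$, and the even power $2 c(A)$ drops out; factoring $k^{-c}$ out of $k^{c(A) - c}$ then yields $T_G(1 - k, 0) = (-1)^{|V| + c} k^{-c} \sum_A (-1)^{|A|} k^{c(A)}$. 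The remaining sum is precisely $\chi_G(k)$ from the first step, so rearranging and using $(-1)^{|V| + c} = (-1)^{|V| - c}$ gives the claimed formula.

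The derivation is essentially routine; the only delicate point is the sign-and-exponent bookkeeping together with using the component count $c(A)$ consistently in both expansions, so I would double-check those parities. If one instead preferred a recurrence-based argument by induction on the number of edges, matching the chromatic deletion--contraction relation $\chi_G = \chi_{G - e} - \chi_{G/e}$ against the Tutte rule $T_G = T_{G - e} + T_{G/e}$, the main obstacle would shift to the special cases of loops and bridges, where the Tutte rule changes and the base cases must be checked separately. The subset-expansion route sidesteps this entirely, which is why I would prefer it.
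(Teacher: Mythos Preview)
Your argument is correct: the Whitney inclusion--exclusion expansion $\chi_G(k)=\sum_{A\subseteq E}(-1)^{|A|}k^{c(A)}$ together with the rank--nullity form of the Tutte polynomial gives the identity after straightforward sign bookkeeping, and your parities check out. The paper does not actually supply its own proof of this lemma; it simply cites a reference (Brylawski), so there is nothing to compare against---your self-contained derivation via subset expansions is the standard route and is exactly what one would find in that literature.
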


The third lemma is a simple identity involving sums over binomial coefficients and powers of integers.
\begin{lemma}
    \label{le:binom}
    The following identity holds for any integers $n$ and $k$ satisfying $n > k \geq 0$.
    \begin{equation}
        \sum_{j = 1}^n (-1)^{n-j} \binom{n}{j} j^k = (-1)^{n-1} \delta_{k, 0}.
    \end{equation}
\end{lemma}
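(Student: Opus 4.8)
The plan is to recognize the left-hand side as an $n$-th finite difference evaluated at the origin. I introduce the forward difference operator $\Delta$ acting on functions of an integer variable by $(\Delta g)(x) = g(x+1) - g(x)$, and recall the standard identity
\begin{equation}
    (\Delta^n g)(0) = \sum_{j=0}^n (-1)^{n-j} \binom{n}{j} g(j),
\end{equation}
which follows by writing $\Delta = E - I$ for the shift operator $E$ and the identity $I$, expanding $\Delta^n = (E-I)^n$ with the binomial theorem, and applying the result to $g$ at the point $0$. Taking $g(x) = x^k$ then expresses the sum over $j$ from $0$ to $n$ as $(\Delta^n g)(0)$.

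The key observation is that $\Delta$ lowers the degree of any polynomial by exactly one, since the leading terms cancel in $g(x+1) - g(x)$. Consequently $\Delta^n$ annihilates every polynomial of degree strictly less than $n$. Because $g(x) = x^k$ has degree $k < n$ by hypothesis, I conclude that the full sum vanishes:
\begin{equation}
    \sum_{j=0}^n (-1)^{n-j} \binom{n}{j} j^k = 0.
\end{equation}

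It then remains to recover the sum that starts at $j=1$ by subtracting the $j=0$ contribution, namely $(-1)^n \binom{n}{0} 0^k = (-1)^n 0^k$, where I adopt the convention $0^0 = 1$. I split into two cases: when $k \geq 1$ this term is zero, so the sum from $j=1$ also vanishes, matching $(-1)^{n-1}\delta_{k,0} = 0$; when $k = 0$ the term equals $(-1)^n$, so the sum from $j=1$ equals $-(-1)^n = (-1)^{n-1}$, again matching $(-1)^{n-1}\delta_{k,0}$. This establishes the identity in all cases. (As an aside, the vanishing of the full sum also admits the combinatorial reading that $\sum_{j=0}^n (-1)^{n-j}\binom{n}{j} j^k$ counts surjections from a $k$-set onto an $n$-set, of which there are none when $n > k$; I would nonetheless prefer the self-contained finite-difference argument.)

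The argument is elementary and I do not anticipate a genuine obstacle. The only point requiring care is the bookkeeping of the $j=0$ term together with the $0^0 = 1$ convention, since this is precisely what produces the nonzero value $(-1)^{n-1}$ in the exceptional case $k=0$ and is responsible for the Kronecker delta on the right-hand side.
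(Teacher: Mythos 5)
Your proof is correct, and it takes a different route from the paper's. The paper splits into two cases: for $k=0$ it applies the binomial theorem directly to show the alternating sum equals $(-1)^{n-1}$, and for $n > k > 0$ it uses the generating-function identity $\sum_{j=1}^n (-1)^{n-j}\binom{n}{j} j^k x^j = \left( x \frac{\di}{\di x} \right)^k (x-1)^n$, which vanishes at $x=1$ because fewer than $n$ applications of $x\,\di/\di x$ cannot exhaust the factor $(x-1)^n$. You instead recognize the full sum (extended to $j=0$) as the $n$-th forward difference $(\Delta^n g)(0)$ with $g(x)=x^k$ and invoke the fact that $\Delta^n$ annihilates polynomials of degree below $n$, then peel off the $j=0$ boundary term under the convention $0^0=1$. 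The two mechanisms are dual --- the paper's operator acts on the generating variable $x$, yours on the summation variable $j$ --- but your version buys a uniform treatment of both cases, with the Kronecker delta emerging cleanly from the single boundary term, whereas the paper's version buys freedom from any $0^0$ convention since its sums start at $j=1$ throughout. One cosmetic point: your claim that $\Delta$ lowers degree by \emph{exactly} one fails for constants (where $\Delta$ gives zero outright), but this only strengthens the conclusion that $\Delta^n$ kills all polynomials of degree less than $n$, so nothing in the argument is affected.
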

\begin{proof}
    For $k = 0$,
    \begin{equation}
        \sum_{j=1} (-1)^{n-j} \binom{n}{j} = (-1)^{n-1} + \sum_{j=0}^n (-1)^{n-j} \binom{n}{j} = (-1)^{n-1}.
    \end{equation}
    For $n > k > 0$, we observe that
    \begin{equation}
        \sum_{j = 1}^n (-1)^{n-j} \binom{n}{j} j^k x^j = \left( x \frac{\di}{\di x} \right)^k (x - 1)^n.
    \end{equation}
    The lemma follows from the fact that the right-hand side vanishes at $x=1$ for all $n > k > 0$.
\end{proof}

\begin{proof}[Proof of Lemma~\ref{le:tutte}]
    We apply the above three lemmas in order. From Lemma~\ref{le:chromatic}, we have
    \begin{equation}
        \sum_{n=1}^{|V|} \frac{(-1)^{n-1}}{n} \chi_G^*(n) = \sum_{n=1}^{|V|} \frac{(-1)^{n-1}}{n} \sum_{k=1}^n (-1)^{n-k} \binom{n}{k} \chi_G(k).
    \end{equation}
    We switch the order of the sums to obtain
    \begin{equation}
        \sum_{n=1}^{|V|} \frac{(-1)^{n-1}}{n} \chi_G^*(n) = \sum_{k=1}^{|V|} \sum_{n=k}^{|V|} \frac{(-1)^{k-1}}{n} \binom{n}{k} \chi_G(k) =  \sum_{k=1}^{|V|} \frac{(-1)^{k-1}}{k} \chi_G(k) \sum_{n=k}^{|V|} \binom{n-1}{k-1}.
    \end{equation}
    By the hockey-stick identity, the last sum evaluates to $\binom{|V|}{k}$. Combined with Lemma~\ref{le:chrom_tutte}, setting $c = 1$ since $G$ is connected by assumption, this yields
    \begin{equation}
        \label{eq:almost_there}
        \sum_{n=1}^{|V|} \frac{(-1)^{n-1}}{n} \chi_G^*(n) = \sum_{k=1}^{|V|} \binom{|V|}{k} (-1)^{|V|-k} T_G(1-k, 0).
    \end{equation}
    The Tutte polynomial $T_G(x, 0)$ is a polynomial in $x$ of degree at most $|V|-1$. Therefore, $T_G(1-k, 0)$ is a polynomial in $k$ of the same maximum degree, which allows us to write
    \begin{equation}
        T_G(1-k, 0) = \sum_{n = 0}^{|V|-1} a_n k^n
    \end{equation}
    for some coefficients $a_n$. By substituting into \eqref{eq:almost_there}, we get
    \begin{equation}
        \sum_{n=1}^{|V|} \frac{(-1)^{n-1}}{n} \chi_G^*(n) = \sum_{n=0}^{|V|-1} a_n \sum_{k=1}^{|V|} \binom{|V|}{k} (-1)^{|V|-k} k^n.
    \end{equation}
    Lemma~\ref{le:binom} allows us to simplify the expression to
    \begin{equation}
        \sum_{n=1}^{|V|} \frac{(-1)^{n-1}}{n} \chi_G^*(n) = (-1)^{|V|-1} a_0.
    \end{equation}
    This is the desired expression since $a_0 = T_G(1, 0)$.
\end{proof}


\subsection{Proof of Proposition~\ref{le:cldev_bound}}\label{app:cldev_bound}

We start from the expression for the cluster derivative in Lemma~\ref{le:cldev}:
\begin{align}
        \mathcal{D}_\vec{W} & \log  L(t)  = (-i t)^m
         \sum_{P \in \mathcal{P}_c(\vec{W})} (-1)^{|P|-1} N_P(\vec{W}) T_{\tilde{G}_P}(1, 0) \prod_{\vec{V} \in P} \langle h^\vec{V} \rangle_s. \nonumber
\end{align}
Recall the symmetric expectation value
$
        \langle h^\vec{V} \rangle_s =  \frac{1}{|\vec{V}|!} \sum_{\sigma \in S_{|\vec{V}|}} \tr \left( h_{V_{\sigma(1)}} h_{V_{\sigma(2)}} \cdots h_{V_{\sigma(|\vec{V}|)}} \rho \right)
$
and the combinatorial factor $N_P(\vec{W}) = {\vec{W}!}/\left(P! \prod_{\vec{V} \in P} \vec{V}! \right)$. We rewrite the sum over cluster partitions of $\vec{W}$ as a sum of graph partitions of the cluster graph $G_{\vec{W}}$. Here, a graph partition refers to a partition of the vertices. We only consider partitions into connected subgraphs, meaning that the subsets of vertices in the partition induce connected subgraphs on the original graph. For every graph partition of $G_\vec{W}$ into connected subgraphs, there exists exactly one corresponding cluster partition $P \in \mathcal{P}_c(\vec{W})$.  On the other hand, for every cluster partition there are exactly $N_P(\clus{W})=\vec{W}!/ P! \prod_{\vec{V}\in P} \clus{V}! $ equivalent graph partitions. This combinatorial factor arises because repeated subsystems are indistinguishable at the level of the cluster but give rise to distinct vertices in the cluster graph. With this,
\begin{align}
    \mathcal{D}_\vec{W} \log  L(t) = (-i t)^m \sum_{P' \in \mathcal{P}_c(G_\vec{W})} (-1)^{|P'|-1} T_{\tilde{G}_{P}}(1, 0) \prod_{\vec{V} \in P} \langle h^\vec{V} \rangle_s,
\end{align}
where, in a slight abuse of notation, $\mathcal{P}_c(G_\vec{W})$ is the set of partitions of the cluster graph $G_\vec{W}$ into connected components and $P \in \mathcal{P}_c(\vec{W})$ is the cluster partition corresponding to $P'$. By observing that $\vert \langle h^\vec{V} \rangle_s \vert \le 1$, we obtain
\begin{equation} \label{eq:clderbound1}
    \vert  \mathcal{D}_\vec{W} \log  L(t) \vert \le \vert t \vert^m  \sum_{P' \in \mathcal{P}_c(G_\vec{W})}  T_{\tilde{G}_{P}}(1, 0) \le  \vert t \vert^m  \sum_{{P'} \in \mathcal{P}_c(G_\vec{W})}  T_{\tilde{G}_{P}}(1, 1),
\end{equation}
where the last inequality follows from the fact that the Tutte polynomial $T_{\tilde{G}_P}(x, y)$ has positive coefficients. We bound this sum in two steps, starting with the following lemma.

\begin{figure}[b]
    \centering
    \includegraphics{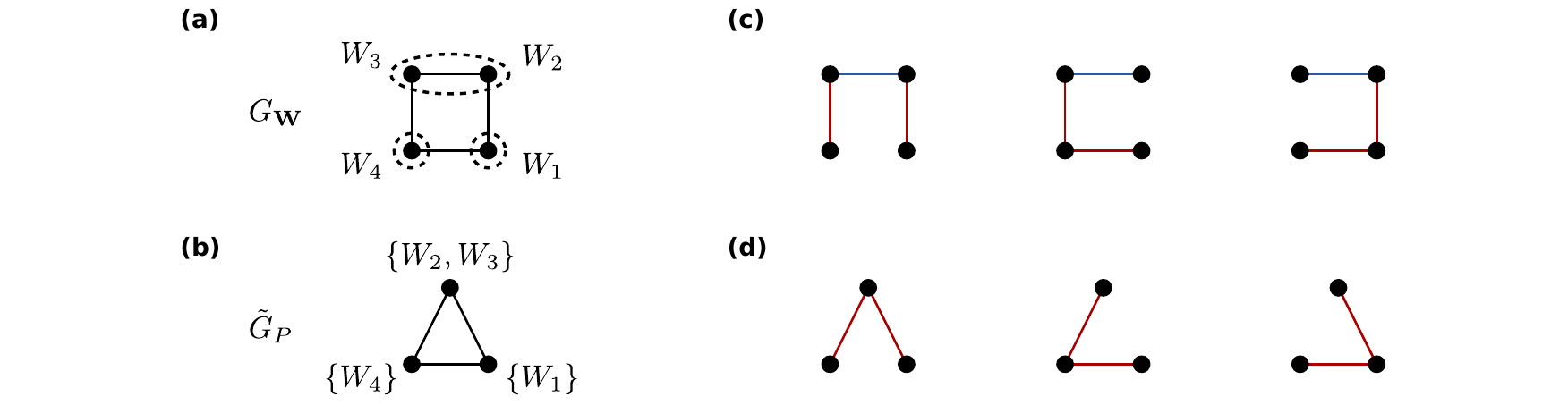}
    \caption{(a)~A cluster graph $G_\vec{W}$ with a particular graph partition indicated by the dashed outlines. (b)~The corresponding partition graph. (c)~The three bicolored spanning trees that we identify with this graph partition. (d)~The red edges define a spanning tree on the partition graph.}
    \label{fig:lemma21}
\end{figure}

\begin{lemma} \label{le:parttotrees}
    Given a connected cluster $\vec{W} \in \mathcal{G}_m$,
    \begin{equation}
        \sum_{P' \in \mathcal{P}_c(G_\clus{W})} T_{\tilde{G}_{P}}(1, 1) \leq 2^m T_{G_\clus{W}}(1, 1).
    \end{equation}
\end{lemma}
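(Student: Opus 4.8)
The plan is to prove the inequality by a counting argument that matches both sides to spanning trees of $G_\vec{W}$ carrying an extra two-valued label. The right-hand side is (up to a harmless factor) the number of \emph{bicolored spanning trees} of $G_\vec{W}$, by which I mean a spanning tree together with an assignment of one of two colors, say red or black, to each of its edges. Since $T_{G_\vec{W}}(1,1)$ counts spanning trees and each spanning tree on the $m$ vertices of $G_\vec{W}$ has exactly $m-1$ edges, the number of bicolored spanning trees is $2^{m-1}\,T_{G_\vec{W}}(1,1)\le 2^m\,T_{G_\vec{W}}(1,1)$. The left-hand side, in contrast, counts pairs $(P',S)$ consisting of a partition $P'\in\mathcal{P}_c(G_\vec{W})$ into connected subgraphs together with a spanning tree $S$ of the partition graph $\tilde G_P$. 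It therefore suffices to exhibit a surjection from bicolored spanning trees onto such pairs, whence the number of pairs is at most the number of bicolored spanning trees.

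First I would define the map. Given a bicolored spanning tree $T$, delete its red edges; the remaining black edges form a spanning forest of $G_\vec{W}$ whose connected components partition the vertices into blocks $B_1,\dots,B_k$, each connected in $G_\vec{W}$, hence a partition $P'\in\mathcal{P}_c(G_\vec{W})$. The crucial point is that, because $T$ is acyclic, the two endpoints of any red edge lie in distinct blocks: if they lay in the same block, a black path between them together with the red edge would close a cycle in $T$. Contracting each block to a single vertex therefore turns the red edges into a loop-free subgraph of $\tilde G_P$. Counting edges, the black forest has $m-k$ edges, so there are $(m-1)-(m-k)=k-1$ red edges joining the $k$ blocks; since $T$ is connected, these $k-1$ red edges form a spanning tree $S$ of $\tilde G_P$. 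This makes $T\mapsto(P',S)$ well defined.

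Next I would establish surjectivity. Given an arbitrary pair $(P',S)$ with blocks $B_1,\dots,B_k$, I construct a preimage: within each block choose some spanning tree (possible since each block is connected) and color its edges black; for each edge of $S$, which by the definition of $\tilde G_P$ corresponds to at least one edge of $G_\vec{W}$ joining the two blocks, pick one representative and color it red. The resulting edge set has $\sum_{i}(|B_i|-1)+(k-1)=m-1$ edges, is connected (each block is internally connected by black edges and the blocks are joined according to the tree $S$ by red edges), and is therefore a spanning tree of $G_\vec{W}$ mapping back to $(P',S)$. Hence every pair has at least one preimage, and we conclude $\sum_{P'\in\mathcal{P}_c(G_\vec{W})}T_{\tilde G_P}(1,1)\le 2^{m-1}\,T_{G_\vec{W}}(1,1)\le 2^m\,T_{G_\vec{W}}(1,1)$.

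The main obstacle is verifying that the map is well defined—specifically the acyclicity argument ruling out red edges that contract to loops, and the edge-count that pins the red edges down as exactly a spanning tree of the partition graph. I would also take care that repeated subsystems in $\vec{W}$ contribute distinct vertices to $G_\vec{W}$, so that $G_\vec{W}$ has precisely $m$ vertices and every spanning tree has $m-1$ edges; this is what yields the factor $2^{m-1}$, comfortably within the claimed $2^m$.
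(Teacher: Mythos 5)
Your proof is correct and takes essentially the same route as the paper's: both identify the left-hand side with pairs $(P', S)$ of a partition and a spanning tree of $\tilde{G}_{P}$, and bound their number by the $2^{m-1}\,T_{G_\vec{W}}(1,1)$ bicolored spanning trees of $G_\vec{W}$ via the same delete-the-red-edges surjection. Your write-up merely makes explicit the acyclicity and edge-counting details (and the treatment of repeated subsystems as distinct vertices) that the paper leaves implicit.
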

\begin{proof}
$T_{\tilde{G}_{P'}}(1, 1)$ counts the number of spanning trees of $\tilde{G}_{P}$, so that 
\begin{equation}\label{eq:spantree}
       \sum_{P' \in \mathcal{P}_c(G_\clus{W})} T_{\tilde{G}_{P}}(1, 1) = \sum_{P' \in \mathcal{P}'_c(G_\clus{W})} \sum_{\substack{\text{spanning} \\ \text{trees of } \tilde{G}_{P}}} 1.
\end{equation}
Given a spanning tree of the cluster graph $G_{\clus{W}}$, consider a bicoloring of the edges into blue and red, and delete the red ones. This separates the edges into disconnected components, each of which induces a connected subgraph on $G_{\clus{W}}$. These subgraphs define a partition $P' \in \mathcal{P}_c(G_\vec{W}) $, with its corresponding partition graph $\tilde{G}_{P}$. Moreover, the deleted red edges can be identified with a spanning tree of $\tilde{G}_{P}$. Hence, any bicoloring of a spanning tree of $G_\vec{W}$ describes a term in the double sum on the right-hand side of \eqref{eq:spantree}. Conversely, for every term in the sum, we can find at least one bicolored spanning tree. This procedure is illustrated in \figref{fig:lemma21}. It follows that the sum is bounded from above by the number of bicolored trees of $G_{\clus{W}}$. The number of edges of each spanning tree is $\vert \clus{W}\vert -1=m-1$, so that there are always $2^{m-1}$ distinct bicolorings for every tree.  The total number of spanning trees is $T_{G_\clus{W}}(1, 1)$, which completes the proof.
\end{proof}

We next bound the number of spanning trees of $T_{G_\clus{W}}(1, 1)$ given the connectivity of the cluster graph.
\begin{lemma}
    \label{le:trees}
    Consider a set of supports $S$ such that the maximum degree of the associated interaction graph $G$ is $\mathfrak{d}$. Given a connected cluster $\vec{W} \in \mathcal{G}_m$,
    \begin{equation}
        \frac{1}{\clus{W}!} T_{G_\clus{W}}(1, 1) \leq [e( \mathfrak{d} + 1)]^{m+1} .
    \end{equation}
\end{lemma}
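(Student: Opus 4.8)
The plan is to recognise that $T_{G_\clus{W}}(1,1)$ is exactly the number of spanning trees of the \emph{compatibility graph} of the multiset $\clus{W}$: two vertices are adjacent in $G_\clus{W}$ precisely when their subsystems overlap, so $G_\clus{W}$ contains \emph{all} edges between compatible subsystems, including a clique on each class of repeated subsystems. I would colour each vertex by its subsystem and count spanning trees through the standard generating-function encoding of rooted trees, which automatically supplies the $1/\clus{W}!$ factor and reduces the statement to a clean scalar estimate.

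First I would pass from unrooted to rooted spanning trees, writing $T_{G_\clus{W}}(1,1)=\tfrac{1}{m}f(\clus{W})$, where $f(\clus{W})$ is the number of spanning trees with a marked root. Working with an exponential generating function (EGF) in variables $x_X$ that mark each colour $X\in S$, a rooted tree is a root together with an unordered set of rooted subtrees whose roots carry compatible colours. This yields the system
\[
  R_X = x_X \exp\Big(\sum_{Y:\,X\cap Y\neq\emptyset} R_Y\Big), \qquad R=\sum_{X\in S} R_X,
\]
and the species dictionary gives $f(\clus{W})/\clus{W}! = [\prod_X x_X^{\mu_\clus{W}(X)}]\,R$, so that $T_{G_\clus{W}}(1,1)/\clus{W}! = \tfrac{1}{m}[\prod_X x_X^{\mu_\clus{W}(X)}]R$. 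The identification of $G_\clus{W}$ with the full compatibility graph is what makes this an equality: every colour-compatible tree really is a spanning tree of $G_\clus{W}$.

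The core of the argument is a coefficient-wise domination. Setting every $x_X=x$ can only raise the coefficient of a fixed monomial, since all series have nonnegative coefficients; and because each subsystem overlaps itself and at most $\mathfrak{d}$ others, every colour is compatible with at most $\mathfrak{d}+1$ colours. I would then show by induction on the coefficient degree that $R_X|_{x_Y=x}\preceq u(x)$ for every $X$, where $u$ solves the scalar equation $u=x\exp[(\mathfrak{d}+1)u]$: the coefficient of $x^m$ on the right-hand side depends only on lower-order coefficients (owing to the prefactor $x$), which are controlled by the inductive hypothesis together with monotonicity of $\exp$ and the bound $|\{Y:\,X\cap Y\neq\emptyset\}|\le\mathfrak{d}+1$. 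Since at most $m$ distinct colours occur in a size-$m$ cluster, summing over the admissible root colours gives $T_{G_\clus{W}}(1,1)/\clus{W}! \le [x^m]u$. Lagrange inversion evaluates $[x^m]u=(\mathfrak{d}+1)^{m-1}m^{m-1}/m!$, and the Stirling bound $m^m/m!\le e^m$ then yields $T_{G_\clus{W}}(1,1)/\clus{W}!\le[e(\mathfrak{d}+1)]^{m}\le[e(\mathfrak{d}+1)]^{m+1}$, with room to spare.

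I expect the delicate point to be the domination step: justifying rigorously that collapsing all colour variables to a single $x$ and inflating every compatibility neighbourhood to the full $\mathfrak{d}+1$ can only increase coefficients, and that the extra factor from the choice of root colour is absorbed by the $1/m$ from unrooting and the bound of at most $m$ distinct colours. The bookkeeping of the EGF-to-count dictionary (ensuring the factor $\clus{W}!$ appears exactly, including for clusters with high multiplicity) is the other place where care is needed, but it is standard once $G_\clus{W}$ is recognised as the complete compatibility graph.
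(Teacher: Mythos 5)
Your proof is correct, and it takes a genuinely different route from the paper's. The paper argues directly on the cluster graph: it roots a spanning tree at $W_1$, notes that every spanning tree is determined by having each remaining vertex choose one of its incident edges, and hence bounds $T_{G_\clus{W}}(1,1)$ by the product of vertex degrees $\gamma(W_1)\cdots\gamma(W_m)$; it then controls $\frac{1}{\clus{W}!}\prod_k \gamma(W_k)$ using the degree-sum estimate $\sum_X \gamma(X) \le m-1+\mathfrak{d}m$ together with $\mu! \ge (\mu/e)^\mu$ and a weighted AM--GM step, arriving at $[e(\mathfrak{d}+1)]^m$. You instead encode rooted compatible-colored trees in a multivariate exponential generating function, dominate the system $R_X = x_X \exp\bigl(\sum_{Y \sim X} R_Y\bigr)$ coefficient-wise by the scalar fixed point $u = x\,e^{(\mathfrak{d}+1)u}$, and finish with Lagrange inversion and Stirling, obtaining $(\mathfrak{d}+1)^{m-1}e^m \le [e(\mathfrak{d}+1)]^m$ --- marginally sharper than the paper's intermediate bound, and both comfortably within the lemma's slack of $[e(\mathfrak{d}+1)]^{m+1}$. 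All the delicate points you flag check out: the identification of $G_\clus{W}$ as the full compatibility graph (including the clique on repeated subsystems, since a subsystem overlaps itself) is exactly the paper's definition; the EGF dictionary produces precisely the $1/\clus{W}!$ because vertices of a given color are labeled; the induction on degree is sound since $R_X$ has no constant term, so the coefficient of $x^m$ depends only on lower-order data; and the factor of at most $m$ admissible root colors is absorbed by the $1/m$ from unrooting. The paper's argument buys brevity and elementarity (no generating functions needed), whereas yours makes the role of $\mathfrak{d}+1$ as a branching number transparent, connects to the classical tree-graph bounds of cluster-expansion theory, and would adapt more readily to refined, color-dependent weights.
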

\begin{proof}
    Let us fix $W_1 \in \vec{W}$ as the root of a spanning tree. Any spanning tree can be constructed by picking for each vertex $W_2, W_3, \ldots W_m$ one of the edges incident on it. The number of choices is $\gamma(W_2) \gamma(W_3) \cdots  \gamma(W_m)$, where $\gamma(W_k)$ is the degree of the vertex associated with $W_k$ in $G_\clus{W}$. It follows that
     \begin{equation}
        T_{G_\clus{W}}(1, 1) \leq \gamma(W_1) \gamma(W_2) \cdots  \gamma(W_m).
    \end{equation}
    
    This product of degrees can be bounded as in the proof of Proposition 3.8 in \cite{haah2021}. The degree of a vertex associated with $W_k$ can be written as
    \begin{equation}
       \gamma(W_k)= \mu_{\clus{W}}{(W_k)} -1 + \sum_{X \in \mathcal{N}(W_k)} \mu_{\clus{W}}{(X)},
    \end{equation}
    where $\mathcal{N}(W_k)$ is the set of neighbors of $W_k$ in the interaction graph $G$. We sum the degrees over all distinct subsystems that appear in $\clus{W}$:
    \begin{equation}
        \sum_{X \in S \, : \, \mu_\vec{W}(X) > 0} \gamma(X) = \sum_{X \in S \, : \, \mu_\vec{W}(X) > 0} \left( \mu_{\clus{W}}{(X)} - 1 + \sum_{Y \in \mathcal{N}(X)} \mu_{\clus{W}}{(Y)} \right)  \le m-1 + \mathfrak{d}m.
    \end{equation}
    The double sum is bounded by $\mathfrak{d} m$ because each $\mu_\vec{W}(Y)$ appears in it at most $\mathfrak{d}$ times. Finally, we bound
    \begin{align}
        \frac{1}{\clus{W}!}  \gamma(W_1) \gamma(W_2) \cdots  \gamma(W_m) &= \prod_{X \in S \, : \, \mu_\vec{W}(X) > 0} \frac{\left( \mu_{\clus{W}}(X) - 1 + \sum_{Y \in \mathcal{N}(X)} \mu_{\clus{W}}{(Y)} \right)^{\mu_{\clus{W}}(X)} }{\mu_{\clus{W}}(X)!} \\
        & \le e^m  \prod_{X \in S \, : \, \mu_\vec{W}(X) > 0} \left( \frac{ \mu_{\clus{W}}(X) -1 + \sum_{Y \in \mathcal{N}(X)} \mu_{\clus{W}}{(Y)}  }{\mu_{\clus{W}}(X)} \right)^{\mu_{\clus{W}}(X)} \\ 
        & \le e^m \left( \frac{m - 1 + \mathfrak{d} m}{m} \right)^m \le [ e(\mathfrak{d} + 1) ]^m.
    \end{align}
\end{proof}
The bound on the cluster derivative follows from the two lemmas and \eqref{eq:clderbound1}.


\section{Generalized Loschmidt echo}\label{app:generalizedL2}
In this appendix, we prove Theorem \ref{th:taylorL}. To this end, we first establish \eqref{eq:cld}. We again start from the formal cluster expansion of the Loschmidt echo:
\begin{equation}   
    L(\{t_l\}) = 1 + \sum_{m \geq 1}  \sum_{\vec{W} \in \mathcal{C}^K_m} \frac{\lambda^\vec{W}}{\vec{W}!} \mathcal{D}_\vec{W} L(\{t_l\}).
\end{equation}
The cluster derivative takes the more complicated form
\begin{equation}
    \label{eq:dwl}
    \mathcal{D}_\vec{W} L(\{t_l\}) = \left[ \prod_{l=1}^K \frac{(-i t_l)^{m_l}}{m_l!} \right] \tr \left[ \left( \sum_{\sigma \in S_{m_1}} h_{W_{1, \sigma(1)}}^{(1)} \cdots h_{W_{1, \sigma(m_1)}}^{(1)} \right) \cdots \left( \sum_{\sigma \in S_{m_K}} h_{W_{K, \sigma(1)}}^{(K)} \cdots h_{W_{K, \sigma(m_K)}}^{(K)} \right) \rho \right].
\end{equation}
Here, $m_l = |\vec{W}_l|$ and $\vec{W}_l$ are the parts of $\vec{W}$ associated with the Hamiltonian $H^{(l)}$. Despite these complications, one can readily check that we still have
\begin{equation}
    \frac{\lambda^\vec{W}}{\vec{W}!} \mathcal{D}_\vec{W} L(\{t_l\}) = \prod_{\vec{V} \in P_{c, \mathrm{max}}(\vec{W})} \frac{\lambda^\vec{V}}{\vec{V}!} \mathcal{D}_\vec{V} L(\{t_l\}).
\end{equation}
Here, the components are disconnected if their subsystems do not overlap, irrespective of the Hamiltonian label. The remaining arguments from Appendix~\ref{app:cldev} carry over and we obtain
\begin{equation}
    \frac{1}{\vec{W}!} \mathcal{D}_\vec{W} \log L(\{t_l\}) = \sum_{P \in \mathcal{P}_c(\vec{W})} \frac{(-1)^{\vert P \vert -1}}{P!}T_{\tilde G_P}(1,0) \prod_{\vec{V}\in P}\frac{1}{ \clus{V}  !} \mathcal{D}_\vec{V} L(\{t_l\}).
\end{equation}
By using the fact that $|\mathcal{D}_\vec{W} L(\{t_l\})| \leq \prod_{l=1}^K |t_l|^{m_l}$ and following the steps in Appendix~\ref{app:cldev_bound}, we arrive at an upper bound for the cluster derivative. The only change is that the relevant interaction graph is $G^K$, which has maximum degree $K(\mathfrak{d}+1) - 1$ given the maximum degree $\mathfrak{d}$ of $G$. For $\vec{W} \in \mathcal{G}_m^K$, this yields \eqref{eq:cld}.

To prove the convergence statement in Theorem~\ref{th:taylorL}, we observe that
\begin{equation}
    \sum_{\vec{W} \in \mathcal{G}_m^K} \prod_{l = 1}^K |t_l|^{m_l} \leq \sum_{\vec{W} \in \mathcal{G}_m} \sum_{\substack{m_1, \ldots, m_K \geq 0\\ m_1 + \cdots m_K = m}} \frac{m!}{m_1! \cdots m_K!} \prod_{l = 1}^K |t_l|^{m_l} = \left( \sum_{l = 1}^K |t_l| \right)^m | \mathcal{G}_m| \leq  |S| \left( e \mathfrak{d} \sum_{l = 1}^K |t_l| \right)^m,
\end{equation}
where we used the bound on $|\mathcal{G}_m|$ from Lemma~\ref{le:clusters}. Combining these results yields
\begin{align}
   \left| \sum_{\vec{W} \in \mathcal{G}_m^K} \frac{\lambda^\vec{W}}{\vec{W}!} \mathcal{D}_\vec{W} \log L(t) \right| \leq \vert S \vert  \left[2 e^2 K \mathfrak{d} (\mathfrak{d}+1) \sum_{l=1}^K |t_l| \right]^{m}.
\end{align}
The convergence of the cluster expansion for this generalized Loschmidt echo follows in an analogous fashion to Theorem~\ref{th:taylor}.


To prove the bound on the computational cost in Theorem~\ref{th:taylorL}, we closely follow the proofs of Proposition~\ref{le:algo} and Theorem~\ref{th:computation}, keeping in mind that the degree of the relevant interaction graph is $K (\mathfrak{d} + 1) - 1$. Throughout, we only keep the dependence on $K$ explicit, while suppressing the dependence on $\mathfrak{d} = \mathcal{O}(1)$.

The computational cost of step (i) of the proof of Proposition~\ref{le:algo} is modified to $\exp(\mathcal{O}(m \log K))$. Step (ii) remains unchanged as the computational cost of evaluating the Tutte polyonomial only depends on the number of vertices. In step (iii), we have to evaluate \eqref{eq:dwl} instead of the simpler symmetrized expectation value. Nevertheless, by rewriting the sums over permutations using \eqref{eq:permutation}, this can still be carried out in time $\exp(\mathcal{O}(n))$. Hence, the cluster derivative of the generalized Loschmidt echo can be computed in time $\exp(\mathcal{O}(m \log K))$. By Lemma~\ref{le:clusters}, the run time of the algorithm to enumerate the clusters is $|S| \times \exp(\mathcal{O}(m \log K))$. We conclude that the truncated cluster expansion of the generalized Loschmidt echo that includes all clusters up to size $M$ can be computed with a total run time $|S| \times \exp(\mathcal{O}(M \log K))$. Choosing $M > \log\frac{\vert S \vert }{\left(1-K \sum_l \vert t_l \vert / t_L^* \right)\varepsilon} /\log \frac{t_L^*}{K\sum_l \vert t_l \vert}$ completes the proof.


\section{Proof of concentration bound} \label{app:concentration}

Let us assume $\tr(\rho e^{it H^{(1)}} H^{(2)} e^{- it H^{(1)}}   )=0$ for simplicity. We write the cluster expansion  of $ \log L(\{ t,\lc,-t \}) \equiv \log \tr(\rho e^{it H^{(1)}} e^{\lc H^{(2)}} e^{- it H^{(1)}})$ as
\begin{equation}
 f_L(\{ t,\lc,-t \})=   \log \tr(\rho e^{it H^{(1)}} e^{\lc H^{(2)}} e^{- it H^{(1)}}) = \sum_{m \geq 1}  \sum_{\vec{W}  \in \mathcal{G}^3_m} \frac{\lambda^{\vec{W} }}{\vec{W}!} \mathcal{D}_{\vec{W}} \log L(\{ t,\lc,-t \}),
\end{equation}
where we have three Hamiltonians. However, note that since $f_L(\{ t,0,-t \})=0$, only the clusters with at least two terms $h^{(2)}_X$ from $H^{(2)}$ contribute. This means that the smallest power of $\lc$ is $2$. That is,
\begin{equation}
     f_L(\{ t,\lc,-t \}) = \sum_{m \geq 2} \sum_{\vec{W}  \in \mathcal{G}^3_m: \, m_2 \ge 2} \frac{\lambda^{\vec{W} }}{\vec{W}!} \mathcal{D}_{\vec{W}} \log L(\{ t,\lc,-t \}),
\end{equation}
where $m_2$ is the number of subsystems in $\vec{W}$ associated with $H^{(2)}$. Now, given \eqref{eq:cld} and the bound on the number of clusters in Lemma~\ref{le:clusters}, we have that
\begin{align}
     \left| f_L(\{ t,\lc,-t \}) \right| & \le  \sum_{m \geq 2} \sum_{\vec{W}  \in \mathcal{G}^3_m: \, m_2 \ge 2} \vert t \vert^{m-m_2} \lc^{m_2} (6 e  (\mathfrak{d} + 1))^m
   \\  & \le \vert S \vert \sum_{m \ge 2} (6 e   (\mathfrak{d} + 1))^m  (e\mathfrak{d})^{m} \left ( \sum_{\substack{m_1, m_3 \geq 0, \, m_2 \geq 2\\m_1 + m_2 + m_3 = m}}  \frac{m!}{m_1! m_2! m_3!}  \vert t \vert^{m-m_2} \lc^{m_2} \right )
   \\ & \le \vert S \vert \sum_{m \ge 2} (6 e^2   \mathfrak{d}(\mathfrak{d} + 1))^m  \lc^2 m^2    \left ( \sum_{\substack{m_1, m_2, m_3 \geq 0\\m_1 + m_2 + m_3 = m-2}}  \frac{(m-2)!}{m_1! m_2! m_3!}  \vert t \vert^{m-m_2} \lc^{m_2} \right )
    \\ & = \vert S \vert \sum_{m \ge 2} (6 e^2   \mathfrak{d}(\mathfrak{d} + 1))^m  \lc^2 m^2 (2 \vert t \vert + \lc)^{m-2},
\end{align}
where in the second line we write the sum over combinations with at least two terms from $H^{(2)}$ (that is, the number of ways in which one can arrange $m$ objects in three boxes, with at least two objects in one of them). In the third line, we bound this sum by $m^2$ times the number of ways of arranging $m-2$ elements in three boxes. Finally, we use $\sum_{m \ge 2} m^2 a^{m-2} =\frac{4+a^2-3a}{(1-a)^3} $ and the definition of $t^*_L$ in Theorem \ref{th:taylor} to obtain
\begin{align}
    f_L(\{ t,\lc,-t \}) & \le  \vert S \vert \left( \frac{3 \lc}{t^*_L} \right)^2 \frac{4+(3(2 \vert t \vert + \lc) / t_L^*)^2}{\left[1-3(2 \vert t \vert + \lc) / t_L^* \right]^3} .
\end{align}

Now take $\vert t \vert \le t^*_L /7$, so that $1-\frac{3(2\vert t \vert + \lc)}{t^*_L} \ge \frac{1}{7}- \frac{3 \lc}{t_L^*}$. We obtain 
\begin{equation}
   \log  \tr(\rho(t) e^{\lc H^{(2)}}) \le 5 \vert S \vert \frac{(3 \lc /t^*_L)^2}{\left(1/7 -{3 \lc}/{t^*_L}\right)^3}.
\end{equation}
Choosing $\lc =  \eta \delta (t^*_L)^2 / \vert S \vert $ with $\eta = 1/(90 \times7^3)$ then leads to
\begin{equation}
  e^{-\delta \lc}  \tr \left[ \rho(t) e^{\lc H^{(2)}} \right] \le e^{- {(\delta t^*_L)^2}/{(250)^2 \vert S \vert}},
\end{equation}
under the condition that $ \delta \le \frac{ \vert S \vert}{t_L^*}$, which is trivially satisfied since $\delta \le \vert S \vert$ and $t_L^* <1$. Together with the inequality
\begin{equation}
    \text{Pr} [\vert x - \tr[ \rho (t) H^{(2)}] \vert \ge \delta] \le 2  e^{-\delta \lc}  \tr(\rho(t) e^{\lc H^{(2)}}),
\end{equation}
this proves the result.


\section{Proof of Corollary \ref{co:DPTs}}\label{app:DPTs}

From \eqref{eq:expansion} we have the series expansion
\begin{equation}
   \frac{g_{n}(t)}{n} = \frac{1}{n} \left( \sum_{m=1}^\infty \sum_{\vec{W} \in \mathcal{G}_m} \frac{\lambda^\vec{W}}{\vec{W}!} \mathcal{D}_\vec{W} \log L_{n}(t) \right).
\end{equation}
Theorem \ref{th:taylor} shows that $g_{n}(t)$ is analytic for $t < t^*_L$, and any finite $n$, such that
\begin{equation}
    \left| \sum_{\vec{W} \in \mathcal{G}_m} \frac{\lambda^\vec{W}}{n \times \vec{W}!} \mathcal{D}_\vec{W} \log L_{n}(t) \right| \leq \frac{e \mathfrak{d} \vert S \vert}{(1+e(\mathfrak{d}-1)) n} ( |t|/t^*_L)^{m}.
\end{equation}
For local Hamiltonians, $\frac{ \vert S \vert}{n} \equiv C = \mathcal{O}(1)$, and thus
\begin{equation}
   \sum_{m=m_0}^\infty \left(\frac{t}{t^*_L}\right)^m C e \mathfrak{d} < \infty.
\end{equation}
We apply Tannery's theorem to the sequence $\{ \sum_{\vec{W} \in \mathcal{G}_m} \frac{\lambda^\vec{W}}{n \vec{W}!} \mathcal{D}_\vec{W} \log L_{n}(t) \}$. This implies that we can place the limit inside the sum as
\begin{equation}
    G(t)= \sum_{m=0}^{\infty} \lim_{n \rightarrow \infty}  \sum_{\vec{W} \in \mathcal{G}_m} \frac{\lambda^\vec{W}}{n \vec{W}!} \mathcal{D}_\vec{W} \log L_{n}(t).
\end{equation}
That $G(t)$ is analytic follows from the fact that $\lim_{ n \rightarrow \infty}  \sum_{\vec{W} \in \mathcal{G}_m} \frac{\lambda^\vec{W}}{n \vec{W}!} \mathcal{D}_\vec{W} \log L_{n}(t) \le (\frac{t}{t^*_L})^{m} C e \mathfrak{d}$.

\end{document}